\tikzset{may/.style={dashed,->,auto},
    must/.style={->,auto},
    multimust/.style={-},
    process/.style={inner sep=1.2pt,outer sep=3pt,circle,fill,draw,on grid},
    fork/.style={inner sep=0pt,outer sep=0pt,
		on grid}
}
\newcommand{\edgelabel}[1]{\mathsf{#1}}
\newcommand{\req}{\edgelabel{request}}
\newcommand{\grant}{\edgelabel{grant}}
\newcommand{\idle}{\edgelabel{idle}}
\newcommand{\work}{\edgelabel{work}}
\newcommand{\ff}{\mathbf{ff}}
\newcommand*\ttt{\mathbf{t\!t}}
\newcommand{\by}{/}
\newcommand{\rhml}{$\nu$HML\xspace}
\newcommand{\Pfin}[1]{\ensuremath{2_{ \textup{Fin}}^{#1}}}
\newcommand{\figspaceb}{\vspace*{-3em}}
\newcommand{\figspacee}{\vspace*{-2em}}
\newcommand{\myspace}{\vspace*{-1em}}
\title{\hspace*{-1ex}\mbox{Hennessy-Milner Logic with Greatest Fixed
    Points} as a Complete Behavioural Specification Theory}
\titlerunning{Hennessy-Milner Logic with Greatest Fixed Points}
\author{%
     Nikola Bene{\v s}\inst 1\thanks{The author has been supported by the Czech Science Foundation grant No. GAP202/11/0312.}
    \and
    Beno\^{\i}t Delahaye\inst 2 \and
    Uli Fahrenberg\inst 2 \and
    \\Jan K{\v r}et{\'\i}nsk{\'y}\inst{1,3}\thanks{The author is partially supported by the Czech Science Foundation, project No. P202/10/1469.} 
    \and Axel Legay\inst 2
  }
\institute{ Masaryk University, Brno, Czech Republic \and
  Irisa / INRIA Rennes, France \and
 Technische Universit{\"a}t M{\"u}nchen, Germany}
\authorrunning{Bene{\v s}, Delahaye, Fahrenberg, K{\v
    r}et{\'\i}nsk{\'y}, Legay}
\begin{document} 

\maketitle 
\vspace*{-1.5em}

\begin{abstract}
  There are two fundamentally different approaches to specifying and
  verifying properties of systems. The \emph{logical} approach makes use
  of specifications given as formulae of temporal or modal logics and
  relies on efficient model checking algorithms; the \emph{behavioural}
  approach exploits various equivalence or refinement checking methods,
  provided the specifications are given in the same formalism as
  implementations.

  In this paper we provide translations between the logical formalism of
  Hennessy-Milner logic with greatest fixed points and the behavioural
  formalism of disjunctive modal transition systems.
  We also introduce a new operation of quotient for the above equivalent
  formalisms, which is adjoint to structural composition and allows
  synthesis of missing specifications from partial implementations.
  This is a~substantial generalisation of the quotient for deterministic
  modal transition systems defined in earlier papers.
\end{abstract}

\vspace*{-2.5em}

\section{Introduction}\label{sec:intro}

There are two fundamentally different approaches to specifying and
verifying properties of systems. Firstly, the \emph{logical} approach
makes use of specifications given as formulae of temporal or modal
logics and relies on efficient model checking algorithms. Secondly, the
\emph{behavioural} approach exploits various equivalence or refinement
checking methods, provided the specifications are given in the same
formalism as implementations.

In this paper, we discuss different formalisms and their relationship.
As an example, let us consider labelled transition systems and the
property that \emph{``at all time points after executing $\req$, no
  $\idle$ nor further requests but only $\work$ is allowed until
  $\grant$ is executed''}. The property can be written in
\eg~CTL~\cite{DBLP:conf/lop/ClarkeE81} as
\begin{equation*}
  \text{AG}(\req\Rightarrow \text{AX}(\work \text{ AW } \grant))
\end{equation*}
or as a recursive system of equations in Hennessy-Milner
logic~\cite{DBLP:journals/tcs/Larsen90} as
\begin{align*}
X &= [\grant,\idle,\work]X \wedge [\req] Y\\
Y &= (\langle\work\rangle Y\vee \langle\grant\rangle X)\wedge[\idle,\req]\ff
\end{align*}
where the solution is given by the greatest fixed point.

As formulae of modal logics can be difficult to read, some people prefer
automata-based behavioural specifications to logical ones.  One such
behavioural specification formalism is the one of disjunctive modal
transition systems (DMTS) \cite{DBLP:conf/lics/LarsenX90}.
Fig.~\ref{fig:intrexample} (left) displays a~specification of our
example property as a~DMTS.  Here the dashed arrows indicate that the
transitions \emph{may or may not} be present, while branching of the
solid arrow indicates that at least one of the branches \emph{must} be
present.  An example of a labelled transition system that
\emph{satisfies} our logical specifications and \emph{implements} the
behavioural one is also given in Fig.~\ref{fig:intrexample}.
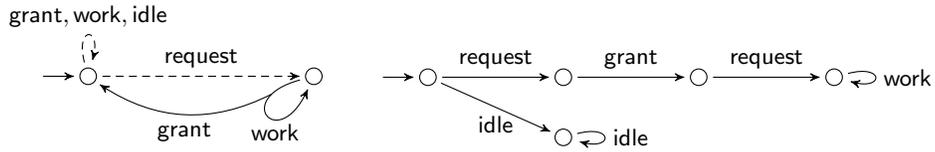
\begin{figure}[t]
  \centering
  \begin{tikzpicture}[x=3cm,y=2cm,font=\footnotesize,->,>=stealth',
    state/.style={shape=circle,draw,font=\scriptsize,inner sep=.5mm,outer
      sep=0.8mm,minimum size=0.22cm,initial text=,initial
      distance=3ex}]
    \node[state,initial] (X) {};
    \node[state] (Y) at (1,0) {};
    \path[densely dashed,->] (X) edge node[above]{$\req$} (Y); 
    \path[densely dashed,->] (X) edge[loop above]
    node[above]{$\grant,\work,\idle$} (X);
    \path (Y) edge[loop,in=-115,out=-165,looseness=15] node[inner sep=0,outer sep=0,minimum size=0,pos=0.2,name=YY] {} node[below] {$\ \ \work$} (Y);
    \path (YY) edge[bend left] node[below] {$\grant$} (X);
  \end{tikzpicture}~~~~~~~
\begin{tikzpicture}[x=1.8cm,y=2cm,font=\footnotesize,->,>=stealth',
    state/.style={shape=circle,draw,font=\scriptsize,inner
      sep=.5mm,outer sep=0.8mm, minimum size=0.22cm,initial
      text=,initial distance=3ex}]
    \node[state, initial left] (X) {};
    \node[state] (Y) at (1,0) {};
    \node[state] (Z) at (2,0) {};
    \node[state] (V) at (3,0) {};
    \node[state] (W) at (1,-.4) {};
    \path[->] (X) edge node[above]{$\req$} (Y); 
    \path[->] (Y) edge node[above]{$\grant$} (Z); 
    \path[->] (Z) edge node[above]{$\req$} (V); 
    \path[->] (V) edge[loop right] node[right]{$\work$} (V); 
    \path (X)  edge node[below]{$\idle$} (W); 
    \path[->] (W) edge[loop right] node[right]{$\idle$} (W); 
  \end{tikzpicture}
  \caption{%
    \label{fig:intrexample}
    DMTS specification corresponding to 
    $\text{AG}(\req\Rightarrow \text{AX}(\work \text{ AW } \grant))$, 
and its implementation
}
\figspacee
\end{figure}

The alternative between logical and behavioural specifications is not
only a~question of preference.  Logical specification formalisms put a
powerful logical language at the disposal of the user, and the logical
approach to model
checking~\cite{DBLP:conf/programm/QueilleS82,DBLP:conf/lop/ClarkeE81}
has seen a lot of success and tool implementations.  Automata-based
specifications~\cite{DBLP:conf/avmfss/Larsen89,DBLP:conf/cav/BrunsG99},
on the other hand, have a focus on \emph{compositional} and
\emph{incremental} design in which logical specifications are somewhat
lacking, with the trade-off of generally being less expressive than
logics.

To be more precise, automata-based specifications are, by design,
compositional in the sense that they support structural
\emph{composition} of specifications and, in most cases, its adjoint,
\emph{quotient}.  This is useful, even necessary, in practical
verification, as it means that (1) it is possible to infer properties of
a system from the specifications of its components, and (2) the problem
of correctness for a system can be decomposed into verification problems
for its components.  We refer to~\cite{DBLP:conf/concur/Larsen90} for a
detailed account on composition and decomposition.

It is thus desirable to be able to translate specifications from the
logical realm into behavioural formalisms, and \textit{vice versa} from
behavioural formalisms to logic-based specifications.  This is, then,
the first contribution of this paper: we show that Hennessy-Milner logic
with greatest fixed points (\rhml) and DMTS (with several initial states)
are equally expressive, and we provide translations forth and back.  For
doing this, we introduce an~auxiliary intermediate formalism NAA
(a nondeterministic extension of acceptance
automata~\cite{DBLP:journals/jacm/Hennessy85,DBLP:journals/entcs/Raclet08})
which is equivalent in expressiveness to both \rhml and DMTS.

We also discuss other desirable features of specification formalisms,
namely structural composition and quotient.  As an example, consider a
specification~$S$ of the final system to be constructed and $T$ either
an already implemented component or a specification of a service to be
used. The task is to construct the most general specification of the
rest of the system to be implemented, in such a~way that when composed
with any implementation of $T$, it conforms with the specification $S$.
This specification is exactly the quotient $S\by T$.

\subsubsection{Contribution}

Firstly, we show that the formalisms of \rhml, NAA and DMTS have the
same expressive power, and provide the respective translations. As 
a~result, the established connection allows for a graphical representation
of \rhml as DMTS. This extends the graphical representability of HML
without fixed points as modal transition
systems~\cite{DBLP:conf/avmfss/Larsen89,DBLP:journals/tcs/BoudolL92}.
In some sense this is optimal, as due to the alternation of least and
greatest fixed points, there seems to be no hope that the whole
$\mu$-calculus could be drawn in a similarly simple way.

Secondly, we show that there are natural operations of conjunction and
disjunction for NAA which mimic the ones of \rhml.  As we work with
multiple initial states, disjunction is readily defined, and conjunction
extends the one for DMTS~\cite{DBLP:conf/atva/BenesCK11}.
%
Thirdly, we introduce structural composition on NAA. For simplicity we assume
CSP-style synchronisation of labels, but the construction can easily
be generalised to other types of label synchronisation. 

Finally, we provide a solution to the open problem of the general
quotient.  We extend the quotient constructions for deterministic modal
transition systems (MTS) and acceptance
automata~\cite{DBLP:journals/entcs/Raclet08} to define the quotient for
the full class of (possibly nondeterministic) NAA. We also provide a
more efficient procedure for (possibly nondeterministic) MTS. These
constructions are the technically most demanding parts of the paper.

With the operations of structural composition and quotient, NAA, and
hence also DMTS and \rhml, are fully compositional behavioural
specification theories and form a \emph{commutative residuated
  lattice}~\cite{journal/ijac/HartRT02,journal/trams/WardD39} up to
equivalence. This makes a rich algebraic theory available for
compositional reasoning about specifications.
%
%
%
Most of the constructions we introduce are implemented in a prototype tool~\cite{BMoTraS}.
%
Due to space constraints, some of the proofs
had to be omitted from the paper.

\subsubsection{Related work}

Hennessy-Milner logic with
recursion~\cite{DBLP:journals/tcs/Larsen90} 
is a popular logical specification formalism which 
has the same expressive power as
$\mu$-calculus~\cite{DBLP:journals/tcs/Kozen83}.  It is obtained from
Hennessy-Milner logic (HML)~\cite{DBLP:journals/jacm/HennessyM85} by
introducing variables and greatest and least fixed points.
Hennessy-Milner logic with \emph{greatest} fixed points (\rhml) is
equivalent to $\nu$-calculus, \ie~$\mu$-calculus with greatest fixed
points only.

DMTS have been proposed as solutions to algebraic process equations
in~\cite{DBLP:conf/lics/LarsenX90} and further investigated also as a
specification
formalism~\cite{DBLP:conf/concur/Larsen90,DBLP:conf/atva/BenesCK11}. The
DMTS formalism is a member of the modal transition systems (MTS) family
and as such has also received attention recently. The MTS formalisms
have proven to be useful in practice. Industrial applications started as
early as~\cite{DBLP:journals/scp/Bruns97} where MTS have been used for
an air-traffic system at Heathrow airport. Besides, MTS classes are
advocated as an appropriate base for interface theories
in~\cite{RB-acsd09} and for product line theories
in~\cite{nyman2008modal}. Further, an MTS based software engineering
methodology for design via merging partial descriptions of behaviour has
been established in~\cite{DBLP:conf/sigsoft/UchitelC04} and methods for
supervisory control of MTS shown in~\cite{Darondeau2010a}.  Tool support
is quite extensive,
\eg~\cite{DBLP:journals/fmsd/BorjessonLS95,DBLP:conf/eclipse/DIppolitoFFU07,DBLP:conf/atva/BauerML11,DBLP:conf/atva/BenesCK11}.

Over the years, many extensions of MTS have been proposed. While MTS can
only specify whether or not a particular transition is required, some
extensions equip MTS with more general abilities to describe what
\emph{combinations} of transitions are possible. These include
DMTS~\cite{DBLP:conf/lics/LarsenX90},
1-MTS~\cite{DBLP:journals/jlp/FecherS08} allowing to express exclusive
disjunction, OTS~\cite{DBLP:conf/memics/BenesK10} capable of expressing
positive Boolean combinations, and Boolean
MTS~\cite{DBLP:conf/atva/BenesKLMS11} covering all Boolean
combinations. The last one is closely related to our NAA, the acceptance
automata
of~\cite{DBLP:journals/jacm/Hennessy85,DBLP:journals/entcs/Raclet08}, as
well as hybrid modal logic~\cite{Prior68,Blackburn00}.
%

Larsen has shown in~\cite{DBLP:conf/avmfss/Larsen89} that any finite MTS
is equivalent to a HML formula (without recursion or fixed points), the
\emph{characteristic formula} of the given MTS.  Conversely, Boudol and
Larsen show in~\cite{DBLP:journals/tcs/BoudolL92} that any consistent
and \emph{prime} HML formula is equivalent to a~MTS.
%
Here we extend these results to \rhml formulae, and show that any such
formula is equivalent to a DMTS, solving a problem left open
in~\cite{DBLP:conf/lics/LarsenX90}.  Hence \rhml supports full
compositionality and decomposition in the sense
of~\cite{DBLP:conf/concur/Larsen90}.  This finishes some of the work
started
in~\cite{DBLP:conf/avmfss/Larsen89,DBLP:journals/tcs/BoudolL92,DBLP:conf/concur/Larsen90}.

Quotients are related to \emph{decomposition} of processes and
properties, an issue which has received considerable attention through
the years.
In~\cite{DBLP:conf/lics/LarsenX90}, a solution to bisimulation $C(X)\sim
P$ for a given process $P$ and context $C$ is provided (as a~DMTS). This
solves the quotienting problem $P\by C$ for the special case where
both $P$ and $C$ are processes.
This is extended in~\cite{DBLP:conf/icalp/LarsenX90} to the setting
where the context $C$ can have several holes and $C(X_1,\ldots,X_n)$
must satisfy a property $Q$ of \rhml.  However, $C$ remains to be a
process context, not a~specification context. Our \emph{specification}
context allows for arbitrary specifications, representing infinite sets
of processes and process equations.
Another extension uses infinite
conjunctions~\cite{DBLP:journals/tcs/FokkinkGW06},
but similarly to the other approaches, generates partial specifications
from an overall specification and a given set of processes.  This is
subsumed by a general quotient.

Quotient operators, or \emph{guarantee} or \emph{multiplicative
  implication} as they are called there, are also well-known from
various logical formalisms.  Indeed, the algebraic properties of our
parallel composition $\|$ and quotient $\by$ resemble closely those of
multiplicative conjunction $\&$ and implication $\multimap$ in
\emph{linear logic}~\cite{DBLP:journals/tcs/Girard87}, and of spatial
conjunction and implication in \emph{spatial
  logic}~\cite{DBLP:journals/iandc/CairesC03} and \emph{separation
  logic}~\cite{DBLP:conf/lics/Reynolds02, DBLP:conf/csl/OHearnRY01}.
For these and other logics, proof systems have been developed which
allow one to reason about expressions containing these operators.

In spatial and separation logic, $\&$ and $\multimap$ (or the operators
corresponding to these linear-logic symbols) are first-class operators
on par with the other logical operators, and their semantics are defined
as certain sets of processes.  In contrast, for NAA and hence, via the
translations, also for \rhml, $\|$ and $\by$ are \emph{derived}
operators, and we provide constructions to reduce any expression which
contains them, to one which does not.  This is important from the
perspective of reuse of components and useful in industrial
applications.




\section{Specification Formalisms}
\label{sec:equiv}

In this section, we define the specification formalisms \rhml, DMTS and NAA and show that they are equivalent.


For the rest of the paper, we fix a~finite alphabet $\Sigma$.  In each
of the formalisms, the semantics of a~specification is a~set of
implementations, in our case always a~set of \emph{labelled transition
  systems} (LTS) over $\Sigma$, \ie~structures $( S, s^0, \omust)$
consisting of a~set $S$ of \emph{states}, an~initial
state $s^0 \in S$, and a~\emph{transition relation}
$\omust\subseteq S\times \Sigma\times S$. 
We assume that the transition relation of LTS is always \emph{image-finite}, 
i.e.~that for every $a \in \Sigma$ and $s \in S$ the set 
$\{ s' \in S \mid s \must{a} s' \}$ is finite.

\subsection{Hennessy-Milner Logic with Greatest Fixed Points}

We recap the syntax and semantics of HML with variables developed
in~\cite{DBLP:journals/tcs/Larsen90}.  A \emph{HML formula} $\phi$
over a set $X$ of variables is given by the abstract syntax
$\phi\Coloneqq \ttt\mid \ff\mid x\mid \phi\land \phi\mid \phi\lor
\phi\mid \langle a\rangle \phi\mid[ a] \phi$, where $x$ ranges over
$X$ and $a$ over $\Sigma$.  The set of such formulae is denoted $\HML(
X)$. Notice that instead of including fixed point operators in the
logic, we choose to use declarations with a greatest fixed point
semantics, as explained below.

A \emph{declaration} is a mapping $\Delta: X\to \HML( X)$.  We shall
give a~greatest fixed point semantics to declarations.  Let $( S, s^0,
\omust)$ be an LTS, then an \emph{assignment} is a mapping
$\sigma: X\to 2^S$.  The set of assignments forms a complete lattice
with $\sigma_1\sqsubseteq \sigma_2$ iff $\sigma_1( x)\subseteq \sigma_2(
x)$ for all $x\in X$ and $\big(\bigsqcup_{ i\in I} \sigma_i\big)( x)=
\bigcup_{ i\in I} \sigma_i( x)$.

The semantics of a formula is a subset of $S$, given relative to an
assignment~$\sigma$, defined as follows: $\sem \ttt \sigma= S$, $\sem
\ff \sigma= \emptyset$, $\sem x \sigma= \sigma( x)$, $\sem{ \phi\land
  \psi} \sigma= \sem \phi\sigma\cap \sem \psi \sigma$, $\sem{ \phi\lor \psi}
\sigma= \sem \phi\sigma\cup \sem \psi \sigma$, $\sem{\langle a\rangle \phi}
\sigma=\{ s\in S\mid \exists s\must a s': s'\in \sem \phi \sigma\}$, and
$\sem{[ a] \phi} \sigma=\{ s\in S\mid \forall s\must a s': s'\in \sem
\phi \sigma\}$.  The semantics of a declaration $\Delta$ is then the
assignment defined by $\sem \Delta= \bigsqcup\{ \sigma: X\to 2^S\mid
\forall x\in X: \sigma( x)\subseteq \sem{ \Delta( x)} \sigma\}$: the
greatest (pre)fixed point of $\Delta$.

An \emph{initialised} HML declaration, or \emph{\rhml formula}, is a
structure $( X, X^0, \Delta)$, with $X^0\subseteq X$ finite sets of
variables and $\Delta: X\to \HML( X)$ a declaration.  We say that an~LTS
$( S, s^0, \omust)$ \emph{implements} (or models) the formula, and write
$S\models \Delta$, if it holds that 
there is $x^0\in X^0$ such that $s^0\in \sem \Delta( x^0)$.  We write $\impl
\Delta$ for the set of implementations (models) of a \rhml formula
$\Delta$.

\subsection{Disjunctive Modal Transition Systems}

A DMTS is essentially a labelled transition system (LTS) with two types
of transitions, \textit{may} transitions which indicate that
implementations are permitted to implement the specified behaviour, and
\textit{must} transitions which proclaim that any implementation is
required to implement the specified behaviour.  Additionally, 
\textit{must} transitions may be \emph{disjunctive}, in the sense that
they can require that \emph{at least one} out of a number of specified
behaviours must be implemented. We now recall the syntax and semantics of DMTS as introduced
in~\cite{DBLP:conf/lics/LarsenX90}.  We modify the syntax slightly to
permit multiple initial states and, in the spirit of later
work~\cite{DBLP:journals/entcs/FecherS05,DBLP:conf/atva/BenesCK11},
ensure that all required behaviour is also allowed:

A \emph{disjunctive modal transition system} (DMTS) over the alphabet 
$\Sigma$ is a~structure 
$( S, S^0, \omay, \omust)$ consisting of a~set of \emph{states} $S$,
a~finite subset $S^0\subseteq S$ of \emph{initial states}, 
a \emph{may}-transition relation $\omay\subseteq
S\times \Sigma\times S$, and a \emph{disjunctive must}-transition
relation $\omust\subseteq S\times 2^{ \Sigma\times S}$.  It is assumed
that for all $( s, N)\in \omust$ and all $( a, t)\in N$, $( s, a, t)\in
\omay$.  We usually write $s\may a t$ instead of $( s, a, t)\in \omay$
and $s\must{} N$ instead of $( s, N)\in \omust$.
We also assume that the may transition relation is image-finite.
Note that the two assumptions imply that 
$\omust\subseteq S\times \Pfin{ \Sigma\times S}$
where \Pfin{X}\ denotes the set of all finite subsets of $X$.

A DMTS $( S, S^0, \omay, \omust)$ is an \emph{implementation} if
$S^0 = \{ s^0\}$ is a~singleton and 
$\omust=\{( s,\{( a, t)\}\mid s\may a t\}$, hence if $N$ is a singleton
for each $s\must{} N$ and there are no superfluous may-transitions.
Thus DMTS implementations are precisely LTS.

We proceed to define the semantics of DMTS.  First, a relation
$R\subseteq S_1\times S_2$ is a \emph{modal refinement} between DMTS $(
S_1, S^0_1, \omay_1, \omust_1)$ and $( S_2, S^0_2, \omay_2, \omust_2)$
if it holds for all $( s_1, s_2)\in R$ that
\begin{itemize}
\item for all $s_1\may a t_1$ there is $s_2\may a t_2$ for some $t_2\in
  S_2$ with $( t_1, t_2)\in R$, and
\item for all $s_2\must{} N_2$ there is $s_1\must{} N_1$ such that for
  each $( a, t_1)\in N_1$ there is $( a, t_2)\in N_2$ with $( t_1,
  t_2)\in R$.
\end{itemize}
Such a modal refinement is \emph{initialised} if it is the case that,
for each $s^0_1\in S^0_1$, there is $s^0_2\in S^0_2$ for which $( s^0_1,
s^0_2)\in R$.  In that case, we say that $S_1$ refines $S_2$ and write
$S_1\mr S_2$.  We write $S_1\mreq S_2$ if $S_1\mr S_2$ and $S_2\mr S_1$.

We say that an LTS $I$ \emph{implements} a DMTS $S$ if $I\mr S$ and write
$\impl S$ for the set of implementations of $S$.  Notice that the
notions of implementation and modal refinement agree, capturing the
essence of DMTS as a \emph{specification theory}: A DMTS may be
\emph{gradually} refined, until an LTS, in which all behaviour is fully
specified, is obtained.

For DMTS $S_1$, $S_2$ we say that $S_1$ \emph{thoroughly} refines $S_2$,
and write $S_1\tr S_2$, if $\impl{ S_1}\subseteq \impl{ S_2}$.  We write
$S_1\treq S_2$ if $S_1\tr S_2$ and $S_2\tr S_1$.  By transitivity,
$S_1\mr S_2$ implies $S_1\tr S_2$.

\begin{example}
Figs.~\ref{fig:invariance} and~\ref{fig:until} show examples of important
basic properties expressed both as \rhml formulae, NAA (see below) and DMTS.  For
DMTS, may transitions are drawn as dashed arrows and disjunctive must transitions as branching arrows.
States with a short incoming arrow are initial (the DMTS in
Fig.~\ref{fig:until} has \emph{two} initial states).

\begin{figure}
  \centering
  \begin{minipage}{.6\linewidth}
    $X= \langle a\rangle \ttt\land[ a] X\land[ b] X$

    \bigskip
	\scriptsize

	$(\{s_0\},\{s_0\},\Tran)$

	$\Tran(s_0) = \big\{ \{(a,s_0)\}, \{(a,s_0),(b,s_0)\}\big\}$

  \end{minipage}
  \qquad \quad 
  \begin{minipage}{.2\linewidth}
    \begin{tikzpicture}[font=\footnotesize,->,>=stealth',scale=1.5,
      state/.style={shape=circle,draw,font=\scriptsize,inner
        sep=.5mm,minimum size=.22cm,outer sep=0.8mm,initial text=,initial
        distance=1.4ex}]
      \node[state,initial] (s) {};
      \path (s) edge[loop above] node[right] {$a$} (s);
      \path (s) edge[loop right,densely dashed] node[right]
      {$b$} (s);
    \end{tikzpicture}
  \end{minipage}
  \caption{%
    \label{fig:invariance}
    \rhml formula, NAA and DMTS for the invariance property
    ``\textit{there is always an `a' transition available}'', with
    $\Sigma=\{ a, b\}$}
\figspacee
\end{figure}
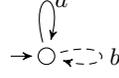

\begin{figure}
  \centering
  \begin{minipage}{.6\linewidth}
    $X= \langle b\rangle \ttt\lor\big( \langle a\rangle \ttt\land[ a]
    X\land[ b] X\land[ c] X\big)$

    \bigskip
	\scriptsize
	$(\{s_0,s_1\},\{s_0\},\Tran)$

	$\Tran(s_0) = \big\{
		\{(b,s_1)\}, 
		\{(b,s_1),(a,s_1)\},
		\{(b,s_1),(c,s_1)\},$

$	\phantom{\Tran(s_0) = \big\{ }		\{(b,s_1),(a,s_1),(c,s_1))\},
		\{(a,s_0)\},
		\{(a,s_0),(c,s_0)\}
	\big\}$

	$\Tran(s_1) = 2^{\{s_1\} \times \{a,b,c\}}$
	
  \end{minipage}
  \qquad \quad
  \begin{minipage}{.3\linewidth}
    \begin{tikzpicture}[font=\footnotesize,->,>=stealth',scale=1,
      state/.style={shape=circle,draw,font=\scriptsize,inner
        sep=.5mm,outer sep=0.8mm,minimum size=.22cm,initial
        text=,initial distance=2ex}]
      \node[state,initial] (s01) at (0,0) {};
      \node[state,initial] (s02) at (0,-2) {};
      \node[state] (s1) at (1.5,-1) {};
      \path (s02) edge[loop,in=25,out=75,looseness=15] 
	node[inner sep=0,outer sep=0,minimum size=0,pos=0.2,name=X] {} 
	node[right] {$a$} (s02);
	\path (X) edge[bend right] node[right,pos=0.4] {$\!a$} (s01);

      \path (s01) edge node[above] {$b$} (s1);
      \path (s01) edge[out=15,in=105,densely dashed] node[right] {$a, c$}
      (s1);
      \path (s02) edge[densely dashed] node[left] {$b, c$}
      (s01);
      \path (s02) edge[densely dashed,loop below] node[right]
      {$b, c$} (s02);
      \path (s1) edge[densely dashed,loop below] node[below]
      {$a, b, c$} (s1);
    \end{tikzpicture}
  \end{minipage}
  \caption{%
    \label{fig:until}
    \rhml formula, NAA and DMTS for the (``weak until'') property
    ``\textit{there is always an `a' transition available, until a
      `b' transition becomes enabled}'', with $\Sigma=\{ a, b, c\}$}
\end{figure}
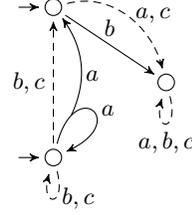
 
\end{example}

\subsubsection{Modal Transition Systems}

\begin{wrapfigure}{r}{0.3\textwidth}
  \begin{tikzpicture}[font=\footnotesize,->,>=stealth',scale=.7,
      state/.style={shape=circle,draw,font=\scriptsize,inner
        sep=.5mm,minimum size=.22cm,outer sep=0.8mm,initial text=,initial
        distance=3ex}]
      \node[state,initial] (s) at (0,0) {};
      \node[state] (t) at (2,.5) {};
      \node[state] (u) at (2,-.5) {};
      \coordinate (t1) at (1,0);
      \path[-] (s) edge (t1);
      \path (t1) edge node[above] {$a$} (t);
      \path (t1) edge node[below] {$b$} (u);
  \end{tikzpicture}
\figspaceb
\end{wrapfigure}
An interesting subclass of DMTS are \emph{modal transition
  systems} (MTS)~\cite{DBLP:conf/avmfss/Larsen89}. A DMTS $( S, S^0,
\omay, \omust)$ is said to be a MTS if (1) $S^0=\{ s^0\}$ is a
singleton, (2) for every $s\must{} N$, the set $N$ is a
singleton. Hence, for each transition, we specify whether it must, may,
or must not be present; no disjunctions can be expressed.  
It is easy to
see that MTS are less expressive than DMTS, \ie~there are DMTS $S$ for
which no MTS $S'$ exists so that $\impl S= \impl{ S'}$.  One example is
provided on the right. 
Here any implementation must have an
$a$ or a $b$ transition from the initial state, but then any MTS which
permits all such implementations will also allow implementations without
any transition from the initial state.

\subsection{NAA}

We now define NAA, the nondeterministic extension to the formalism of
acceptance automata~\cite{DBLP:journals/entcs/Raclet08}.  We shall use
this formalism to bridge the gap between \rhml and
DMTS. A~\emph{nondeterministic acceptance automaton} over the alphabet
$\Sigma$ is a~structure $(S,S^0,\Tran)$ where $S$ and $S^0$ are the
states and initial states as previously, and $\Tran : S \to 2^{ \Pfin{
    \Sigma\times S}}$ assigns admissible transition sets.


%

A~NAA $( S, S^0, \Tran)$ is an \emph{implementation} if 
$S^0 = \{s^0\}$ is a~singleton and $\Tran( s)=\{
M\}$ is a singleton for every $s\in S$; clearly, NAA implementations are
precisely LTS.  We also define the \emph{inconsistent NAA} to be $\bot=(
\emptyset, \emptyset, \emptyset)$ and the \emph{universal NAA} by
$\top=(\{ s\},\{ s\},  2^{ 2^{ \Sigma\times\{ s\}}})$.

A relation $R\subseteq S_1\times S_2$ is a \emph{modal refinement}
between NAA $( S_1, S^0_1, \Tran_1)$, $( S_2, S^0_2, \Tran_2)$ if it holds
for all $( s_1, s_2)\in R$ and all $M_1\in \Tran_1( s_1)$ that there
exists $M_2\in \Tran_2( s_2)$ such that
\begin{itemize}
\item $\forall ( a, t_1)\in M_1: \exists ( a, t_2)\in M_2:( t_1, t_2)\in
  R$,
\item $\forall ( a, t_2)\in M_2: \exists ( a, t_1)\in M_1:( t_1, t_2)\in
  R$.
\end{itemize}
We define and use the notions of initialised modal refinement, $\mr$,
$\mreq$, implementation, $\tr$, and $\treq$ the same way as for DMTS.


\begin{proposition}
  \label{pr:bfs.preorder}
  The class of NAA is preordered by modal refinement $\mr$, with bottom
  element $\bot$ and top element $\top$.
\end{proposition}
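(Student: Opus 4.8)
The plan is to verify the three assertions separately: that modal refinement $\mr$ is reflexive and transitive on NAA, that $\bot \mr S$ for every NAA $S$, and that $S \mr \top$ for every NAA $S$.

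For \emph{reflexivity}, I would take a given NAA $(S,S^0,\Tran)$ and check that the identity relation $R=\{(s,s)\mid s\in S\}$ is an initialised modal refinement of it into itself: for $(s,s)\in R$ and any $M_1\in\Tran(s)$, the choice $M_2:=M_1\in\Tran(s)$ satisfies both matching conditions trivially, and $R$ is initialised since $(s^0,s^0)\in R$ for every $s^0\in S^0$. For \emph{transitivity}, given a modal refinement $R$ of $S_1$ into $S_2$ and a modal refinement $R'$ of $S_2$ into $S_3$, I would show that the relational composition $R'\circ R$ is a modal refinement of $S_1$ into $S_3$. Starting from $(s_1,s_3)\in R'\circ R$ with some witness $s_2$, and $M_1\in\Tran_1(s_1)$, I first apply the defining property of $R$ at $(s_1,s_2)$ to get a matching $M_2\in\Tran_2(s_2)$, then apply the defining property of $R'$ at $(s_2,s_3)$ to $M_2$ to get $M_3\in\Tran_3(s_3)$; chaining the two ``forth'' directions (and, separately, the two ``back'' directions) through $s_2$ yields the required matchings in $R'\circ R$. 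Initialisation of $R'\circ R$ follows by composing the two initialisation witnesses.

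For the \emph{bottom element}, note that $\bot=(\emptyset,\emptyset,\emptyset)$ has no states, so the empty relation $R=\emptyset$ is vacuously a modal refinement from $\bot$ into any $S$, and vacuously initialised because $\bot$ has no initial states; hence $\bot\mr S$. For the \emph{top element}, given $(S,S^0,\Tran)$, let $s$ be the unique state of $\top$ and take $R=S\times\{s\}$. For $(s_1,s)\in R$ and any $M_1\in\Tran(s_1)$, the projection $M_2:=\{(a,s)\mid \exists t_1:(a,t_1)\in M_1\}\subseteq\Sigma\times\{s\}$ is finite (as $\Sigma$ is finite) and therefore belongs to $\Tran_\top(s)= 2^{2^{\Sigma\times\{s\}}}$, which contains every admissible transition set over the single state $s$; both matching conditions between $M_1$ and $M_2$ hold through the total relation $R$, using that every $(a,t_1)\in M_1$ has the witness $(a,s)\in M_2$ and conversely. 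Finally $R$ is initialised since $(s^0,s)\in R$ for every $s^0\in S^0$ and $s$ is the only initial state of $\top$, so $S\mr\top$.

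There is no genuine obstacle here; the one point deserving a moment's care is the top-element case, where one must observe that collapsing a transition set of $S$ onto the singleton state space of $\top$ produces a set that is actually admissible in $\top$. This is exactly what the maximal choice $\Tran_\top(s)=2^{2^{\Sigma\times\{s\}}}$ guarantees, relying on finiteness of $\Sigma$ so that $\Pfin{\Sigma\times\{s\}}=2^{\Sigma\times\{s\}}$.
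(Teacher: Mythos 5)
Your proposal is correct and follows essentially the same route as the paper's proof: the identity relation for reflexivity, the relational composition for transitivity, the empty relation for $\bot\mr S$, and the total relation onto the single state of $\top$ for $S\mr\top$ (where the paper merely asserts the last witness, you spell out the matching sets, which is fine). No gaps.
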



Note that as implementations of all our three formalisms \rhml, DMTS and
NAA are LTS, it makes sense to use thorough refinement
$\tr$ and equivalence $\treq$ \emph{across} formalisms, so that we
\eg~can write $S\tr \Delta$ for a NAA $S$ and a \rhml formula $\Delta$.

\subsection{Equivalences}

We proceed to show that \rhml, DMTS and NAA are equally expressive:

\begin{theorem}
  \label{th:equivalence}
  For any set $\mathcal S$ of LTS, the following are
  equivalent:
  \begin{enumerate}
  \item There exists a \rhml formula $\Delta$ with $\impl \Delta=
    \mathcal S$.
  \item There exists a finite NAA $S$ with $\impl S= \mathcal S$.
  \item There exists a finite DMTS $S$ with $\impl S= \mathcal S$.
  \end{enumerate}
  Furthermore, the latter two statements are equivalent even if we drop 
the finiteness constraints.
\end{theorem}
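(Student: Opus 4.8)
The plan is to prove the cycle of implications $1 \Rightarrow 2 \Rightarrow 3 \Rightarrow 1$, keeping track of finiteness throughout so that the ``furthermore'' clause comes for free. Each arrow is a syntactic translation between formalisms that preserves the implementation set.

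\textbf{From \rhml to NAA ($1 \Rightarrow 2$).} Given $(X, X^0, \Delta)$, first I would put every $\Delta(x)$ into a normal form: push negations away (there are none here, so this is automatic), distribute $\land$ over $\lor$ to write $\Delta(x)$ as a finite disjunction of conjunctions, and in each conjunct group the modal subformulae, rewriting $\langle a\rangle\phi$ and $[a]\phi$ clauses so that each conjunct describes, for each $a\in\Sigma$, a finite set of ``required $a$-successors'' (from diamonds) together with a constraint that all $a$-successors lie in a specified set (from boxes). Each such conjunct then determines a collection of admissible transition sets $M\in\Pfin{\Sigma\times X}$: $M$ is admissible iff it contains at least one $(a,y)$ for each diamond obligation $\langle a\rangle y$ and every $(a,z)\in M$ has $z$ in the box-constraint set. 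Taking the union of these collections over all disjuncts gives $\Tran(x)$. The state set is $X$ and the initial states $X^0$; finiteness of $X$, $X^0$ and $\Sigma$ gives a finite NAA. One then checks by coinduction that modal refinement of an LTS into this NAA coincides with membership in $\sem{\Delta}(x^0)$ for some $x^0\in X^0$ — the greatest-fixed-point semantics of $\Delta$ matches the coinductive (greatest) nature of modal refinement, which is the conceptual heart of this direction.

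\textbf{From NAA to DMTS ($2 \Rightarrow 3$).} Given $(S, S^0, \Tran)$, I would build a DMTS whose states encode \emph{pairs} $(s, M)$ with $M\in\Tran(s)$ (a standard ``acceptance-set'' expansion). The may-transitions from $(s,M)$ are $(s,M)\may{a}(t,M')$ for every $(a,t)\in M$ and every $M'\in\Tran(t)$; the disjunctive must-transition from $(s,M)$ is the single set $N = \{(a, (t,M')) \mid (a,t)\in M, M'\in\Tran(t), \ldots\}$ — more precisely, for each $(a,t)\in M$ we need a disjunctive must forcing \emph{some} successor realising that $(a,t)$, so $(s,M)\must{}\{(a,(t,M')) \mid M'\in\Tran(t)\}$ for each $(a,t)\in M$. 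Initial states are $\{(s^0,M)\mid s^0\in S^0, M\in\Tran(s^0)\}$. Finiteness is preserved since each $\Tran(s)$ is a subset of $\Pfin{\Sigma\times S}$ and $S$ is finite, so there are finitely many pairs. Again one verifies $\impl{}$ is preserved via a refinement relation linking an implementation state to a pair $(s,M)$ exactly when its outgoing transitions realise $M$.

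\textbf{From DMTS to \rhml ($3 \Rightarrow 1$).} This is essentially the ``characteristic formula'' construction of Larsen, extended with recursion. Given $(S,S^0,\omay,\omust)$, introduce a variable $x_s$ for each $s\in S$ and set
\[
\Delta(x_s) \;=\; \bigwedge_{s\must{}N}\Bigl(\bigvee_{(a,t)\in N}\langle a\rangle x_t\Bigr) \;\land\; \bigwedge_{a\in\Sigma}[a]\Bigl(\bigvee_{s\may{a}t} x_t\Bigr),
\]
with $X^0 = \{x_s \mid s\in S^0\}$ (using the convention that an empty $\bigvee$ is $\ff$ and an empty $\bigwedge$ is $\ttt$). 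The box-conjunct says every $a$-successor of an implementation state must satisfy some $x_t$ with $s\may{a}t$; the disjunctive-must conjunct forces, for each $s\must{}N$, at least one required successor. Here image-finiteness of the may-relation and the observation that $\omust\subseteq S\times\Pfin{\Sigma\times S}$ keep all conjunctions and disjunctions finite. Proving $\impl{\Delta}=\impl{S}$ amounts to showing that, for an LTS $I$, $i\in\sem{\Delta}(x_s)$ iff $i\mr s$ — one direction builds a modal refinement from a satisfying assignment, the other checks the assignment $\sigma(x_s)=\{i \mid i\mr s\}$ is a post-fixed point of $\Delta$, hence below the greatest one; combined with the converse inclusion this gives equality.

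\textbf{Main obstacle.} The routine-but-delicate part is the normalisation in $1\Rightarrow 2$: turning an arbitrary nesting of $\land$, $\lor$, $\langle a\rangle$, $[a]$ over \emph{variables} into a clean disjunction-of-modal-conjuncts form while respecting that the greatest-fixed-point semantics is taken \emph{after} the translation. One must be careful that introducing auxiliary variables for subformulae (to handle nested modalities) does not change the greatest fixed point — this is fine because the new equations are of the form $x_\phi = \phi$ with $\phi$ the subformula, and such conservative extensions preserve the semantics on the original variables. The coinductive equivalence proofs in all three directions are conceptually uniform (greatest fixed point $\leftrightarrow$ largest modal refinement relation) but each requires carefully setting up the witnessing relation; I expect $3\Rightarrow 1$ to be the cleanest and $1\Rightarrow 2$ to require the most bookkeeping.
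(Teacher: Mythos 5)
Your cycle $1\Rightarrow2\Rightarrow3\Rightarrow1$ is a genuinely different decomposition from the paper's (which establishes four translations $\db$, $\bd$, $\bh$, $\hd$ and derives the theorem from Lemmas~\ref{th:bfsvsdmts}, \ref{th:bfsvshml} and~\ref{th:hmltodmts}); your $2\Rightarrow3$ is essentially the paper's $\bd$, and your characteristic formula for $3\Rightarrow1$ is a sound, more compact variant of what the paper gets by composing its translations. The genuine gap is in $1\Rightarrow2$: the purely syntactic normalisation you describe does not determine $\Tran(x)$ correctly. NAA refinement matches transition sets symmetrically, so an implementation's $a$-successor may be matched against a \emph{diamond} target rather than the box constraint. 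Concretely, take $\Delta(x)=\langle a\rangle p\land[a]q$ with $\Delta(p)=\ttt$ and $\Delta(q)=[b]\ff$; this is satisfiable (an $a$-successor without $b$-transitions). If your ``box-constraint set'' admits only the box targets, then no admissible $M$ can contain the required pair $(a,p)$, so $\Tran(x)=\emptyset$ and the NAA has no implementations---too strong. If diamond targets are admitted alongside the box set, then $M=\{(a,p)\}$ is admissible and an LTS whose sole $a$-successor has a $b$-loop refines the NAA (its successor is matched against $p$) although it violates $[a]q$---too weak (already $\langle a\rangle\ttt\land[a]\ff$ shows this failure). What is missing is exactly the semantic side condition of Lemma~\ref{le:hmlnormalstrong}: whenever $a_{ij}=a$, one must have $\impl{x_{ij}}\subseteq\impl{y_{i,a}}$, obtained by conjoining the box formula into a fresh variable for each diamond target. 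This is a semantic strengthening of the normal form (inherited from Boudol--Larsen), not the fixed-point bookkeeping about auxiliary variables that you identify as the main obstacle; with it, your construction becomes correct and is essentially $\db\circ\hd$.

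Separately, the ``furthermore'' clause does not come for free from your cycle. In your decomposition the only route from DMTS to NAA passes through \rhml, and \rhml formulae have finitely many variables and finite bodies---indeed your $3\Rightarrow1$ formula needs finitely many states and finitely many must-sets per state to be well formed---so the cycle says nothing about infinite DMTS versus infinite NAA. You additionally need a direct DMTS-to-NAA translation (the paper's $\db$: keep the state set and let $\Tran(s)$ collect the sets of may-allowed transitions from $s$ that satisfy all disjunctive must constraints of $s$); this is easy, but it must be supplied, and together with your finiteness-independent $2\Rightarrow3$ it yields the infinite case.
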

Note that we could drop the finiteness assumption about the set of variables
of \rhml formulae, while retaining the fact that $\Delta(x)$ is a~finite
HML formula. The result of Theorem~\ref{th:equivalence} could then be extended
with the statement that these possibly infinite \rhml formulae are
equivalent to general DMTS/NAA.

For a DMTS $S=( S, S^0, \omay, \omust)$,
let $\Tran(s) =
\{ M \subseteq \Sigma \times S \mid \exists N: s\must{} N, N \subseteq M;
\forall (a,t) \in M: s \may{a} t \}$
and define the NAA $\db( S)=( S, S^0, \Tran)$.

Conversely, for an NAA $( S, S^0, \Tran)$, define the DMTS $\bd( S)=( T,
T^0, \omay,\linebreak[4] \omust)$ as follows:
\begin{itemize}
\item $T=\{ M\in \Tran( s)\mid s\in S\}$,
 $T^0=\{ M\in \Tran( s^0)\mid s^0\in S^0\}$,
\item $\omust=\{( M,\{( a, M')\mid M'\in \Tran( s')\}\mid( a, s')\in
  M\}$,
\item $\omay=\{( t, a, t')\mid t\in T, \exists( t, N)\in \omust: ( a,
  t')\in N\}$.
\end{itemize}

Note that both $\bd$ and $\db$ preserve finiteness. Both translation 
are exponential in their respective arguments.

\begin{lemma}
  \label{th:bfsvsdmts}
  For every DMTS $S$, $S\treq \db( S)$. For every NAA $S$, $S\treq \bd( S)$.
\end{lemma}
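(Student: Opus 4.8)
The plan is to establish the two statements by showing, in each direction, that the translation preserves the set of implementations up to thorough equivalence. Since modal refinement implies thorough refinement (noted in the excerpt), one clean route is to prove that the translations are in fact \emph{modal} equivalences, i.e.\ $S \mreq \db(S)$ and $S \mreq \bd(S)$; thorough equivalence then follows for free. I expect this to work for the DMTS-to-NAA direction but to be more delicate for the NAA-to-DMTS direction, where the state space changes (from $S$ to the set $T$ of admissible transition sets), so there the natural refinement relation is $R = \{(s,M) \mid M \in \Tran(s)\}$ rather than a bijection.

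For the first statement, $S \treq \db(S)$: here $\db(S)$ has the \emph{same} state space $S$, so I would take the identity relation and verify it is a modal refinement in both directions. Unfolding the definition of $\Tran$ for a DMTS — namely $\Tran(s) = \{M \subseteq \Sigma\times S \mid \exists N\colon s\must{} N,\ N\subseteq M;\ \forall(a,t)\in M\colon s\may{a} t\}$ — the may-condition of NAA refinement ($\forall(a,t_2)\in M_2$ there is a matching $(a,t_1)\in M_1$) corresponds exactly to the may-transition condition of DMTS refinement, and the must-condition ($\forall(a,t_1)\in M_1$ there is a matching $(a,t_2)\in M_2$, together with the existence of a witnessing $M_2 \supseteq$ some $N_2$) corresponds to the disjunctive-must condition. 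The one point that needs care is the quantifier structure: DMTS refinement quantifies over must-hyperedges $N_2$ and matches each $(a,t_2)\in N_1$, whereas NAA refinement quantifies over full admissible sets $M_1$ and matches every element both ways; I would check that the "downward closure up to may-transitions" built into $\Tran(s)$ is precisely what reconciles the two. This is essentially a bookkeeping argument.

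For the second statement, $S \treq \bd(S)$, the main obstacle is that $\bd(S)$ lives on the state space $T = \{M \in \Tran(s) \mid s \in S\}$, so states of $\bd(S)$ are admissible transition sets. I would use the relation $R \subseteq S \times T$ given by $R = \{(s, M) \mid M \in \Tran(s)\}$ for one direction and its converse for the other, and check the modal-refinement clauses against the definitions of $\omust$ and $\omay$ in $\bd(S)$: a must-hyperedge out of $M$ has the form $\{(a, M') \mid M' \in \Tran(s')\}$ for each $(a,s') \in M$, and the may-transitions are exactly those appearing in some such hyperedge. The verification that $R$ is an initialised modal refinement $S \mr \bd(S)$ amounts to: given $s$ and $M_1 \in \Tran(s)$, pick the state $M_1 \in T$ itself and observe its outgoing structure matches $M_1$ elementwise; for the converse $\bd(S) \mr S$, given a state $M \in \Tran(s)$ and the admissible set $M$ (the only relevant one, since in $\bd(S)$ there is effectively one "canonical" admissible set per state determined by its must/may structure), reuse the same correspondence. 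A subtlety worth flagging: one must confirm that in $\bd(S)$ the admissible transition sets recovered at state $M$ via the $\db$-style reading coincide with (refinements of) the single set $M$, so that $\db(\bd(S))$ and $S$ line up; this closes the loop and is where I would spend the most effort. Once both modal refinements are in hand, $\treq$ follows by transitivity of thorough refinement, completing the proof. \qed
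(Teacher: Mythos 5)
Your treatment of the first claim is essentially the paper's: $\db(S)$ has the same states as $S$ and $\Tran(s)$ simply enumerates the admissible transition patterns, so one checks that a relation $R\subseteq I\times S$ witnesses $I\mr S$ (as DMTS) if and only if it witnesses $I\mr \db(S)$ (as NAA). Apart from the quibble that ``modal refinement'' between a DMTS and an NAA is not a defined notion (the paper stresses that only thorough refinement makes sense across formalisms), this half is indeed routine bookkeeping.

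The second claim is where the proposal breaks. You plan to establish a specification-level modal equivalence $S\mreq\bd(S)$ via $R=\{(s,M)\mid M\in\Tran(s)\}$ and its converse, and then deduce $\treq$. But $S\mr\bd(S)$ is false in general, for exactly the reason hidden in your parenthetical about a ``canonical'' admissible set: $\bd$ splits an NAA state $s$ into one DMTS state per element of $\Tran(s)$, and any refinement containing the pair $(s,M)$ must match \emph{every} $M_1\in\Tran(s)$, not just $M$, against behaviour admissible at the single state $M$; the other admissible sets of $s$ need not be compatible with the must-hyperedges installed at $M$. Concretely, take $S=(\{s\},\{s\},\Tran)$ with $\Tran(s)=\{\{(a,s)\},\{(b,s)\}\}$. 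Then $\bd(S)$ has states $M_A=\{(a,s)\}$ and $M_B=\{(b,s)\}$ (both initial), and $M_A$ carries the must-hyperedge $\{(a,M_A),(a,M_B)\}$ (symmetrically for $M_B$ with $b$). Every behaviour admissible at $M_A$ thus contains an $a$-successor, so the set $M_1=\{(b,s)\}\in\Tran(s)$ can never be matched there, and symmetrically at $M_B$; hence no modal refinement can contain $(s,M_A)$ or $(s,M_B)$, and $s$ cannot be related to any initial state of $\bd(S)$, even though $S\treq\bd(S)$ holds. The converse relation does give $\bd(S)\mr S$, but that yields only one inclusion of implementation sets. The inclusion $\impl S\subseteq\impl{\bd(S)}$ has to be proved at the level of implementations, which is what the paper does: given an LTS $I\mr S$ with witness $R$, it exploits the fact that each implementation state $i$ has a \emph{single} transition set to choose, for every $(i,s)\in R$, a witnessing $M^{i,s}\in\Tran(s)$, and then relates $i$ to the DMTS state $M^{i,s}$. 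This per-implementation-state choice of witness is precisely what your specification-level relation cannot supply, and it is the actual content of the paper's proof.
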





For a~set of pairs of actions and states $M$ we use 
$M_a$ to denote the set $\{ s \mid (a,s)\in M\}$.
Let $( S, S^0, \Tran)$ be a~finite NAA and let $s \in S$,
we then define
\[\Delta_\Tran(s) = \biglor_{ M\in \Tran( s)}\big( \bigland_{( a, t)\in M}
\langle a\rangle t\land
 \bigland_{ a\in \Sigma}
[a]
\big(\biglor_{ u \in M_a} u  \big)
\big)\]

We then define the \rhml formula $\bh( S)=( S, S^0,
\Delta_\Tran)$. Notice that variables in $\bh(S)$ are states of $S$.
%

\begin{lemma}
  \label{th:bfsvshml}
  For all NAA $S$, $S\treq \bh( S)$.
\end{lemma}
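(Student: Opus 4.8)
The plan is to prove $S \treq \bh(S)$ by establishing both inclusions $\impl S \subseteq \impl{\bh(S)}$ and $\impl{\bh(S)} \subseteq \impl S$ directly, exhibiting for an arbitrary LTS $I = (I, i^0, \omust)$ a correspondence between modal refinements $I \mr S$ and witnessing assignments for the declaration $\Delta_\Tran$. The key observation linking the two formalisms is that the conjunct $\bigland_{(a,t)\in M} \langle a\rangle t$ forces, for each obligation $(a,t) \in M$, an outgoing $a$-transition of $I$ into a state satisfying $t$, while $\bigland_{a\in\Sigma}[a]\bigl(\biglor_{u\in M_a} u\bigr)$ forces every $a$-successor of $I$ to satisfy one of the states listed in $M$ under label $a$; together these say exactly that the multiset of labelled successors of $i^0$ "matches" $M$ in the sense of the two bullet points in the definition of modal refinement for NAA. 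So the core of the argument is simply to unfold the HML semantics of $\Delta_\Tran(s)$ against a fixed LTS and read off that it coincides with the refinement condition at $(i^0, s)$.

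Concretely, for the forward inclusion I would take a modal refinement $R \subseteq I \times S$ with $(i^0, s^0) \in R$ for some $s^0 \in S^0$, and define the assignment $\sigma : S \to 2^I$ by $\sigma(s) = \{ i \in I \mid (i,s) \in R \}$. I would check that $\sigma(s) \subseteq \sem{\Delta_\Tran(s)}\sigma$ for every $s$: given $i \in \sigma(s)$, pick $M \in \Tran(s)$ witnessing the refinement clause for the (unique) transition set of $i$; then each $(a,t) \in M$ yields an $a$-successor of $i$ in $\sigma(t)$, and each $a$-successor $i'$ of $i$ is matched by some $(a,u) \in M$, i.e. $i' \in \sigma(u) \subseteq \sem{\biglor_{u\in M_a} u}\sigma$. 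Hence $i$ lies in the $M$-th disjunct, so $i \in \sem{\Delta_\Tran(s)}\sigma$. Since $\sem{\Delta_\Tran} = \bigsqcup\{\sigma \mid \forall s : \sigma(s) \subseteq \sem{\Delta_\Tran(s)}\sigma\}$ is the greatest post-fixed point, $\sigma \sqsubseteq \sem{\Delta_\Tran}$, and $i^0 \in \sigma(s^0) \subseteq \sem{\Delta_\Tran}(s^0)$, so $I \models \Delta_\Tran$.

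For the converse, starting from $I \models \bh(S)$, I would set $\sigma = \sem{\Delta_\Tran}$ and define $R = \{ (i,s) \mid i \in \sigma(s) \}$; by the fixed-point property $\sigma(s) \subseteq \sem{\Delta_\Tran(s)}\sigma$, so membership $(i,s) \in R$ gives, via the disjunction, a choice of $M \in \Tran(s)$ for which the two conjuncts hold — and these translate back into precisely the two bullet points required of a modal refinement for NAA. The initial condition $i^0 \in \sem{\Delta_\Tran}(s^0)$ for some $s^0 \in S^0$ gives $(i^0, s^0) \in R$, so $R$ is an initialised modal refinement and $I \mr S$, i.e. $I \in \impl S$.

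The main obstacle is bookkeeping rather than conceptual: one must be careful about the quantifier structure of the HML box and diamond modalities against an image-finite LTS, and about the fact that $\Delta_\Tran(s)$ ranges its inner box over \emph{all} of $\Sigma$ (so for labels $a$ with $M_a = \emptyset$ the conjunct $[a]\ff$ correctly forbids $a$-transitions), matching the "$\forall (a,t_2) \in M_2$" clause which is vacuous exactly when no such pairs exist. A secondary subtlety is that the variables of $\bh(S)$ are literally the states of $S$, so the greatest-fixed-point semantics of $\Delta_\Tran$ over $I$ is being compared with the coinductive (greatest) nature of modal refinement; the two "greatest" characterisations align because post-fixed points of $\Delta_\Tran$ are exactly modal refinement relations into $S$, which is the content of the two inclusions above. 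I would also remark that by Lemma~\ref{th:bfsvsdmts} and Theorem-style transitivity one could alternatively route through DMTS, but the direct argument is cleaner.
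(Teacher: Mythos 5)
Your proposal is correct and follows essentially the same route as the paper's proof: in the forward direction you turn a modal refinement relation into a post-fixed-point assignment for $\Delta_\Tran$ (the paper uses the canonical $\sigma(t)=\{j\mid j\mr t\}$, an immaterial difference), and in the backward direction you read a modal refinement relation off the greatest-fixed-point semantics, exactly as in the paper.
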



Our translation from \rhml to DMTS is based on the constructions
in~\cite{DBLP:journals/tcs/BoudolL92}.  First, we need a variant of a
disjunctive normal form for HML formulae:

\begin{lemma}
  \label{le:hmlnormalstrong}
  For any \rhml formula $( X_1, X^0_1, \Delta_1)$, there exists another
  formula $( X_2, X^0_2, \Delta_2)$ with $\impl{ \Delta_1}= \impl{
    \Delta_2}$ and such that any formula $\Delta_2( x)$, for $x\in X_2$,
  is $\ttt$ or of the form $\Delta_2( x)=\biglor_{ i\in I}\big(
  \bigland_{ j\in J_i} \langle a_{ ij}\rangle x_{ ij}\land \bigland_{
    a\in \Sigma}[ a] y_{ i, a}\big)$ for finite (possibly empty) index
  sets $I$ and $J_i$, $i\in I$, and all $x_{ ij}, y_{ i, a}\in X_2$.
  Additionally we can assume that for all $i\in I$, $j\in J_i$, $a\in
  \Sigma$, $a_{ ij}= a$ implies $\impl{ x_{ ij}}\subseteq \impl{ y_{ i,
      a}}$.
\end{lemma}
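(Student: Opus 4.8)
The plan is to establish the normal form in two stages: first perform a syntactic transformation that flattens each $\Delta_1(x)$ into the required disjunctive shape while introducing fresh variables, and then enforce the additional implication-containment side-condition by a second round of rewriting.

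\textbf{Stage 1: reaching the shape.} First I would put every $\Delta_1(x)$ into a ``modal disjunctive normal form'' by repeatedly applying the following rewrites, each of which preserves the greatest-fixed-point semantics and hence $\impl{\cdot}$: distribute $\land$ over $\lor$; replace $\ttt$ and $\ff$ occurring as conjuncts/disjuncts by absorption; replace a ``naked'' variable $x$ appearing as a conjunct by $\langle\rangle$-free expansion, i.e. introduce it via a fresh variable with declaration copying $\Delta_1(x)$ (so every variable ends up guarded inside a modality — note the original declaration may not be guarded, but since we only take greatest fixed points, we can use the standard unfolding to make formulas guarded without changing semantics, as HML with only $\nu$ and finite formulas always has this property for image-finite LTS). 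After distribution, each formula is a finite disjunction of conjunctions of literals of the form $\langle a\rangle\psi$ and $[a]\psi$. For each such $\psi$ which is not already a single variable, I introduce a fresh variable $z$ with $\Delta_2(z) = \psi$ and replace $\psi$ by $z$; iterating, every subformula under a modality becomes a variable. Finally, for the box-part: within one conjunct $i$, collect all conjuncts $[a]w$ with the same label $a$ into a single $[a](\text{conjunction of the } w\text{'s})$; introduce a fresh variable $y_{i,a}$ for that conjunction (with $\Delta_2(y_{i,a})$ its conjunction, or $\ttt$ if $a$ does not occur, noting $[a]\ttt \equiv \ttt$ is the "no constraint" case which we may always add). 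This yields exactly the stated form $\biglor_{i\in I}(\bigland_{j\in J_i}\langle a_{ij}\rangle x_{ij} \land \bigland_{a\in\Sigma}[a]y_{i,a})$, or $\ttt$ when $I$ is such that some conjunct is empty and forces the whole thing to be trivially true — more carefully, $\ttt$ arises as the normal form of $\Delta_2(x)$ precisely for the fresh "top" variables.

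\textbf{Stage 2: the side-condition $\impl{x_{ij}}\subseteq\impl{y_{i,a}}$ when $a_{ij}=a$.} The point here is that a disjunct $\bigland_j\langle a_{ij}\rangle x_{ij}\land\bigland_a[a]y_{i,a}$ is semantically unchanged if we replace each $x_{ij}$ by the variable denoting $x_{ij}\land y_{i,a_{ij}}$: any implementation satisfying the disjunct has an $a_{ij}$-successor in $\impl{x_{ij}}$, but that successor is also an $a_{ij}$-successor, so it lies in $\impl{y_{i,a_{ij}}}$ by the box constraint, hence in $\impl{x_{ij}}\cap\impl{y_{i,a_{ij}}}$; conversely the conjoined version is stronger so the other inclusion is immediate. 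So I would introduce, for each triple $(i,j)$ with $a=a_{ij}$, a fresh variable $x'_{ij}$ with $\Delta_2(x'_{ij}) = \Delta_2(x_{ij})\land\Delta_2(y_{i,a})$ — and re-run Stage 1 on these new conjunctions to bring them back into normal form — then $\impl{x'_{ij}} = \impl{x_{ij}}\cap\impl{y_{i,a}}\subseteq\impl{y_{i,a}}$ as required. One must check this rewriting process terminates: each conjunction-of-two-normal-forms, after distributing, is again a finite disjunction of conjunctions over the same finite pool of modal subformulas (products of the original $y$'s and $x$'s), so only finitely many distinct variables are ever generated, up to logical equivalence; I would make this precise by bounding the reachable conjunctions by subsets of the original variable set, giving a finite $X_2$.

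\textbf{Main obstacle.} The delicate point is Stage 2 termination combined with correctness of the guardedness/unfolding step: conjoining $\Delta_2(x_{ij})$ with $\Delta_2(y_{i,a})$ and renormalising can in principle spawn new $\langle a\rangle$-conjuncts needing further conjoining, so I must argue the construction reaches a fixpoint. The clean way is to work not with syntactic formulas but with the finite semilattice of "sets of admissible transition-sets" (exactly the NAA/DMTS picture the paper is building toward): a normal-form declaration over variable set $X_2$ is essentially a DMTS on $X_2$, the conjoining operation is the DMTS conjunction of Fig.~\ref{fig:invariance}'s style, which stays within DMTS over a bounded state space (subsets of a fixed finite set), so the process stabilises. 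I would therefore phrase Stage 2 as: close the variable set under the binary conjunction operation on normal forms, which is finite because it lives inside $2^{\Pfin{\Sigma\times X_2}}$-valued maps on a finite index set, and verify the semantic invariant $\impl{x\land y}=\impl{x}\cap\impl{y}$ is preserved at each step by the fixed-point characterisation. The rest (distribution, grouping boxes, the containment identity) is routine equational reasoning about $\nu$HML.
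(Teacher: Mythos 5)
Your proposal is essentially correct, but it follows a genuinely different route from the paper. The paper's proof is a two-line reduction: it invokes the result of Boudol and Larsen that every HML formula is equivalent to one in \emph{strong normal form} --- which already has the disjunctive shape \emph{and} the containment condition $\impl{\phi_{ij}}\subseteq\impl{\psi_{i,a}}$ built in --- and then merely replaces the subformulas $\phi_{ij},\psi_{i,a}$ by fresh variables with corresponding declarations. You instead reprove the normal form from first principles: guarding/unfolding to eliminate naked variables (legitimate for $\nu$-only declarations, though note your parenthetical mechanism of ``a fresh variable copying $\Delta_1(x)$'' does not by itself guard anything --- it is the unfolding you mention immediately afterwards, together with the gfp simplifications for unguarded cycles, that does the work), distribution into modal DNF with fresh variables for modal targets, and then restoring the side condition by strengthening each diamond target to its conjunction with the matching box variable, closing the variable set under such conjunctions and bounding it by subsets of the stage-one pool. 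The semantic core of your stage 2 --- that $\langle a\rangle x\land[a]y$ is equivalent to $\langle a\rangle(x\land y)\land[a]y$, plus the invariant $\impl{x\land y}=\impl{x}\cap\impl{y}$ for conjunction variables under the greatest-fixed-point semantics, established by a simultaneous coinduction over the subset-indexed system --- is sound and is in fact the same mechanism that underlies the paper's later conjunction results (Lemma~\ref{th:conj=glb}), so there is no circularity. What each approach buys: the paper's proof is short but opaque, deferring both the shape and the ``strong'' condition to the cited HML result and silently lifting it to open formulas under a recursive declaration; yours is self-contained, makes the guardedness step, the exponential blow-up and the termination of the conjunction closure explicit, at the cost of having to carry out the coinductive conjunction invariant that the paper gets for free from the citation.
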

 
Let now $( X, X^0, \Delta)$ be a \rhml formula in the form introduced
above, then we define a DMTS $\hd( \Delta)=( S, S^0, \omay, \omust)$ as
follows:
\begin{itemize}
\item $S=\{( x, k)\mid x\in X, \Delta( x)= \biglor_{ i\in I} \phi_i,
  k\in I\ne \emptyset\}\cup\{ \bot, \top\}$,
\item $S^0=\{( x^0, k)\mid x^0\in
  X^0\}$.
\item For each $( x, k)\in S$ with $\Delta( x)=\biglor_{ i\in I}(
  \bigland_{ j\in J_i} \langle a_{ ij}\rangle x_{ ij}\land \bigland_{
    a\in \Sigma}[ a] y_{ i, a})$ and $I\ne \emptyset$,
  \begin{itemize}
  \item for each $j\in J_i$, let $\Must_j( x, k)=\{( a_{ ij},( x_{ ij},
    i'))\in \Sigma\times S\}$,
  \item for each $a\in \Sigma$, let $\May_a( x, k)=\{( x', i')\in S\mid
    \impl{ x'}\subseteq \impl{ y_{ i, a}}\}$.
  \end{itemize}
\item Let $\omay=\{( s, a, s')\mid s\in S, a\in \Sigma, s'\in \May_a(
  s)\}\cup\{( \top, a, \top)\mid a\in \Sigma\}$ and $\omust=\{( s,
  \Must_j( s))\mid s=( x, i)\in S, j\in J_i\}\cup\{( \bot,
  \emptyset)\}$.
\end{itemize}

\begin{lemma}
  \label{th:hmltodmts}
  For all \rhml formulae $\Delta$, $\Delta\treq \hd( \Delta)$.
\end{lemma}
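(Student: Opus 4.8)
The plan is to prove $\impl{\Delta}=\impl{\hd(\Delta)}$ by mutual inclusion. Since $\hd$ is defined only on formulae in the form of Lemma~\ref{le:hmlnormalstrong}, I take $(X,X^0,\Delta)$ already in that form (for a general formula, apply that lemma first; it preserves $\impl{\cdot}$), and I perform one harmless clean-up: every variable $x$ with $\Delta(x)=\ttt$ is replaced by a fresh variable $x_\top$ with $\Delta(x_\top)=\bigland_{a\in\Sigma}[a]x_\top$ — a normal-form body with empty diamond part, whose greatest-fixed-point value is the whole state space — and occurrences of $\ttt$ as a diamond- or box-target are rewritten to $x_\top$; this changes neither $\impl{\cdot}$ nor the normal form, and it eliminates the degenerate states the displayed construction treats only implicitly. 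Fix an LTS $I=(S,s^0,\omust)$, write $\sigma=\sem{\Delta}$ for the greatest-fixed-point assignment on $I$ (so $\sigma(x)=\sem{\Delta(x)}\sigma$), and $\Delta(x)=\biglor_i\phi^x_i$ with $\phi^x_i=\bigland_j\langle a^x_{ij}\rangle x^x_{ij}\land\bigland_{a\in\Sigma}[a]y^x_{i,a}$. Since the semantics of a declaration does not depend on the initial state, $\impl{x'}\subseteq\impl{x}$ is equivalent to $\sem{\Delta}(x')\subseteq\sem{\Delta}(x)$ on every LTS; write $x'\preceq x$ for this. It is convenient to target the per-state statement $\impl{(x,i)}=\{(J,p)\mid p\in\sem{\phi^x_i}{\sigma_J}\}$ (with $\impl{\top}$ all LTS and $\impl{\bot}=\emptyset$), from which both inclusions follow by taking unions over~$i$ and over $x^0\in X^0$; note that $\hd(\Delta)$ may contain cycles, so there is no structural induction on states available.

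For $\impl{\Delta}\subseteq\impl{\hd(\Delta)}$ I would show that if $s^0\in\sigma(x^0)$ for some $x^0\in X^0$, then $R=\{(q,(x,i))\mid q\in\sem{\phi^x_i}\sigma\}$ is an initialised modal refinement from $I$ (as a DMTS) to $\hd(\Delta)$. It is initialised because $\sigma$ is a fixed point, so $s^0\in\sigma(x^0)$ splits over the disjuncts of $\Delta(x^0)$. A must-transition $(x,i)\must{}\Must_j(x,i)$ is matched since $q\in\sem{\langle a^x_{ij}\rangle x^x_{ij}}\sigma$ gives $q\must{a^x_{ij}}q'$ with $q'\in\sigma(x^x_{ij})$, hence $q'\in\sem{\phi^{x^x_{ij}}_l}\sigma$ for some $l$, so the singleton must-transition $\{(a^x_{ij},q')\}$ of $I$ matches, with $(a^x_{ij},(x^x_{ij},l))\in\Must_j(x,i)$ and $(q',(x^x_{ij},l))\in R$. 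A transition $q\must a q'$ of $I$ is matched since $q\in\sem{[a]y^x_{i,a}}\sigma$ forces $q'\in\sigma(y^x_{i,a})$, hence $q'\in\sem{\phi^{y^x_{i,a}}_l}\sigma$ for some $l$, and $(y^x_{i,a},l)\in\May_a(x,i)$ because $\impl{y^x_{i,a}}\subseteq\impl{y^x_{i,a}}$. This direction is routine.

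For $\impl{\hd(\Delta)}\subseteq\impl{\Delta}$, assume $I\mr\hd(\Delta)$ via a modal refinement $R$ with $(s^0,(x^0,i^0))\in R$, $x^0\in X^0$. From $R$ I extract $\rho(x)=\{q\mid\exists i:(q,(x,i))\in R\}$ and pass to its $\preceq$-downward closure $\rho^\downarrow(x)=\bigcup_{x'\preceq x}\rho(x')$, which is $\preceq$-monotone and has $s^0\in\rho^\downarrow(x^0)$. Unfolding $R$ at a pair $(q,(x',i))\in R$ yields $q\in\sem{\phi^{x'}_i}{\rho^\downarrow}$: the diamonds of $\phi^{x'}_i$ come from the must-transitions $\Must_j(x',i)$ as before, and each transition $q\must a q'$ is matched by some $(x',i)\may a(x'',l)$ with $\impl{x''}\subseteq\impl{y^{x'}_{i,a}}$, whence $q'\in\rho(x'')\subseteq\rho^\downarrow(y^{x'}_{i,a})$ — this is exactly where the downward closure is needed. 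To replace $x'$ by $x$, I would invoke the structural content behind Lemma~\ref{le:hmlnormalstrong}: from $x'\preceq x$, every disjunct $\phi^{x'}_i$ of $\Delta(x')$ \emph{structurally} refines some disjunct $\phi^x_m$ of $\Delta(x)$ — each diamond of $\phi^x_m$ being matched by a diamond of $\phi^{x'}_i$ with $\preceq$-smaller target, and each box-target of $\phi^{x'}_i$ being $\preceq$-below the corresponding box-target of $\phi^x_m$ — so, using the $\preceq$-monotonicity of $\rho^\downarrow$, $q\in\sem{\phi^{x'}_i}{\rho^\downarrow}$ upgrades to $q\in\sem{\phi^x_m}{\rho^\downarrow}$. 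Hence $\rho^\downarrow(x)\subseteq\sem{\Delta(x)}{\rho^\downarrow}$ for all $x$, so $\rho^\downarrow\sqsubseteq\sigma$; then $(q,(x,i))\in R$ gives $q\in\rho(x)\subseteq\rho^\downarrow(x)$, and one more unfolding gives $q\in\sem{\phi^x_i}{\rho^\downarrow}\subseteq\sem{\phi^x_i}\sigma$. In particular $s^0\in\sigma(x^0)$, i.e.\ $I\models\Delta$.

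The main obstacle is this second inclusion — concretely, the step that upgrades a witness at $x'$ to one at $x$. The difficulty is structural, not calculational: the may-transitions of $\hd(\Delta)$ are defined through the \emph{semantic} inclusion $\impl{\cdot}\subseteq\impl{\cdot}$ rather than syntactically, so a refinement cannot be read back into the fixed-point world verbatim; downward-closing the extracted assignment absorbs this at the level of successors, but the top-level step then requires that $\preceq$ between variables is realised structurally at the level of normal-form disjuncts. Establishing this — essentially that each normal-form disjunct is join-irreducible for $\impl{\cdot}$, and that a semantic inclusion between two disjuncts is witnessed by a structural refinement — is the technically heaviest point, and the precise shape forced by Lemma~\ref{le:hmlnormalstrong} (one box per letter, together with the consistency condition $a^x_{ij}=a\Rightarrow\impl{x^x_{ij}}\subseteq\impl{y^x_{i,a}}$) is exactly what makes it available, in the spirit of the primality arguments of~\cite{DBLP:journals/tcs/BoudolL92}. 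I would also track the degenerate $\ttt$, $\ff$, $\top$, $\bot$ cases throughout; the preprocessing above removes the ones that would otherwise break the construction outright.
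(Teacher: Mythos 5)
Your first inclusion, $\impl{\Delta}\subseteq\impl{\hd(\Delta)}$, is fine, and the preprocessing of $\ttt$-variables is harmless. The gap is in the second inclusion, at exactly the step you flag as the crux: the claim that $x'\preceq x$ forces every normal-form disjunct of $\Delta(x')$ to \emph{structurally} refine some disjunct of $\Delta(x)$ is not just unproven, it is false. Normal-form disjuncts are not prime: a disjunct can entail a disjunction of disjuncts without entailing (let alone structurally refining) any single one of them. Concretely, take $\Sigma=\{a\}$, let $w$ be a ``true'' variable ($\Delta(w)=[a]w$, so $\impl{w}$ is all LTS) and $z$ a ``false'' variable ($\Delta(z)=\ff$, the empty disjunction), and put $\Delta(x')=[a]w$ and $\Delta(x)=[a]z\lor\bigl(\langle a\rangle w\land[a]w\bigr)$. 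This satisfies all conditions of Lemma~\ref{le:hmlnormalstrong}, and $\impl{x'}=\impl{x}$ is the class of all LTS, so $x'\preceq x$; yet the unique disjunct $[a]w$ of $\Delta(x')$ refines neither disjunct of $\Delta(x)$, semantically or structurally (it has no $a$-diamond, and $\impl{w}\not\subseteq\impl{z}$). So the upgrade from $q\in\sem{\phi^{x'}_i}{\rho^\downarrow}$ to $q\in\sem{\Delta(x)}{\rho^\downarrow}$ cannot be performed uniformly from the pair $(x',i)$: which disjunct of $\Delta(x)$ works depends on $q$ itself (here, on whether $q$ has an $a$-transition). The appeal to Boudol--Larsen does not help, since their result concerns consistent \emph{and prime} formulae, and a pure-box disjunct such as $[a]w$ is consistent but not prime; moreover your fallback to consistency is circular, because consistency of the disjuncts hit by $R$ is part of what the lemma asserts. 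Hence the post-fixed-point property of $\rho^\downarrow$ is not established, and the second inclusion remains open.

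This is also where your route diverges from the paper's, which avoids the hard direction of the semantic inclusion altogether: there one observes that, by construction, the characteristic formula $\chi(x,k)$ of the DMTS state $(x,k)$ is semantically the $k$-th disjunct of $\Delta(x)$ --- in $\chi(x,k)$ the box body for $a$ is literally the disjunction over the states in $\May_a(x,k)$, and since that set contains all disjuncts of $y_{k,a}$ and only states below $y_{k,a}$, only the \emph{easy} direction of the inclusion defining $\May_a$ is used --- so that $\Delta(x)=\biglor_k\chi(x,k)$ and the lemma reduces to the characteristic-formula theorem. If you want to keep your post-fixed-point skeleton, verify the post-fixed-point property of the extracted assignment against the characteristic declaration of $\hd(\Delta)$ (whose box bodies are the may-disjunctions, so no upgrade step is needed) and only then compare it with $\Delta$; as written, the argument does not go through.
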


Further, we remark that the overall translation from DMTS to \rhml is quadratic and in the other direction inevitably exponential.

\begin{example}\label{ex:rhml-to-dmts}
%
  Consider the \rhml formula 
	$X = (\langle a\rangle 
		(\langle b\rangle X \land [a] \ff ) 
	\land [b] \ff ) \lor [a]\ff$.
  Changing the formula into the normal form of Lemma~\ref{le:hmlnormalstrong}
  introduces a~new variable $Y$ as illustrated below; 
  $X$ remains the sole initial variable. The translation \hd\ then
  gives a~DMTS with two initial states 
 (the inconsistent state~$\bot$ 
  and redundant may transitions such as $x_1 \may{a} x_2$, 
  $x_2 \may{b} x_1$, 
etc.~have been omitted):


\centering
\begin{minipage}[b]{0.4\textwidth}
\begin{align*}
X &= (\overbrace{\langle a\rangle Y \land [a]\ttt \land [b]\ff}^{x_1}) \\
&\quad \lor(\underbrace{[a]\ff \land [b]\ttt}_{x_2}) \\
Y &= \underbrace{\langle b\rangle X \land [a]\ff \land [b]\ttt}_{y_1}
\end{align*}
\end{minipage}
    \begin{tikzpicture}[baseline=(bas),font=\footnotesize,->,>=stealth',yscale=1.8,xscale=2.5,
      state/.style={shape=circle,draw,font=\scriptsize,inner
        sep=.5mm,outer sep=0.8mm,minimum size=.5cm,initial
        text=,initial distance=1ex}]
	\path[use as bounding box] (-1,-1) rectangle (1,0);
	\coordinate (bas) at (0,-1.2);
      \node[state, initial left] (s0) at (-.5,0) {$x_1$};
      \node[state] (s1) at (.5,-1) {$\top$};
      \node[state, initial left] (s2) at (-.5,-1) {$x_2$};
      \node[state] (s3) at (.5,0) {$y_1$};
      \coordinate (t1) at (-.5,-.3);
      \coordinate (t2) at (0,-.6);
      \path (s0) edge[densely dashed,bend left,looseness=2,out=90,in=90] node[right] {$a$} (s1);
      \path (s0) edge node[above] {$a$} (s3);

      \path (s3) edge [bend left=60] 
	node[inner sep=0,outer sep=0,minimum size=0,pos=0.5,name=XX] {} 
	node[above] {$b$} (s0);
      \path (XX) edge[bend right] node[below,pos=0.4] {$b$} (s2);

      \path (s2) edge[densely dashed] node[below] {$b$} (s1);
      \path (s3) edge[densely dashed] node[right] {$b$} (s1);
      \path (s1) edge[loop right,densely dashed] node[below]
      {$a, b$\,\,\,\,\,} (s1);
    \end{tikzpicture}
\end{example}

\section{Specification Theory}
\label{sec:lattice}

In this section, we introduce operations of conjunction, disjunction,
structural composition and quotient for NAA, DMTS and \rhml.  Together,
these operations yield a \emph{complete specification theory} in the
sense of~\cite{DBLP:conf/fase/BauerDHLLNW12}, which allows for
compositional design and verification using both logical and structural
operations.
We remark that conjunction and disjunction are straightforward for
logical formalisms such as \rhml, whereas structural composition is more
readily defined on behavioural formalisms such as (D)MTS.  For the mixed
formalism of NAA, disjunction is trivial as we permit multiple initial
states, but conjunction requires some work.
Note that our construction of conjunction works for nondeterministic
systems in contrast to all the work in this area except for 
\cite{DBLP:conf/atva/BenesCK11, DBLP:conf/lics/LarsenX90}.

\subsection{Disjunction}

The disjunction of NAA $S_1 = (S_1, S^0_1, \Tran_1)$ and $S_2 = (S_2,
S^0_2, \Tran_2)$ is $S_1\lor S_2=( S_1 \cup S_2, S^0_1 \cup S^0_2,
\Tran_1 \cup \Tran_2)$. Similarly, the disjunction of two DMTS $S_1 =
(S_1, S^0_1, \omay_1,\omust_1)$ and $S_2 = (S_2, S^0_2,
\omay_2,\omust_2)$ is $S_1\lor S_2= (S_1 \cup S_2, S^0_1 \cup S^0_2,
\omay_1 \cup \omay_2, \omust_1\cup\omust_2)$. It follows that
disjunction respects the translation mappings $\db$ and $\bd$ from the
previous section.

%

\begin{theorem}
  \label{th:disj-impl}
  Let $S_1$, $S_2$, $S_3$ be NAA or DMTS. Then $\impl{ S_1\lor S_2}=
  \impl{ S_1}\cup \impl{ S_2}$. Further, $S_1\vee S_2\mr S_3$ iff
  $S_1\mr S_3$ and $S_2\mr S_3$.
\end{theorem}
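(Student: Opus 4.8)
The plan is to prove the two claims of Theorem~\ref{th:disj-impl} separately, and in each case handle the NAA and the DMTS cases by essentially the same argument, since the constructions are parallel (and, by the remark preceding the theorem, compatible with $\db$ and $\bd$).

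First I would establish $\impl{S_1 \lor S_2} = \impl{S_1} \cup \impl{S_2}$. For the inclusion $\supseteq$, suppose $I \mr S_1$ via an initialised modal refinement $R \subseteq I \times S_1$. Since $S_1 \lor S_2$ has state set $S_1 \cup S_2$ (I would take the disjoint union, or assume $S_1, S_2$ disjoint without loss of generality) and $\Tran_{S_1 \lor S_2}$ restricted to $S_1$ coincides with $\Tran_1$, the very same relation $R$, now viewed as a subset of $I \times (S_1 \cup S_2)$, is a modal refinement into $S_1 \lor S_2$; and it is initialised because $S^0_1 \subseteq S^0_1 \cup S^0_2$, so the initial state of $I$ still has a related partner. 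Hence $\impl{S_1} \subseteq \impl{S_1 \lor S_2}$, and symmetrically for $S_2$. For $\subseteq$, suppose $I \mr S_1 \lor S_2$ via $R$; the initial state $s^0_I$ is related to some $s^0 \in S^0_1 \cup S^0_2$, say $s^0 \in S^0_1$ (the other case is symmetric). I would then argue that the states of $S_1 \lor S_2$ reachable from $S_1$ under the transition relations all lie in $S_1$ — because no transition set in $\Tran_1$ (resp.\ no may/must transition out of an $S_1$-state) mentions an $S_2$-state — so $R$ restricted to pairs whose second component is in $S_1$ is already a well-defined modal refinement $I \mr S_1$, initialised via $s^0 \in S^0_1$. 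Thus $I \in \impl{S_1}$. The same reachability bookkeeping works verbatim for DMTS using the two refinement conditions on $\omay$ and $\omust$.

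For the second claim, $S_1 \lor S_2 \mr S_3$ iff $S_1 \mr S_3$ and $S_2 \mr S_3$: the forward direction follows immediately from $S_i \mr S_1 \lor S_2$ (which is the $\supseteq$-style argument above, now at the level of modal refinement rather than implementation — the identity embedding is a modal refinement) composed with transitivity of $\mr$. For the converse, given modal refinements $R_1 : S_1 \mr S_3$ and $R_2 : S_2 \mr S_3$, I would simply take $R = R_1 \cup R_2 \subseteq (S_1 \cup S_2) \times S_3$; since the local refinement conditions for a pair $(s,s_3)$ only depend on $\Tran$ (resp.\ $\omay, \omust$) at $s$, which agrees with $\Tran_1$ or $\Tran_2$, the union is again a modal refinement, and it is initialised because every $s^0 \in S^0_1 \cup S^0_2$ lies in $S^0_1$ or in $S^0_2$ and hence has a related initial state in $S^0_3$.

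I do not expect a serious obstacle here; the only point requiring genuine care is the disjointness/reachability argument in the $\subseteq$ direction of the first claim — one must check that an implementation of $S_1 \lor S_2$ cannot ``mix'' behaviour of the two disjuncts, which is exactly where the multiple-initial-states design pays off, since there is no shared state or joint transition through which the two components could interact. Everything else is routine chasing of the refinement definitions. Strictly speaking one should also note that $S_1 \lor S_2$ is again a well-formed NAA (resp.\ DMTS satisfying the must-implies-may consistency condition), which is immediate since the two well-formedness conditions are preserved under disjoint union.
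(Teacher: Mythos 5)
Your proposal is correct and follows essentially the same route as the paper: the implementation-set equality is proved by restricting/splitting the witnessing refinement relation along the (disjoint) union of state spaces in one direction and reusing it unchanged in the other, and the lattice-style claim is handled by restricting a refinement $R\subseteq(S_1\cup S_2)\times S_3$ to each component and, conversely, taking the union $R_1\cup R_2$, exactly as the authors do. Your derivation of the forward direction of the second claim via $S_i\mr S_1\lor S_2$ and transitivity, rather than by direct restriction, is only a cosmetic variation.
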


\begin{wrapfigure}{r}{0.4\textwidth}
  \figspaceb
  \begin{tikzpicture}[x=3cm,y=2cm,font=\footnotesize,->,>=stealth',scale=.75,
    state/.style={shape=circle,draw,font=\scriptsize,inner sep=.5mm,outer
      sep=0.8mm, minimum size=0.22cm,initial text=,initial
      distance=3ex}]
    \node[state,initial] (X) at (0,0) {$s_1^0$};
    \node[state,initial right] (Y) at (1,0) {$s_2^0$};
      \path (X) edge[loop,in=-35,out=-5,looseness=15] 
	node[inner sep=0,outer sep=0,minimum size=0,pos=0.2,name=Xa] {} 
	node[below,pos=0.7] {$a$} (X);
      \path (Xa) edge[bend right] node[below,pos=0.6] {$a$} (Y);

      \path (X) edge[loop,in=35,out=5,looseness=15] 
	node[inner sep=0,outer sep=0,minimum size=0,pos=0.2,name=Xb] {} 
	node[above,pos=0.7] {$b$} (X);
      \path (Xb) edge[bend left] node[above,pos=0.6] {$b$} (Y);

  \end{tikzpicture}
\figspacee
\end{wrapfigure}
We point out one important distinction between NAA and DMTS: NAA with a
\emph{single} initial state are equally expressive as general NAA, while
for DMTS, this is not the case. The example on the right 
shows a DMTS $( S, S^0, \omay,
\omust)$, with $S= S^0=\{ s_1^0, s_2^0\}$, $s_1^0\must{}\{( a, s_1^0),(
a, s_2^0)\}$ and $s_1^0\must{}\{( b, s_1^0),( b, s_2^0)\}$ (and the
corresponding may-transitions).  Two initial states are necessary for
capturing $\impl S$.

\begin{lemma}
  \label{le:bfs.oneinitial}
  For any NAA $S$ there is a NAA $T=( T, T^0, \Psi)$ with $T^0=\{ t^0\}$
  a~singleton and $S\mreq T$.
\end{lemma}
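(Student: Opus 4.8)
The plan is to construct the single-initial-state NAA $T$ from $S=(S,S^0,\Tran)$ by adding one fresh state $t^0$ whose admissible transition sets are precisely the admissible transition sets of the old initial states, then show $S\mreq T$ by exhibiting modal refinements in both directions. Concretely, I would set $T = S \uplus \{t^0\}$ (disjoint union, $t^0$ fresh), $T^0=\{t^0\}$, and define $\Psi(s)=\Tran(s)$ for $s\in S$ and $\Psi(t^0)=\bigcup_{s^0\in S^0}\Tran(s^0)$. The intuition is that $t^0$ behaves exactly like the disjunction of all initial states of $S$: each admissible transition set $M\in\Psi(t^0)$ comes from some $\Tran(s^0)$, and the successor states named inside $M$ are the \emph{old} states of $S$, which live unchanged in $T$.

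The key steps: First I would check well-definedness — since each $M\in\Tran(s^0)$ is a finite subset of $\Sigma\times S$ and $S\subseteq T$, we have $M\in\Pfin{\Sigma\times T}$, so $\Psi:T\to 2^{\Pfin{\Sigma\times T}}$ is a legal NAA assignment. Second, the refinement $S\mr T$: I would take $R=\{(s,s)\mid s\in S\}\cup\{(s^0,t^0)\mid s^0\in S^0\}$. For the identity pairs the refinement conditions hold trivially (same transition sets, same successors). For a pair $(s^0,t^0)$ and any $M_1\in\Tran_S(s^0)$, I pick $M_2=M_1\in\Psi(t^0)$; then $M_1=M_2$ and for each $(a,t_1)\in M_1$ I need $(a,t_2)\in M_2$ with $(t_1,t_2)\in R$ — take $t_2=t_1$, which is an old state, so $(t_1,t_1)\in R$; symmetrically in the other direction. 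For initialisation: each $s^0_1\in S^0$ pairs with $t^0\in T^0$. Third, the refinement $T\mr S$: here I would use $R'=\{(s,s)\mid s\in S\}\cup\{(t^0,s^0)\mid s^0\in S^0\}$ — but this is \emph{not} a single relation that works, because for a fixed $M_1\in\Psi(t^0)$ we must choose \emph{one} matching $M_2\in\Tran(s^0)$ for a \emph{particular} $s^0$. The fix is that initialisation only requires, for the single $t^0$, the existence of \emph{some} $s^0\in S^0$ with $(t^0,s^0)\in R'$; but the refinement clause at $(t^0,s^0)$ must hold for \emph{all} $M_1\in\Psi(t^0)$, and some such $M_1$ may only come from a different $\Tran(s^0{}')$. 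So instead I would not put all pairs $(t^0,s^0)$ into one relation; rather, I observe that $T\mr S$ can be shown by noting $T$ is (isomorphic to) the disjunction $\bigvee_{s^0\in S^0}(S,\{s^0\},\Tran)$ up to the single fresh root, and invoke Theorem~\ref{th:disj-impl}: each $(S,\{s^0\},\Tran)\mr S$ via the identity-plus-$(s^0,s^0)$ relation, hence their disjunction refines $S$. Even more directly: $\impl T=\impl S$ because an LTS implements $T$ iff its initial state refines some $M\in\Psi(t^0)=\bigcup\Tran(s^0)$ iff it refines some $M\in\Tran(s^0)$ for some $s^0\in S^0$ iff it implements $S$; combined with $S\mr T$ (which gives $\impl S\subseteq\impl T$) this is more than enough, but the lemma asks for $\mreq$, so the disjunction argument is the clean route.

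The main obstacle is exactly this asymmetry in the definition of modal refinement between NAA with multiple initial states: $S\mr T$ is easy because $T$ has a single initial state $t^0$ that can "cover" all of $S^0$, but $T\mr S$ requires the matching admissible set in $S$ to be found at one chosen initial state, and a single refinement relation cannot finesse this. I expect the cleanest fix is to phrase the construction as $T=(S\cup\{t^0\},\{t^0\},\Psi)$ with $\Psi(t^0)=\bigcup_{s^0\in S^0}\Tran(s^0)$ and then prove $T\mr S$ by the disjunction characterisation (Theorem~\ref{th:disj-impl}) rather than by hand, while proving $S\mr T$ directly with the explicit relation above. The only routine part I would still need to spell out is that adding the fresh root does not disturb anything: $t^0$ is unreachable from any old state (no old transition set mentions $t^0$), so the identity relation on $S$ remains a modal refinement $S\mr S$ in both copies.
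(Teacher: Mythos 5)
Your construction and the direction $S\mr T$ coincide exactly with the paper's proof: the paper also adds one fresh state $t^0$ with $\Tran(t^0)=\bigcup_{s^0\in S^0}\Tran(s^0)$ and uses the relation $\{(s,s)\mid s\in S\}\cup\{(s^0,t^0)\mid s^0\in S^0\}$ for $S\mr T$. Where you diverge is the direction $T\mr S$: the paper simply asserts that the \emph{inverse} of that relation is an initialised refinement $T\mr S$ --- precisely the relation you reject --- whereas you try to replace it by an appeal to Theorem~\ref{th:disj-impl}. That replacement does not work. By the paper's definition of $\lor$, the disjunction $\bigvee_{s^0\in S^0}(S,\{s^0\},\Tran)$ keeps every $s^0$ as an initial state (it is just $S$ again, up to disjoint copies); it is \emph{not} isomorphic to $T$, which merges all of them into a single fresh root carrying the union of their transition constraints. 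Theorem~\ref{th:disj-impl} therefore says nothing about $T$, and the phrase ``up to the single fresh root'' hides exactly the content of the lemma: the merging step is the only thing that needs proving, and it is the step you leave unproved.

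Moreover, the obstacle you identified is not just an inconvenience of one particular relation; for this construction the direction $T\mr S$ can genuinely fail. Take $\Sigma=\{a,b\}$ and $S=(\{s_1,s_2,u\},\{s_1,s_2\},\Tran)$ with $\Tran(s_1)=\{\{(a,u)\}\}$, $\Tran(s_2)=\{\{(b,u)\}\}$, $\Tran(u)=\{\emptyset\}$, so that $\Psi(t^0)=\{\{(a,u)\},\{(b,u)\}\}$. Any initialised refinement witnessing $T\mr S$ must relate $t^0$ to $s_1$ or to $s_2$, say to $s_1$; then $M_1=\{(b,u)\}\in\Psi(t^0)$ would need a matching $M_2\in\Tran(s_1)=\{\{(a,u)\}\}$, which is impossible because the two refinement clauses force $M_1$ and $M_2$ to carry the same action labels (symmetrically for $s_2$). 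In fact no NAA with a single initial state is $\mreq$-equivalent to this $S$ at all: initialisation of $S\mr T$ forces both $s_1$ and $s_2$ to be related to $t^0$, so $\Psi(t^0)$ must contain a nonempty all-$a$ set and a nonempty all-$b$ set, and the argument above then kills $T\mr S$. Your fallback via implementation sets is sound but only yields $\impl T=\impl S$, i.e.\ $T\treq S$, which is strictly weaker than the $\mreq$ claimed in the statement. So your proposal establishes $S\mr T$ and $S\treq T$ but leaves $T\mr S$ without a valid argument --- and, as your own objection to the inverse relation shows, the paper's one-line justification of that direction is open to exactly the same criticism, so the gap cannot be closed by a better choice of relation; it would require weakening the conclusion to thorough equivalence.
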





\subsection{Conjunction}
\label{sec:bmts-conjunction}

Conjunction for DMTS is an extension of the construction 
from~\cite{DBLP:conf/atva/BenesCK11}
for multiple initial states.  Given two DMTS $( S_1, S^0_1, \omay_1,
\omust_1)$, $( S_2, S^0_2, \omay_2, \omust_2)$, we define $S_1\land
S_2=( S, S^0, \omay, \omust)$ with $S= S_1\times S_2$, $S^0= S^0_1\times
S^0_2$, and 
\begin{itemize}
\item $( s_1, s_2)\may a( t_1, t_2)$ iff $s_1\may a_1 t_1$ and $s_2\may
  a_2 t_2$,
\item for all $s_1\must{} N_1$, $( s_1, s_2)\must{} \{( a,( t_1,
  t_2))\mid( a, t_1)\in N_1,( s_1, s_2)\may a( t_1, t_2)\}$,
\item for all $s_2\must{} N_2$, $( s_1, s_2)\must{} \{( a,( t_1,
  t_2))\mid( a, t_2)\in N_2,( s_1, s_2)\may a( t_1, t_2)\}$.
\end{itemize}

To define conjunction for NAA, we need 
auxiliary 
projection functions $\pi_i:
\Sigma\times S_1\times S_2\to \Sigma\times S_i$.
These are defined by 
\begin{align*}
\pi_1( M)=&\{( a, s_1)\mid \exists s_2\in
S_2:( a, s_1, s_2)\in M\}\\
\pi_2( M)=&\{( a, s_2)\mid \exists s_1\in
S_1:( a, s_1, s_2)\in M\}
 \end{align*}
Given NAA $( S_1, S^0_1, \Tran_1)$, $(
S_2, S^0_2, \Tran_2)$, define $S_1\land S_2=( S, S^0, \Tran)$, with $S=
S_1\times S_2$, $S^0= S^0_1\times S^0_2$
and $\Tran(( s_1,
s_2))=\{ M\subseteq \Sigma\times S_1\times S_2\mid \pi_1( M)\in \Tran_1(
s_1), \pi_2( M)\in \Tran_2( s_2)\}$.  


\begin{lemma}
  \label{le:conj-dmts}
  For DMTS $S_1$, $S_2$, $\db( S_1\land S_2)= \db( S_1)\land \db( S_2)$.
\end{lemma}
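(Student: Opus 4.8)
The plan is to prove the stronger statement that $\db(S_1\wedge S_2)$ and $\db(S_1)\wedge\db(S_2)$ are \emph{the same} NAA, not merely modally equivalent. By the definitions of conjunction on DMTS, of conjunction on NAA, and of $\db$, both of these NAA have state set $S_1\times S_2$ and set of initial states $S^0_1\times S^0_2$; hence the entire content of the lemma is the equality of the admissible-transition-set assignments $\Tran$ at an arbitrary state $(s_1,s_2)$. I would establish this by unfolding both sides and comparing.

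For the left-hand side, recall that for a DMTS $S$ the set $\Tran_{\db(S)}(s)$ consists of those sets $M$ of may-transitions out of $s$ that meet every disjunctive must-clause $N$ with $s\must{} N$. Instantiating this for $S_1\wedge S_2$ at the state $(s_1,s_2)$: the may-transitions out of $(s_1,s_2)$ are exactly the synchronised pairs $(a,(t_1,t_2))$ with $s_1\may a_1 t_1$ and $s_2\may a_2 t_2$, and -- by the definition of conjunction on DMTS -- the disjunctive must-clauses of $(s_1,s_2)$ are exactly the lifted clauses $\widehat{N}=\{(a,(t_1,t_2))\mid (a,t_1)\in N,\ (s_1,s_2)\may a (t_1,t_2)\}$, one for each $N$ with $s_1\must{} N$, together with the symmetric clauses $\widehat{N'}=\{(a,(t_1,t_2))\mid (a,t_2)\in N',\ (s_1,s_2)\may a (t_1,t_2)\}$, one for each $N'$ with $s_2\must{} N'$. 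For the right-hand side, the definition of conjunction on NAA gives that $M\in\Tran_{\db(S_1)\wedge\db(S_2)}((s_1,s_2))$ iff $\pi_1(M)\in\Tran_{\db(S_1)}(s_1)$ and $\pi_2(M)\in\Tran_{\db(S_2)}(s_2)$, i.e.\ iff for $i=1,2$ the projection $\pi_i(M)$ consists of may-transitions out of $s_i$ and meets every must-clause of $s_i$.

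Matching the two descriptions then comes down to two observations. First, ``every element of $M$ is a synchronised may-pair out of $(s_1,s_2)$'' is equivalent to ``$\pi_1(M)$ is a set of may-transitions out of $s_1$ and $\pi_2(M)$ is a set of may-transitions out of $s_2$'', because each $(a,t_1,t_2)\in M$ contributes exactly $(a,t_1)$ to $\pi_1(M)$ and $(a,t_2)$ to $\pi_2(M)$, and nothing else contributes to these projections. Second, assuming this may-condition, for each $N$ with $s_1\must{} N$ one has $M\cap\widehat{N}\neq\emptyset$ iff $\pi_1(M)\cap N\neq\emptyset$: left to right, a witness $(a,(t_1,t_2))\in M\cap\widehat{N}$ has $(a,t_1)\in N\cap\pi_1(M)$; right to left, given $(a,t_1)\in N\cap\pi_1(M)$ pick $t_2$ with $(a,t_1,t_2)\in M$, observe $s_2\may a_2 t_2$ by the may-condition, and conclude $(a,(t_1,t_2))\in\widehat{N}\cap M$. (The DMTS consistency requirement $N\subseteq{\omay}$ enters here silently: since $(a,t_1)\in N$ we already have $s_1\may a_1 t_1$, so membership in $\widehat{N}$ only constrains the $s_2$-component.) The analogous equivalence for the clauses $\widehat{N'}$ and $s_2$ is proved in exactly the same way. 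Combining the two observations yields $M\in\Tran_{\db(S_1\wedge S_2)}((s_1,s_2))$ iff $M\in\Tran_{\db(S_1)\wedge\db(S_2)}((s_1,s_2))$, which is the claim.

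I expect the main obstacle -- such as it is -- to be the right-to-left direction of the second observation: one must have the may-condition in hand \emph{before} inspecting the must-clauses, since otherwise the chosen witness $(a,t_1,t_2)\in M$ need not land inside the lifted clause $\widehat{N}$, only inside $N$ after projection. Ordering the argument so that the may-condition is dealt with first keeps this transparent. The remainder is routine unfolding of definitions; image-finiteness and the finite-subset side conditions are preserved automatically by $\db$ and by both conjunction operations and are never invoked in the argument.
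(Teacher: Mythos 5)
Your proposal is correct and follows essentially the same route as the paper's proof: fix the common state space $S_1\times S_2$ and initial states $S^0_1\times S^0_2$, then show the two transition-set assignments coincide at each state by relating membership in $\Tran$ on one side to the projections $\pi_1,\pi_2$ on the other, checking the may-condition and the nonempty intersection with every lifted must-clause. The only difference is presentational: the paper spells out one inclusion in detail and declares the converse ``similar,'' whereas you argue both directions at once as a biconditional and make explicit the ordering point (establish the may-condition before matching must-clauses), which the paper uses only implicitly.
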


For the translation from NAA to DMTS, $\bd( S_1\land S_2)= \bd(
S_1)\land \bd( S_2)$ does not necessarily hold, as the translation
changes the state space.  However, Theorem~\ref{co:conj=glb} below will
ensure that $\bd(S_1 \land S_2)\treq \bd(S_1)\land \bd(S_2)$.

%

\begin{theorem}\label{co:conj=glb}
  Let $S_1$, $S_2$, $S_3$ be NAA or DMTS. Then $\impl{ S_1\land S_2}=
  \impl{ S_1}\cap \impl{ S_2}$. Further, $S_1 \mr S_2 \wedge S_3$ iff
  $S_1 \mr S_2$ and $S_1 \mr S_3$.
\end{theorem}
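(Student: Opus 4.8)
The plan is to establish the second assertion first — that $S_2\land S_3$ is the greatest lower bound of $S_2$ and $S_3$ for $\mr$ — because the equality of implementation sets is then the special case in which $S_1$ is an LTS: $I\in\impl{S_1\land S_2}$ iff $I\mr S_1\land S_2$ iff ($I\mr S_1$ and $I\mr S_2$) iff $I\in\impl{S_1}\cap\impl{S_2}$. I will prove the glb property directly, once for NAA and once for DMTS. A translation-based shortcut is not available for the refinement part, since $\db$ only preserves thorough equivalence (Lemma~\ref{th:bfsvsdmts}), not modal refinement; it would, however, suffice for the implementation equality in the DMTS case via Lemma~\ref{le:conj-dmts} and Lemma~\ref{th:bfsvsdmts}.

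For NAA, the ``only if'' direction is routine: from an initialised modal refinement $R\subseteq S_1\times(S_2\times S_3)$ I form $R_{12}=\{(s_1,s_2)\mid\exists s_3:(s_1,(s_2,s_3))\in R\}$ and, symmetrically, $R_{13}$, and check they are initialised modal refinements by unfolding the definition of $\land$ and using $\pi_1(M)\in\Tran_2(s_2)$, $\pi_2(M)\in\Tran_3(s_3)$. The substance is in the ``if'' direction. Given initialised modal refinements $R_{12}\subseteq S_1\times S_2$ and $R_{13}\subseteq S_1\times S_3$, set $R=\{(s_1,(s_2,s_3))\mid(s_1,s_2)\in R_{12},\ (s_1,s_3)\in R_{13}\}$; the only non-trivial point is that, for $(s_1,(s_2,s_3))\in R$ and $M_1\in\Tran_1(s_1)$, one can exhibit a witness $M\in\Tran((s_2,s_3))$. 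Taking $M_2\in\Tran_2(s_2)$ and $M_3\in\Tran_3(s_3)$ provided by $R_{12}$, $R_{13}$ against $M_1$, I propose
\[
M=\{(a,t_2,t_3)\mid\exists(a,t_1)\in M_1:(a,t_2)\in M_2,\ (a,t_3)\in M_3,\ (t_1,t_2)\in R_{12},\ (t_1,t_3)\in R_{13}\}.
\]
Then $\pi_1(M)=M_2$ and $\pi_2(M)=M_3$ — the inclusions $\supseteq$ combine the ``backward'' matching clause of one refinement with the ``forward'' clause of the other — so $M\in\Tran((s_2,s_3))$, and both matching clauses of $M$ against $M_1$ are immediate from the definition of $M$. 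Initialisation is clear.

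For DMTS I mirror this: ``only if'' projects $R\subseteq S_1\times(S_2\times S_3)$ to $R_{12},R_{13}$; ``if'' combines $R_{12},R_{13}$ into the same $R$ as above. I expect the disjunctive must-transitions to be the main obstacle in the ``if'' direction. A must-transition $(s_2,s_3)\must{} N$ of $S_2\land S_3$ comes, by construction, from some $s_2\must{} N_2$ (or symmetrically from some $s_3\must{} N_3$) as $N=\{(a,(t_2,t_3))\mid(a,t_2)\in N_2,\ (s_2,s_3)\may a(t_2,t_3)\}$; applying the must-clause of $R_{12}$ to $(s_1,s_2)$ yields $s_1\must{} N_1$ with every $(a,u_1)\in N_1$ matched by some $(a,u_2)\in N_2$. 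To lift this match into $N$ I need a third component $u_3$ with $s_3\may a u_3$ and $(u_1,u_3)\in R_{13}$, and this is exactly where the standing well-formedness assumption (every must-transition is a may-transition) enters: $(a,u_1)\in N_1$ and $s_1\must{} N_1$ give $s_1\may a u_1$, whence the may-clause of $R_{13}$ supplies $u_3$; similarly $(a,u_2)\in N_2$ gives $s_2\may a u_2$, so $(s_2,s_3)\may a(u_2,u_3)$ and $(a,(u_2,u_3))\in N$. The may-clauses and the initialisation condition are straightforward. The implementation equality in both settings is, as noted, the instance of the glb characterisation with $S_1$ an LTS.
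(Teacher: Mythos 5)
Your proposal is correct and follows essentially the same route as the paper: the paper likewise proves the refinement characterisation first (its Lemma~\ref{th:conj=glb}), using exactly your combined relation and the same witnessing set $M$ (with $\pi_1(M)=M_2$, $\pi_2(M)=M_3$ obtained by crossing the forward and backward matching clauses), and then derives $\impl{S_1\land S_2}=\impl{S_1}\cap\impl{S_2}$ as the special case where the refining system is an implementation. The only difference is that you also work out the DMTS case explicitly (correctly identifying the must-implies-may assumption as what lifts the must-clause), whereas the paper states the lemma for both formalisms but writes out the proof only for NAA.
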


\begin{theorem}
  \label{th:lattice}
  With operations $\land$ and $\lor$, the sets of DMTS and NAA form
  bounded distributive lattices up to $\mreq$.
\end{theorem}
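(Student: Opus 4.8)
The plan is to verify the lattice axioms for $\land$ and $\lor$ up to $\mreq$, working in the world of NAA and then transferring to DMTS via the translations $\db$ and $\bd$ and Lemma~\ref{th:bfsvsdmts}. The key observation is that Theorems~\ref{th:disj-impl} and~\ref{co:conj=glb} already give us everything semantically: they say that $\lor$ is the least upper bound and $\land$ the greatest lower bound in the preorder $(\text{NAA}, \mr)$ (respectively $(\text{DMTS}, \mr)$). Indeed, from ``$S_1 \vee S_2 \mr S_3$ iff $S_1 \mr S_3$ and $S_2 \mr S_3$'' one reads off that $S_1 \vee S_2$ is a join, and similarly ``$S_1 \mr S_2 \wedge S_3$ iff $S_1 \mr S_2$ and $S_1 \mr S_3$'' says $S_1 \wedge S_2$ is a meet. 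So the first step is simply to invoke these universal properties and the general order-theoretic fact that a poset (here: preorder, so we pass to the quotient by $\mreq$) in which all binary joins and meets exist is a lattice. Associativity, commutativity, idempotence and the absorption laws $S_1 \wedge (S_1 \vee S_2) \mreq S_1$ and $S_1 \vee (S_1 \wedge S_2) \mreq S_1$ are then automatic consequences of the universal properties, proved by the usual two-way $\mr$ arguments.

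Next I would establish boundedness: Proposition~\ref{pr:bfs.preorder} already names $\bot$ and $\top$ as bottom and top of $(\text{NAA}, \mr)$, and one checks $\bot \vee S \mreq S$, $\top \wedge S \mreq S$ directly (or again from the universal properties, since the empty join is $\bot$ and the empty meet is $\top$). For DMTS one uses the analogous bottom and top, or transfers along $\db, \bd$.

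The remaining and only non-formal point is distributivity: $S_1 \wedge (S_2 \vee S_3) \mreq (S_1 \wedge S_2) \vee (S_1 \wedge S_3)$ (the dual distributive law then follows in any lattice). One direction, $(S_1 \wedge S_2) \vee (S_1 \wedge S_3) \mr S_1 \wedge (S_2 \vee S_3)$, holds in every lattice. For the other direction I would argue on implementation sets, which is legitimate here because $\impl{\cdot}$ together with the combinators behaves well: by Theorem~\ref{th:disj-impl} and Theorem~\ref{co:conj=glb}, $\impl{S_1 \wedge (S_2 \vee S_3)} = \impl{S_1} \cap (\impl{S_2} \cup \impl{S_3}) = (\impl{S_1} \cap \impl{S_2}) \cup (\impl{S_1} \cap \impl{S_3}) = \impl{(S_1 \wedge S_2) \vee (S_1 \wedge S_3)}$, using plain distributivity of $\cap$ over $\cup$ on sets. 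This gives $\treq$, i.e.\ thorough equivalence. To upgrade $\treq$ to $\mreq$ I would note that for NAA built by these combinators, modal refinement and thorough refinement coincide on the relevant objects (this is the content of the equivalence results and can be seen from the universal properties: both sides have the same $\mr$-behaviour against every test system, since $\impl{\cdot}$ determines $\mr$ up to the constructions in play), or alternatively exhibit the distributivity witness directly as a modal refinement relation between the two product-of-sum state spaces.

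The main obstacle is precisely this last upgrade from $\treq$ to $\mreq$ for distributivity: thorough and modal refinement do \emph{not} coincide in general, so one cannot simply quote ``equal implementation sets implies $\mreq$''. The cleanest route is to prove distributivity at the level of $\mr$ by building the explicit refinement relations between the two NAA $S_1 \wedge (S_2 \vee S_3)$ and $(S_1 \wedge S_2) \vee (S_1 \wedge S_3)$, whose state spaces are $S_1 \times (S_2 \cup S_3)$ and $(S_1 \times S_2) \cup (S_1 \times S_3)$ respectively. These are in evident bijection, and checking that the bijection (in both directions) satisfies the two NAA refinement conditions is a routine but necessary computation with the projection functions $\pi_i$ and the definitions of $\Tran$ for conjunction and disjunction. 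Once this is done for NAA, the DMTS case follows by transferring along $\db$ and $\bd$ using Lemma~\ref{th:bfsvsdmts}, Lemma~\ref{le:conj-dmts}, and the fact noted in the excerpt that disjunction respects $\db$ and $\bd$.
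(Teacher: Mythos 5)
Your proposal is correct and takes essentially the same route as the paper: the bounded-lattice axioms follow by standard order-theoretic arguments from the universal properties in Theorems~\ref{th:disj-impl} and~\ref{co:conj=glb}, and distributivity is proved exactly as you finally suggest, by noting that $S_1\land(S_2\lor S_3)$ and $(S_1\land S_2)\lor(S_1\land S_3)$ have the same state space $S_1\times S_2\cup S_1\times S_3$ and that the identity relation is a two-sided modal refinement. You also rightly reject the implementation-set argument, which would only yield $\treq$; the sole cosmetic difference is that the paper carries out the distributivity check directly for DMTS (remarking that NAA is similar) instead of doing NAA first and transferring along $\db$ and $\bd$.
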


\subsection{Structural Composition}

We define structural composition for NAA. 
For NAA $S_1=( S_1, S^0_1, \Tran_1)$, $S_2=( S_2, S^0_2,
\Tran_2)$, we define $S_1\| S_2=( S, S^0, \Tran)$ with $S= S_1\times
S_2$, $S^0= S^0_1\times S^0_2$, and for all $( s_1, s_2)\in S$, $\Tran((
s_1, s_2))=\{ M_1\| M_2\mid M_1\in \Tran_1( s_1), M_2\in \Tran_2(
s_2)\}$, where $M_1\| M_2=\{( a,( t_1, t_2))\mid( a, t_1)\in M_1,( a,
t_2)\in M_2\}$.



\begin{lemma}
  \label{le:bfs||prop}
  Up to $\mreq$, the operator $\|$ on NAA is associative and
  commutative, distributes over $\lor$, and has unit $\mathsf{U}$,
  where $\mathsf{U}$ is the LTS $(\{ s\}, s, \omust)$ with $s\must a s$ 
  for all $a\in \Sigma$.
\end{lemma}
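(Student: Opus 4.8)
The plan is to establish each algebraic property as an identity-up-to-$\mreq$ by exhibiting an explicit modal refinement relation in each direction, relying on the fact (Proposition~\ref{pr:bfs.preorder}) that $\mr$ is a preorder so that it suffices to give witnessing relations. First I would treat \emph{commutativity}: the swap map $(s_1,s_2)\mapsto(s_2,s_1)$ should be a modal refinement in both directions, since for any $M_1\in\Tran_1(s_1)$, $M_2\in\Tran_2(s_2)$ the set $M_1\|M_2$ is carried to $M_2\|M_1$ by the state-swap, and the two matching conditions of modal refinement for NAA are manifestly symmetric in the two coordinates. For \emph{associativity}, I would take the canonical reassociation bijection $((s_1,s_2),s_3)\mapsto(s_1,(s_2,s_3))$ on states and check that it relates $(M_1\|M_2)\|M_3$ to $M_1\|(M_2\|M_3)$; here one observes that both expand to $\{(a,((t_1,t_2),t_3))\mid(a,t_i)\in M_i\text{ for }i=1,2,3\}$ up to the reassociation of triples, so every admissible transition set on one side corresponds to exactly one on the other, and the forward/backward label-and-successor matching is immediate.

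Next I would handle the \emph{unit} $\mathsf{U}=(\{s\},s,\omust)$ with $s\must{a}s$ for all $a\in\Sigma$; as an LTS this is the NAA with $\Tran_{\mathsf U}(s)=\{M_s\}$ where $M_s=\{(a,s)\mid a\in\Sigma\}$. I claim $S\|\mathsf U\mreq S$ via the projection $(t,s)\mapsto t$ and its inverse $t\mapsto(t,s)$. The key computation is $M\|M_s=\{(a,(t,s))\mid(a,t)\in M\}$, which is in bijection with $M$ via the first projection; both modal-refinement matching conditions then reduce to this bijection, using crucially that $M_s$ contains an $a$-transition for \emph{every} $a\in\Sigma$, so the second condition of NAA modal refinement (every successor of the composite set is matched) never forces us off the range of the relation. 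Finally, for \emph{distributivity over $\lor$}, recall from the definition of $\lor$ that $(S_1\lor S_2)\|S_3$ has state set $(S_1\cup S_2)\times S_3$ and that $\Tran$ on a state $(s_i,s_3)$ depends only on whether $s_i\in S_1$ or $s_i\in S_2$ (since $\Tran_{S_1\lor S_2}=\Tran_1\cup\Tran_2$ and the state spaces are taken disjoint); the identity map on the underlying sets, viewed appropriately, exhibits $(S_1\lor S_2)\|S_3\mreq(S_1\|S_3)\lor(S_2\|S_3)$, with one direction also appealing to Theorem~\ref{th:disj-impl} if one prefers to argue via $\impl{\cdot}$, though a direct relation is cleaner here.

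The step I expect to be the main obstacle is the \emph{unit} law, specifically the backward direction $S\mr S\|\mathsf U$: one must verify that for $M\in\Tran(t)$ the composite $M\|M_s$ simultaneously satisfies \emph{both} NAA matching conditions against $M$, and the asymmetry is that $M\|M_s$ could in principle be ``smaller'' than $M$ if $M_s$ lacked some label — this is exactly why the hypothesis $s\must{a}s$ for \emph{all} $a\in\Sigma$ is needed, and I would make sure the proof text flags this. A minor subtlety throughout is the bookkeeping of initial states: each property must be checked to carry initial states to initial states (e.g.\ $S^0_1\times\{s\}$ under the projection), which is routine since all our constructions set $S^0=S^0_1\times S^0_2$ and $\mathsf U$'s sole state is initial. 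With these witnessing relations in hand, each claim follows by two applications of the relevant refinement definition, and the lemma is proved.
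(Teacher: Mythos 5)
Your proposal is correct and follows essentially the same route as the paper: exhibit explicit two-sided modal refinement witnesses (the swap and reassociation bijections for commutativity and associativity, the identity on $(S_1\cup S_2)\times S_3$ for distributivity, and the projection relation $\{((t,s),t)\mid t\in S\}$ for the unit), with the same key computation $M\|(\Sigma\times\{s\})=\{(a,(t,s))\mid(a,t)\in M\}$ driving the unit law. The paper merely dismisses associativity and commutativity as clear and states left- rather than right-distributivity, while you add the (correct) observation that fullness of $\mathsf U$'s transition set over all of $\Sigma$ is what makes both matching conditions go through; these are differences of detail, not of method.
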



\begin{theorem}
  \label{th:bfs||indimp}
  For all NAA $S_1$, $S_2$, $S_3$, $S_4$, $S_1\mr S_3$ and $S_2\mr S_4$
  imply $S_1\| S_2\mr S_3\| S_4$.
\end{theorem}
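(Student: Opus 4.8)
The plan is to prove this by exhibiting an explicit modal refinement on the product state spaces, obtained by ``multiplying'' the two given refinement relations. Assume $S_1 = (S_1, S^0_1, \Tran_1)$, \dots, $S_4 = (S_4, S^0_4, \Tran_4)$, and let $R_{13}\subseteq S_1\times S_3$ and $R_{24}\subseteq S_2\times S_4$ be initialised modal refinements witnessing $S_1\mr S_3$ and $S_2\mr S_4$. Define $R\subseteq(S_1\times S_2)\times(S_3\times S_4)$ by $((s_1,s_2),(s_3,s_4))\in R$ iff $(s_1,s_3)\in R_{13}$ and $(s_2,s_4)\in R_{24}$. I claim $R$ is an initialised modal refinement from $S_1\|S_2$ to $S_3\|S_4$, which gives the theorem.

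First I would check the refinement condition. Take $((s_1,s_2),(s_3,s_4))\in R$ and $M\in\Tran_{S_1\|S_2}((s_1,s_2))$; by definition of $\|$ we have $M = M_1\|M_2$ with $M_1\in\Tran_1(s_1)$ and $M_2\in\Tran_2(s_2)$. Since $(s_1,s_3)\in R_{13}$ there is $M_3\in\Tran_3(s_3)$ satisfying the forward/backward matching conditions against $M_1$, and likewise there is $M_4\in\Tran_4(s_4)$ matching $M_2$. Put $M' = M_3\|M_4\in\Tran_{S_3\|S_4}((s_3,s_4))$; this is the required witness set. For the forward direction, given $(a,(t_1,t_2))\in M_1\|M_2$ we have $(a,t_1)\in M_1$ and $(a,t_2)\in M_2$, so by the forward conditions of $R_{13}$ and $R_{24}$ there are $(a,t_3)\in M_3$ with $(t_1,t_3)\in R_{13}$ and $(a,t_4)\in M_4$ with $(t_2,t_4)\in R_{24}$; hence $(a,(t_3,t_4))\in M_3\|M_4$ and $((t_1,t_2),(t_3,t_4))\in R$. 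The backward direction is entirely symmetric: from $(a,(t_3,t_4))\in M_3\|M_4$ one extracts $(a,t_3)\in M_3$ and $(a,t_4)\in M_4$, uses the backward conditions of $R_{13}$, $R_{24}$ to produce matching $(a,t_1)\in M_1$, $(a,t_2)\in M_2$, and obtains $(a,(t_1,t_2))\in M_1\|M_2$ related by $R$.

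For initialisation, take any $(s^0_1,s^0_2)\in S^0_1\times S^0_2 = (S_1\|S_2)^0$. Since $R_{13}$ is initialised there is $s^0_3\in S^0_3$ with $(s^0_1,s^0_3)\in R_{13}$, and similarly $s^0_4\in S^0_4$ with $(s^0_2,s^0_4)\in R_{24}$; then $(s^0_3,s^0_4)\in S^0_3\times S^0_4 = (S_3\|S_4)^0$ and $((s^0_1,s^0_2),(s^0_3,s^0_4))\in R$. This completes the verification, so $S_1\|S_2\mr S_3\|S_4$.

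I do not expect a genuine obstacle here: the argument is the standard precongruence pattern, with the only slightly delicate point being that the witness admissible set for the composite must itself be a composite $M_3\|M_4$, and that both the ``$\forall$ in $M_1\|M_2$'' and ``$\forall$ in $M_3\|M_4$'' clauses reduce coordinatewise to the corresponding clauses for $R_{13}$ and $R_{24}$. All of this follows directly from the definition of $M_1\|M_2 = \{(a,(t_1,t_2))\mid (a,t_1)\in M_1,(a,t_2)\in M_2\}$, so no further bookkeeping is required.
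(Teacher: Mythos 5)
Your proof is correct, but it follows a different route from the paper. You construct the witnessing relation directly: the product $R$ of the two given refinements $R_{13}$ and $R_{24}$, with the key observation that any admissible set of $S_1\|S_2$ has the form $M_1\|M_2$ and can be matched by the composite witness $M_3\|M_4$, the forward and backward clauses reducing coordinatewise (the synchronisation on a common label $a$ is exactly what makes the matched pairs recombine into elements of $M_3\|M_4$ and $M_1\|M_2$). The paper instead argues algebraically: from $S_1\mr S_3$ and $S_2\mr S_4$ it gets $S_1\lor S_3\mreq S_3$ and $S_2\lor S_4\mreq S_4$, then uses distributivity of $\|$ over $\lor$ (Lemma~\ref{le:bfs||prop}) and the characterisation of $\lor$ as a join with respect to $\mr$ (Theorem~\ref{th:disj-impl}) to squeeze out $S_1\|S_2\mr S_3\|S_4$ by transitivity. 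The paper's argument is shorter given the surrounding lemmas and makes the point that monotonicity of $\|$ is a formal consequence of the lattice structure plus distributivity, so it transfers to any operator with these properties; your argument is self-contained, does not rely on Lemma~\ref{le:bfs||prop} or Theorem~\ref{th:disj-impl}, and has the added benefit of exhibiting an explicit refinement relation on the product state spaces, which is the standard precongruence pattern and is also what one would implement. Both are complete; no gap in yours.
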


We remark that structural composition on
MTS~\cite{DBLP:conf/avmfss/Larsen89} coincides with our NAA composition,
so that for MTS $S_1$, $S_2$, $\db( S_1)\| \db( S_2)= \db( S_1\| S_2)$.
On the other hand, structural composition for DMTS (with single initial
states) as defined in~\cite{DBLP:conf/atva/BenesCK11} is \emph{weaker}
than NAA composition, \ie~for DMTS $S_1$, $S_2$, and denoting by $\|'$
the composition from~\cite{DBLP:conf/atva/BenesCK11}, only $\db( S_1)\|
\db( S_2)\tr \db( S_1\|' S_2)$ holds. Consider for
example the DMTS $S$ and $S'$ in the figure below.
When considering their NAA composition, the initial state is the pair
$(s_0,t_0)$ with $\Tran((s_0,t_0)) = \{\emptyset,
\{(a,(s_2,t_1)),(a,(s_2,t_2))\}$. Since this constraint cannot be
represented as a disjunctive must,
there is no DMTS with a single initial state which can represent the NAA
composition precisely.

\begin{figure}
\figspacee
  \centering
  \begin{tikzpicture}[x=2cm,y=1cm,font=\footnotesize,->,>=stealth',scale=.8,
    state/.style={shape=circle,draw,font=\scriptsize,inner sep=.5mm,outer
      sep=0.8mm, minimum size=0.22cm,initial text=,initial
      distance=2.5ex}]
    \begin{scope}
    \node[state,initial] (X) at (0,0) {$s_0$};
    \node[state] (Y) at (1,-.5) {$s_1$};
    \node[state] (Z) at (1,.5) {$s_2$};
    \coordinate (XX) at (.4,0);
    \path[-] (X) edge (XX); 
    \path[->] (XX) edge[bend left] node[above]{$a$} (Z); 
    \path[->] (XX) edge[bend right] node[below]{$b$} (Y); 
    \end{scope}
    \begin{scope}[xshift=4cm]
    \node[state,initial] (X) at (0,0) {$t_0$};
    \node[state] (Y) at (1,-.5) {$t_1$};
    \node[state] (Z) at (1,.5) {$t_2$};
    \path[->] (X) edge node[above]{$a$} (Z); 
    \path[->] (X) edge node[below]{$a$} (Y); 
    \end{scope}
  \end{tikzpicture}
\figspacee
\end{figure}

Hence the DMTS composition of~\cite{DBLP:conf/atva/BenesCK11} is a DMTS
over-approximation of the NAA composition, and translating from DMTS to
NAA before composing (and back again) will generally give a
tighter specification.  However, as noted already
in~\cite{DBLP:conf/ershov/HuttelL89}, MTS composition itself is an
over-approximation, in the sense that there will generally be
implementations $I\in \impl{ S_1\| S_2}$ which cannot be written $I=
I_1\| I_2$ for $I_1\in \impl{ S_1}$ and $I_2\in \impl{ S_2}$; the same
is the case for NAA and DMTS.

\subsection{Quotient}
\label{sec:quotient}

We now present one of the central contributions of this paper, the
construction of quotient. The quotient $S\by T$ is to be the most
general specification that, when composed with $T$, refines $S$. In
other words, it must satisfy the property that for all specifications
$X$, $X \mr S\by T$ iff $X \parallel T \mr S$.  Quotient has been
defined for deterministic MTS and for deterministic acceptance automata
in~\cite{DBLP:journals/entcs/Raclet08}; here we extend it to the
nondeterministic case (\ie~NAA).
The construction incurs an exponential blow-up, which however is local
and depends on the degree of nondeterminism.  We also provide a quotient
construction for nondeterministic MTS; this is useful because MTS
encodings for NAA can be very compact.




Let $(S, S^0, \Tran_S)$, $(T, T^0, \Tran_T)$ be two NAA.  We define the
quotient $S\by T=(Q,\{ q^0\}, \Tran_Q)$.  Let $Q= \Pfin{ S\times T}$
and $q^0=\{( s^0, t^0)\mid s^0\in S^0, t^0\in T^0\}$.  States in $Q$
will be written $\{ s_1\by t_1,\dots, s_n\by t_n\}$ instead
of $\{( s_1, t_1),\dots,( s_n, t_n)\}$.

In the following, we use the notation $x\in\in z$ as a shortcut
for the fact that there exists $y$ with $x\in y\in z$.
We first define $\Tran_Q(\emptyset) = 2^{\Sigma \times \{ \emptyset
  \}}$.  This means that the empty set of pairs is the universal state
$\top$.  Now let $q=\{ s_1\by t_1,\dots, s_n\by t_n\}\in Q$.  We first
define the auxiliary set of possible transitions $\PosTran(q)$ as
follows.  For $x\in S\cup T$, let $\alpha( x) = \{ a\in \Sigma\mid
\exists y: (a,y) \in\in \Tran(x) \}$ and $\gamma(q) = \bigcap_i
\big(\alpha(s_i) \cup( \Sigma\setminus \alpha(t_i))\big)$.  Let further
$\pi_a(X) = \{ x \mid (a,x) \in X \}$.  

Let now $a\in \gamma(q)$. For all $i\in\{ 1,\dots, n\}$, let $\{ t_{ i,
  1},\dots, t_{ i, m_i}\} = \pi_a(\bigcup \Tran_T(t_i))$ be the possible
next states from $t_i$ after an $a$-transition, and define
\begin{multline*}
  \PosTran_a( q)=\big\{\{ s_{ i, j}\by t_{ i, j}\mid i\in\{
  1,\dots, n\},
  j\in\{ 1,\dots, m_i\}\}\mid \\
  \forall i\in\{ 1,\dots, n\}: \forall j\in\{ 1,\dots, m_i\}: 
  (a,s_{ i, j})\in\in \Tran_S(s_i)\big\}
\end{multline*}
and $\PosTran(q) = \bigcup_{a \in \Sigma} (\{a\} \times \PosTran_a(q))$.
Hence $\PosTran_a( q)$ contains sets of possible next quotient states
after an $a$-transition, each obtained by combining the $t_{ i, j}$ with
some permutation of possible next $a$-states in $S$.
We then define
$$\Tran_Q(q) = \{ X \subseteq \PosTran(q)\mid \forall i :
\forall Y \in \Tran_T(t_i) : X\triangleright Y \in \Tran_S(s_i) \},$$
where the operator $\triangleright$ is defined by $\{s_1\by t_1, \dots,
s_k\by t_k\}\triangleright t_\ell = s_\ell$ and $X\triangleright Y = \{
(a, x\triangleright y) \mid (a,x) \in X, (a,y) \in Y \}$.  Hence
$\Tran_Q( q)$ contains all sets of (possible) transitions which are
compatible with all $t_i$ in the sense that (the projection of) their
parallel composition with any set $Y\in \Tran_T( t_i)$ is in $\Tran_S(
s_i)$.

\begin{theorem}
  \label{thm:bfs-quotient}
  For all NAA $S$, $T$ and $X$, $X \| T \mr S$ iff $X \mr S \by T$.
\end{theorem}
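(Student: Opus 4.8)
The plan is to prove both implications of the biconditional $X\|T\mr S \iff X\mr S\by T$ directly, by exhibiting witnessing modal refinement relations. In each direction I would start from a modal refinement $R$ on one side and construct a candidate relation on the other side whose definition is essentially read off from the construction of $S\by T$; the bulk of the work is then to verify that the two clauses of modal refinement (the ``forth'' and ``back'' matching of transition sets) hold, which amounts to unwinding the definitions of $\PosTran$, $\triangleright$, $\gamma$ and $\pi_a$.

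For the ``if'' direction, assume $R\subseteq X\times(S\by T)$ is an initialised modal refinement witnessing $X\mr S\by T$. I would define a relation $R'\subseteq(X\times T)\times S$ by putting $\big((x,t),s\big)\in R'$ whenever there is a quotient state $q=\{s_1\by t_1,\dots,s_n\by t_n\}$ with $(x,q)\in R$ and some index $i$ with $t_i=t$ and $s_i=s$ — intuitively, $s$ is ``$q$ evaluated at the component $t$'', i.e. $s = q\triangleright t$ along that coordinate. To check that $R'$ is a modal refinement between $X\|T$ and $S$: given $(x,t)\in X\times T$ with $\big((x,t),s\big)\in R'$ via $q$ and index $i$, and given an admissible set $M\in\Tran_{X\|T}((x,t))$, this $M$ has the form $M_X\|M_T$ with $M_X\in\Tran_X(x)$, $M_T\in\Tran_T(t)$. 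Using $(x,q)\in R$ I get a matching $X\in\Tran_Q(q)$ for $M_X$; by the definition of $\Tran_Q(q)$ this $X$ satisfies $X\triangleright Y\in\Tran_S(s_i)$ for every $Y\in\Tran_T(t_i)=\Tran_T(t)$, in particular for $Y=M_T$. I would then argue that $M_S := X\triangleright M_T$ is the required matching set in $\Tran_S(s)$, and that the element-wise correspondence needed by the two modal-refinement clauses transfers: a pair $(a,(x',t'))\in M$ came from $(a,x')\in M_X$ and $(a,t')\in M_T$; the matching of $M_X$ against $X$ gives $(a,q')\in X$ with $(x',q')\in R$, and the shape of $\PosTran_a(q)$ forces $q'$ to contain a pair $s'\by t'$ with $(a,s')\in\in\Tran_S(s)$, giving the target $s' = q'\triangleright t'$ with $\big((x',t'),s'\big)\in R'$. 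The reverse inclusion (every $(a,s')\in M_S$ is covered) is checked symmetrically from $M_S = X\triangleright M_T$. Initialisation: one traces the initial state $q^0 = \{s^0\by t^0\mid s^0\in S^0,t^0\in T^0\}$ through the same reasoning.

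For the ``only if'' direction, assume $R\subseteq(X\times T)\times S$ witnesses $X\|T\mr S$. I would define $R''\subseteq X\times(S\by T)$ by $(x,q)\in R''$ for $q=\{s_1\by t_1,\dots,s_n\by t_n\}$ iff $\big((x,t_i),s_i\big)\in R$ for all $i$. Now fix $(x,q)\in R''$ and $M_X\in\Tran_X(x)$; I must produce a matching $X\in\Tran_Q(q)$. The natural candidate is to take, for each letter $a$ and each $i$, the next-states of $M_X$ under $a$ paired with the $a$-successors $t_{i,j}$ of $t_i$ listed in the definition of $\PosTran_a(q)$, assembling these into a set $X\subseteq\PosTran(q)$; one then verifies $a\in\gamma(q)$ for each such $a$ (this uses that $M_X$ is matched against $M_X\|M_T$ for suitable $M_T\in\Tran_T(t_i)$ via $R$, so the letters available in $M_X$ are forced to lie in $\alpha(s_i)\cup(\Sigma\setminus\alpha(t_i))$), checks the side condition $(a,s_{i,j})\in\in\Tran_S(s_i)$, and finally checks the defining condition of $\Tran_Q(q)$: for every $i$ and every $Y\in\Tran_T(t_i)$, $X\triangleright Y\in\Tran_S(s_i)$. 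This last point is where I expect the main obstacle: one needs that $X\triangleright Y$ is, up to the element-wise correspondence, exactly the image under the refinement $R$ of the composed admissible set $M_X\| Y\in\Tran_{X\|T}((x,t_i))$, so that membership in $\Tran_S(s_i)$ follows from $\big((x,t_i),s_i\big)\in R$; making ``up to correspondence'' precise — i.e. that $\triangleright$ followed by $\PosTran$ genuinely inverts the parallel composition $\|$ at the level of transition sets — is the delicate bookkeeping step, and it is essentially the reason the quotient is defined the way it is. Once $X$ is in hand, the two modal-refinement clauses for $R''$ are verified by the same element-tracking as in the other direction: a pair $(a,q')\in X$ decomposes, through its components $s'\by t'$, into matched pairs on the $X\|T$ side, yielding $(a,x')\in M_X$ with $(x',q')\in R''$, and conversely. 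Initialisation again follows by applying the construction at $q^0$.

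Throughout, I would lean on Proposition~\ref{pr:bfs.preorder} and on the already-established good behaviour of $\|$ (Lemma~\ref{le:bfs||prop}, Theorem~\ref{th:bfs||indimp}) only implicitly; the theorem is really a self-contained verification. A clean way to organise the write-up is to prove two lemmas, one per direction, each stating ``the relation defined above is an initialised modal refinement'', and to isolate as a preliminary observation the identity relating $\triangleright$, $\PosTran_a$ and $\|$ on transition sets, since it is used in both directions and carries the entire combinatorial weight of the argument.
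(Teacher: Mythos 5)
Your strategy is the same as the paper's: both directions are handled by exhibiting explicit refinement relations, your $R'$ and $R''$ are essentially the relations used in the paper's Lemmas~\ref{lem:bq1} and~\ref{lem:bq3}, and your observation that a compound quotient state refines each of its singleton components is exactly Lemma~\ref{lem:bq0}. The direction $X\mr S\by T\Rightarrow X\|T\mr S$ is correct as you sketch it: taking $M_S=X\triangleright M_T$ and tracking elements through $\PosTran_a(q)$ is precisely the paper's argument.

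The other direction, however, has a genuine gap at exactly the step you flag as ``delicate bookkeeping'' and then leave open. First, your justification ``membership in $\Tran_S(s_i)$ follows from $((x,t_i),s_i)\in R$'' is not available as stated: modal refinement only guarantees, for the admissible set $M_X\|Y\in\Tran_{X\|T}((x,t_i))$, the \emph{existence} of some matching set $K_{j,\ell}\in\Tran_S(s_i)$; it does not make every set that merely corresponds element-wise to $M_X\|Y$ admissible. So the candidate $X$ must be built so that $X\triangleright Y$ is \emph{literally equal} to $K_{j,\ell}$, which forces the $S$-components of the candidate quotient successors to be \emph{chosen} via the matchings: for each $(a,x')\in M_X$ one forms a quotient state $r$ whose component at coordinate $\bar t$ is some $\bar s$ with $(a,\bar s)\in K_{j,\ell}$ and $x'\|\bar t\mr\bar s$ (this per-element-of-$M_X$ grouping is also what makes your ``back'' clause work, since all components of a given $(a,q')\in X$ must be witnessed by a single $x'$). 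Your description ``next-states of $M_X$ paired with the $a$-successors $t_{i,j}$'' never says where the $s_{i,j}$ come from, and the required equality is then proved by two inclusions, as in the paper's verification that $N\triangleright N_{j,\ell}=K_{j,\ell}$. Second, this equality — indeed the well-definedness of ``the $K_{j,\ell}$ responsible for a given successor $(a,\bar t)$'' — relies on the normalisation, stated at the start of the paper's proof, that the elements of $\Tran_T(t)$ are pairwise disjoint (enforceable up to $\mreq$ by duplicating states of $T$). Without it, a successor shared by two sets $N_{j,\ell}\ne N_{j,\ell'}\in\Tran_T(t_j)$ forces one and the same $S$-component of $r$ to serve both $K_{j,\ell}$ and $K_{j,\ell'}$, and the identity you want (that $\triangleright$ inverts $\|$ at the level of transition sets) can fail. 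Your proposal never introduces this assumption, so the pivotal identity you isolate remains unproved; supplying the guided choice of $S$-components and the disjointness normalisation is what turns your outline into a proof.
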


\begin{theorem}
  With operations $\land$, $\lor$, $\|$ and $\by$, the set of NAA forms
  a commutative residuated lattice up to $\mreq$.
\end{theorem}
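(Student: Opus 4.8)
The plan is to assemble the commutative residuated lattice structure from the individual results already established in the paper, treating the quotient axiom (Theorem~\ref{thm:bfs-quotient}) as the only genuinely new ingredient. Recall that a commutative residuated lattice is a structure $(L, \land, \lor, \otimes, \by, \mathbf 1)$ such that $(L, \land, \lor)$ is a lattice, $(L, \otimes, \mathbf 1)$ is a commutative monoid, $\otimes$ is monotone in both arguments, and the residuation law $x\otimes y \le z \iff x \le z\by y$ holds. Here $L$ is the class of NAA taken up to $\mreq$, the order $\le$ is $\mr$, the lattice operations are $\land$ and $\lor$, the monoid operation is structural composition $\|$, the residual is the quotient $\by$, and the monoid unit is the LTS $\mathsf U$ of Lemma~\ref{le:bfs||prop}.

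First I would verify that all the operations are well-defined on $\mreq$-equivalence classes, i.e.\ that $\mr$ is a congruence for $\land$, $\lor$, $\|$ and $\by$. For $\land$ and $\lor$ this is immediate from Theorems~\ref{th:disj-impl} and~\ref{co:conj=glb} (which characterise them as categorical meet and join with respect to $\mr$, so replacing an argument by an $\mreq$-equivalent one cannot change the result up to $\mreq$). For $\|$, congruence is exactly Theorem~\ref{th:bfs||indimp}: if $S_1\mreq S_3$ and $S_2\mreq S_4$ then applying the theorem in both directions gives $S_1\|S_2\mreq S_3\|S_4$. For $\by$, congruence in both arguments follows from the residuation law itself by the standard Galois-connection argument: if $T\mreq T'$, then for every $X$ we have $X\|T\mr S \iff X\|T'\mr S$ (using congruence of $\|$), hence $X\mr S\by T \iff X\mr S\by T'$ for all $X$, and taking $X = S\by T$ and $X = S\by T'$ yields $S\by T\mreq S\by T'$; the $S$-argument is handled the same way.

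Next I would collect the lattice and monoid axioms. That $(L,\land,\lor)$ is a lattice (indeed a bounded distributive one) up to $\mreq$ is Theorem~\ref{th:lattice}. That $(L,\|,\mathsf U)$ is a commutative monoid up to $\mreq$, i.e.\ associativity, commutativity and unit law, is Lemma~\ref{le:bfs||prop}. Monotonicity of $\|$ in both arguments is the special case $S_2 = S_4$ (resp.\ $S_1 = S_3$) of Theorem~\ref{th:bfs||indimp}. The residuation law $X\|T\mr S \iff X\mr S\by T$ is precisely Theorem~\ref{thm:bfs-quotient}. Finally, the distributivity of $\|$ over $\lor$, which is also part of Lemma~\ref{le:bfs||prop}, is actually a consequence of residuation (left adjoints preserve joins) but is available directly in any case; it is not required for the definition of a commutative residuated lattice but is worth noting as a bonus. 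Assembling these facts, every defining property of a commutative residuated lattice is met, so the result follows.

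There is essentially no obstacle here once Theorem~\ref{thm:bfs-quotient} is in hand: the proof is a bookkeeping exercise that quotes the preceding lemmas and theorems and checks the axioms of the algebraic structure one by one. The only point requiring a half-line of argument rather than a pure citation is the well-definedness of $\by$ on $\mreq$-classes, which as indicated above is the usual "adjoints respect the equivalence" observation. I would therefore present the proof as a short enumeration: lattice (Thm~\ref{th:lattice}), commutative monoid with unit (Lemma~\ref{le:bfs||prop}), bi-monotonicity of $\|$ (Thm~\ref{th:bfs||indimp}), residuation (Thm~\ref{thm:bfs-quotient}), plus the one-line congruence remark for $\by$, and conclude. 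The same statement for DMTS and \rhml then follows via the translations $\db$, $\bd$, $\bh$ of Section~\ref{sec:equiv}, since those are equivalences of specification theories up to $\treq$ and $\mreq$.
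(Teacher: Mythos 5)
Your proposal is correct and matches the paper's (omitted, but clearly intended) argument: the theorem is a direct assembly of Theorem~\ref{th:lattice} for the lattice structure, Lemma~\ref{le:bfs||prop} for the commutative monoid with unit $\mathsf U$, Theorem~\ref{th:bfs||indimp} for monotonicity of $\|$, and Theorem~\ref{thm:bfs-quotient} for the residuation law, all up to $\mreq$. Your extra remark on well-definedness of $\by$ on $\mreq$-classes via the Galois-connection argument is a sound and welcome addition rather than a deviation.
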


This theorem makes clear the relation of NAA to linear
logic~\cite{DBLP:journals/tcs/Girard87}: except for completeness of the
lattice induced by $\land$ and $\lor$ (\cf~Theorem~\ref{th:lattice}),
NAA form a \emph{commutative unital Girard
  quantale}~\cite{DBLP:journals/jsyml/Yetter90}, the standard algebraic
setting for linear logic.  Completeness of the lattice can be obtained
by allowing infinite conjunctions and disjunctions (and infinite NAA).

\subsection{Quotient for MTS} 

We now give a quotient algorithm for the important special case of MTS,
which results in a much more compact quotient than the NAA construction
in the previous section.  However, MTS are not closed under quotient;
\cf~\cite[Thm.~5.5]{DBLP:conf/concur/Larsen90}. We show that the quotient of two
MTS will generally be a DMTS.

Let $( S, s^0, \omay_S, \omust_S)$ and $( T, t^0, \omay_T, \omust_T)$ be
nondeterministic MTS.  We define the quotient $S\by T=( Q, \{q^0\},
\omay_Q, \omust_Q)$.  We let $Q= \Pfin{ S\times T}$ as before, and
$q^0=\{( s^0, t^0)\}$.
%
The state $\emptyset\in Q$ is again universal, so we define $\emptyset
\may{a} \emptyset$ for all $a \in \Sigma$.  There are no must
transitions from $\emptyset$.

Let $\alpha(s)$, $\gamma(q)$ be as in the previous section.
For convenience, we work with sets $\May_a(s)$, for $a\in \Sigma$ and
states $s$, instead of may transitions, \ie~we have $\May_a( s)=\{ t\mid
s\may a t\}$.

Let $q=\{ s_1\by t_1,\dots, s_n\by t_n\}\in Q$
and $a\in \Sigma$. First we define the may~transitions.  If $a \in
\gamma(q)$ then for each $i\in\{ 1,\dots, n\}$, write $\May_a( t_i)=\{
t_{ i, 1},\dots, t_{ i, m_i}\}$, and define
\begin{multline*}
  \May_a( q)= \big\{\{ s_{ i, j}\by t_{ i, j}\mid i\in\{ 1,\dots, n\},
  j\in\{ 1,\dots, m_i\}\}\mid \\
  \qquad\, \forall i\in\{ 1,\dots, n\}: \forall j\in\{ 1,\dots, m_i\}: s_{ i,
    j}\in \May_a( s_i)\big\}.
\end{multline*}

For the (disjunctive) must-transitions, we let, for every
$s_i\must{a}s'$,
\begin{equation*}
  q\must{} \{(a, M) \in \{a\}\times\May_a( q)\mid \exists t': s'\by
  t'\in M,\ t_i\must a t'\}.
\end{equation*}


\begin{example}
  We illustrate the construction on an example.  Let $S$ and $T$ be the
  MTS in the left part of Fig.~\ref{fig:mtsquotient}.  We construct
  $S\by T$; the end result is displayed in the right part of the
  figure.

\begin{figure}[t]
  \centering
  \begin{tikzpicture}[x=1.5cm,y=0.6cm,font=\footnotesize,
    ->,>=stealth',
    state/.style={shape=circle,draw,font=\scriptsize,inner sep=.5mm,outer
      sep=0.8mm, minimum size=0.3cm,initial text=,initial
      distance=2ex}]
    \begin{scope}
      \node[state,initial] (s) at (0,0) {$s^0$};
      \node[state] (s1) at (1,0.5) {$s_1$};
      \path (s)	edge [->] node[above]{$a$}	(s1);
      \node[state] (s2) at (1,-0.5) {$s_2$};
      \path (s)	edge [->,densely dashed] node[below]{$a$}	(s2);
      \node (end) at (2,0.5) {$\bullet$};
      \path (s1) edge [->] node[above]{$b$}	(end);
    \end{scope}
    \begin{scope}[yshift=1.3cm]
      \node[state,initial] (s) at (0,0) {$t^0$};
      \node[state] (s1) at (1,0.5) {$t_1$};
      \path (s)	edge [->] node[above]{$a$}	(s1);
      \node[state] (s2) at (1,-0.5) {$t_2$};
      \path (s)	edge [->,densely dashed] node[below]{$a$}	(s2);
      \node (end) at (2,0.5) {$\bullet$};
      \path (s1)	edge [->] node[above]{$b$}	(end);
      \node (end) at (2,-0.5) {$\bullet$};
      \path (s2)	edge [->,densely dashed] node[below]{$c$}	(end);
    \end{scope}
  \end{tikzpicture}%
%
  \begin{tikzpicture}[x=2.5cm,y=1cm,font=\footnotesize,
    ->,>=stealth',
    state/.style={shape=circle,draw,font=\scriptsize,inner sep=.5mm,outer
      sep=0.8mm, minimum size=0.5cm,initial text=,initial
      distance=2ex,rectangle,rounded corners}]
    \node[state,initial] (s) at (0,0) {$s^0\by t^0$};
    \node[state] (s1) at (1,1) {$\{s_1\by t_1,s_2\by t_2\}$};
    \path (s)	edge [->] node[above]{$a$}	(s1);
    \node[state] (s2) at (1,-1) {$\{s_2\by t_1,s_2\by t_2\}$};
    \path (s)	edge [->,densely dashed] node[below]{$a$}	(s2);
    \node[state,shape=circle] (end) at (2,0) {$\top$};
    \path (s1)	edge [->] node[above]{$b$}	(end);
    \path (s1.east)	edge [->,bend left,densely dashed]
    node[above]{$a$}	(end);
    \path (s2)	edge [->,densely dashed] node[below]{$a$}	(end);
    \path (s)	edge [->,densely dashed] node[below]{$b,c$}	(end);
    \path (end)	edge [loop right,->,densely dashed]
    node[right]{$a, b, c$} (end);
  \end{tikzpicture}
  \caption{%
    \label{fig:mtsquotient}
    Two nondeterministic MTS and their quotient}
\figspacee
\end{figure}
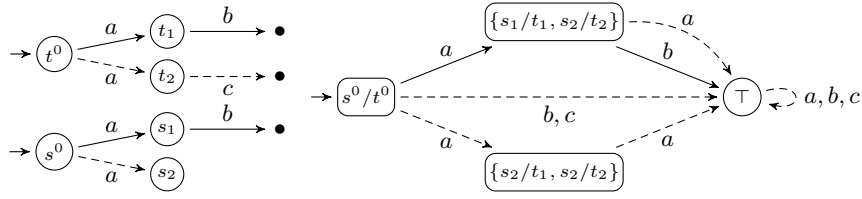

First we construct the may-successors of $s^0\by t^0$. Under $b$ and $c$
there are no constraints, hence we go to $\top$. For $a$, we have all
permutations of assignments of successors of $s$ to successors of $t$,
namely $\{s_1\by t_1,s_1\by t_2\}$, $\{s_1\by t_1,s_2\by t_2\}$,
$\{s_2\by t_1,s_1\by t_2\}$ and $\{s_2\by t_1,s_2\by t_2\}$. Since there
is a must-transition from $s$ (to $s_1$), we create a disjunctive
must-transition to all successors that can be used to yield a
must-transition when composed with the must-transition from $t$ to
$t_1$. These are all successors where $t_1$ is mapped to $s_1$, hence
the first two. However, $\{s_1\by t_1,s_1\by t_2\}$ will turn out
inconsistent, as it requires to refine $s_1$ by a composition with
$t_2$. As $t_2$ has no must under $b$, the composition has none either,
hence the must of $s_1$ can never be matched. As a result, after
pruning, the disjunctive must from $\{s^0\by t^0\}$ leads only to
$\{s_1\by t_1,s_2\by t_2\}$. Further, $\{s_2\by t_1,s_1\by t_2\}$ is
inconsistent for the same reason, so that we only have one other
may-transition under $a$ from $\{s^0\by t^0\}$.

Now $\{s_1\by t_1,s_2\by t_2\}$ is obliged to have a must under $b$ so
that it refines $s_1$ when composed with $t_1$, but cannot have any
$c$ in order to match $s_2$ when composed with $t_2$. Similarly,
$\{s_2\by t_1,s_2\by t_2\}$ has neither $c$ nor $b$.
One can easily verify that $T\|( S\by T)\mreq S$ in this case.
\end{example}

Note that the constructions may create inconsistent states, which have
no implementation. In order to get a consistent system, it needs to be
pruned. This is standard and the details can be found in
Appendix~\ref{app:pruning}. The pruning can be done in polynomial time.

\begin{theorem}
  \label{thm:mts-quotient}
  For all MTS $S$, $T$ and $X$, $X\mr S\by T$ iff $T\| X\mr S$.
\end{theorem}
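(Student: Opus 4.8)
The plan is to prove the two implications separately, in each direction by turning the assumed modal refinement into a witnessing modal refinement for the other side --- the same strategy as in Theorem~\ref{thm:bfs-quotient}. Throughout I write $S\by T=(Q,\{q^0\},\omay_Q,\omust_Q)$ with $Q=\Pfin{S\times T}$ and $q^0=\{(s^0,t^0)\}$, and I use that for MTS $T$ and $X$ the composition $T\|X$ is the ordinary modal product, i.e.\ it is again an MTS with $(t,x)\may a(t',x')$ iff $t\may a t'$ and $x\may a x'$, and $(t,x)\must a(t',x')$ iff $t\must a t'$ and $x\must a x'$; this is exactly what the earlier remark on MTS versus NAA composition gives. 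Since $\|$ is commutative up to $\mreq$ (Lemma~\ref{le:bfs||prop}), it suffices to treat $T\|X\mr S$.

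For the ``if'' direction I would fix an initialised modal refinement $R\subseteq(T\times X)\times S$ witnessing $T\|X\mr S$ and define $R'\subseteq X\times Q$ by declaring $(x,q)\in R'$ iff $((t_i,x),s_i)\in R$ for every $s_i\by t_i\in q$. Initialisation is immediate from $((t^0,x^0),s^0)\in R$. For the may-clause one starts from $x\may a x'$ and first checks $a\in\gamma(q)$: each block $i$ either has an $a$-may-successor $t_i'$, in which case $(t_i,x)\may a(t_i',x')$ forces $a\in\alpha(s_i)$ through $R$, or has none, so $a\notin\alpha(t_i)$. Then one builds a target $q'\in\May_a(q)$ block by block --- for every $t_i\may a t_{i,j}$ pick, via $R$, a state $s_{i,j}$ with $s_i\may a s_{i,j}$ and $((t_{i,j},x'),s_{i,j})\in R$ --- so that $q\may a q'$ and $(x',q')\in R'$. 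For the must-clause, a disjunctive must $q\must{} N$ of $S\by T$ is generated by some $s_i\must a s'$ with $s_i\by t_i\in q$; applying $R$ to $((t_i,x),s_i)$ yields $t_i\must a t'$ and $x\must a x'$ with $((t',x'),s')\in R$; one then takes the disjunctive must $\{(a,x')\}$ of $X$ and builds, as above but forcing the pair for the successor $t'$ to be $s'\by t'$, a set $M\in\May_a(q)$ with $s'\by t'\in M$ and $(x',M)\in R'$; since $t_i\must a t'$, this $(a,M)$ lies in $N$. (Should $t_i$ have no $a$-must, then $q\must{}\emptyset$ and $q$ is inconsistent, but the argument just given shows this cannot co-occur with $(x,q)\in R'$, so such states never enter $R'$ and pruning plays no role.)

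For the ``only if'' direction I would fix $R\subseteq X\times Q$ witnessing $X\mr S\by T$ and put $((t,x),s)\in R'$ iff there is $q$ with $(x,q)\in R$ and $s\by t\in q$. Initialisation follows from $(x^0,q^0)\in R$ together with $q^0=\{(s^0,t^0)\}$. For the may-clause, $(t,x)\may a(t',x')$ gives $t\may a t'$ and $x\may a x'$; the may-clause of $R$ gives $q\may a q'$ with $(x',q')\in R$; since $q'\in\May_a(q)$ and $t'$ is one of the $a$-may-successors of $t$, the block of $q'$ coming from $s\by t$ contains some $s'\by t'$ with $s\may a s'$, and $q'$ itself witnesses $((t',x'),s')\in R'$. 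For the must-clause, given $s\must a s'$ choose $q$ with $s\by t\in q$ and $(x,q)\in R$; the construction attaches to $q$ a disjunctive must $q\must{} N$ coming from $s\must a s'$ in that block, and the must-clause of $R$ returns a disjunctive must $\{(a,x')\}$ of $X$ and $(a,M)\in N$ with $(x',M)\in R$; by the definition of $N$, $M$ contains $s'\by t'$ for some $t'$ with $t\must a t'$, hence $(t,x)\must a(t',x')$ in $T\|X$, and $M$ witnesses $((t',x'),s')\in R'$, matching the (singleton) must $\{(a,s')\}$ of $S$.

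The step I expect to be the real obstacle is the interplay between the disjunctive must-transitions of $S\by T$ and the must-transitions of the product, which appears in two mirror-image places: in ``only if'' one must extract from ``$X$ refines into the chosen disjunct $M$'' a \emph{concrete} pair $s'\by t'\in M$ with $t\must a t'$ that is precise enough to form a must-transition of $T\|X$; in ``if'' one must instead assemble a \emph{single} $M\in\May_a(q)$ that is at once a legal may-target, a member of the disjunctive must selected for $s_i\must a s'$, and $R'$-related to the matched $X$-successor --- and this is where the block-wise selection through $R$ and the membership $a\in\gamma(q)$ carry the argument. One could alternatively try to derive the result from Theorem~\ref{thm:bfs-quotient} by checking that $\db$ of the MTS quotient is $\mreq$-equivalent to $\db(S)\by\db(T)$, but that forces a head-on comparison of the two quotient constructions and looks no shorter than the direct argument.
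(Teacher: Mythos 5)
Your proposal is correct and matches the paper's own proof in essence: both directions are established by explicitly constructing the witnessing refinement relations (yours $\{(x,q)\mid\forall\, s_i\by t_i\in q:((t_i,x),s_i)\in R\}$ and $\{((t,x),s)\mid\exists q:\, s\by t\in q,\ (x,q)\in R\}$ are just the witness-level versions of the paper's relations in Lemmas~\ref{lem:mq1} and~\ref{lem:mq2}), with the same block-wise assembly of an element of $\May_a(q)$ and the same forced replacement of the block for the must-successor. The only cosmetic difference is that you fold the content of Lemma~\ref{lem:mq0} (a larger quotient state refines its singleton subsets) directly into the definition of your relation instead of invoking it separately.
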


\myspace

\section{Conclusion and Future Work}
\label{sec:conclusion}

In this paper we have introduced a~general specification framework whose
basis consists of three different but equally expressive formalisms: one
of a~graphical behavioural kind (DMTS), one logic-based (\rhml) and one
an intermediate language between the former two (NAA).  We have shown
that the framework possesses a rich algebraic structure that includes
logical (conjunction, disjunction) and structural operations (parallel
composition and quotient). Moreover, the construction of the quotient
solves an open problem in the area of MTS. 
%
As for future work, we hope to establish the exact complexity of the
quotient constructions. We conjecture that the exponential blow-up of
the construction is in general unavoidable.


\bibliographystyle{plain}
\bibliography{IEEEabrv,refs}

\newpage
\appendix

\section*{Appendix: Proofs}
\section{Proofs of Section~\ref{sec:equiv}}

\begin{proof}[Proof of Proposition~\ref{pr:bfs.preorder}]
  For reflexivity of $\mr$, one only needs to see that for any NAA $S$,
  the identity relation $\id_S=\{( s, s)\mid s\in S\}\subseteq S\times
  S$ is a modal refinement from $S$ to $S$.

  To see that $\mr$ is transitive, let $S_1$, $S_2$, $S_3$ be NAA with
  $S_1\mr S_2$ and $S_2\mr S_3$.  Let $R_1$ and $R_2$ be modal
  refinement relations witnessing $S_1 \mr S_2$ and $S_2 \mr S_3$,
  respectively, and define the relation $R_3 \subseteq S_1 \times S_3$
  by $R_3=\{( s_1, s_3)\mid \exists s_2\in S_2:( s_1, s_2)\in R_1,( s_2,
  s_3)\in R_2\}$.  We show that $R_3$ is a modal refinement relation
  witnessing $S_1\mr S_3$.  Remark that as $(s^0_1, s^0_2) \in R_1$ and
  $(s^0_2, s^0_3) \in R_2$, we have $(s^0_1, s^0_3) \in R_3$.

  Let $(s_1,s_3) \in R_3$, then we have $s_2 \in S_2$ such that
  $(s_1,s_2) \in R_1$ and $(s_2,s_3) \in R_2$. Let $M_1 \in
  \Tran_1(s_1)$. By $R_1$, there exists $M_2 \in \Tran_2(s_2)$ such that
  \begin{align}
    \label{eq:mr.1}& \forall( a, t_1)\in M_1: \exists( a, t_2)\in
    M_2:( t_1, t_2)\in R_1\,, \\
    \label{eq:mr.2}& \forall( a, t_2)\in M_2: \exists( a, t_1)\in
    M_1:( t_1, t_2)\in R_1\,.
  \end{align}
  Using $R_2$, we now see that there must be $M_3 \in \Tran_3(s_3)$ for
  which
  \begin{align}
    \label{eq:mr.3}& \forall( a, t_2)\in M_2: \exists( a, t_3)\in
    M_3:( t_2, t_3)\in R_2\,, \\
    \label{eq:mr.4}& \forall( a, t_3)\in M_3: \exists( a, t_2)\in
    M_2:( t_2, t_3)\in R_2\,.
  \end{align}
  
  Now let $(a,t_1) \in M_1$. Using~\eqref{eq:mr.1}, we find $(a,t_2) \in
  M_2$ such that $(t_1,t_2) \in R_1$. By~\eqref{eq:mr.3}, there exists
  $(a,t_3) \in M_3$ such that $(t_2,t_3) \in R_2$, so that also
  $(t_1,t_3) \in R_3$.

  Conversely, let $(a,t_3) \in M_3$. By~\eqref{eq:mr.4}, there must be
  $(a,t_2) \in M_2$ such that $(t_2,t_3) \in
  R_2$. Using~\eqref{eq:mr.2}, we have $(a,t_1) \in M_1$ such that
  $(t_1,t_2) \in R_1$, and then also $(t_1,t_3) \in R_3$.

  To finish the proof, we must see that for all NAA $S$, $\bot\mr S\mr
  \top$.  The empty relation provides a witness for the former, and the
  relation $\{( s, \top)\mid s\in S\}\subseteq S\times \top$ one for the
  latter.
\end{proof}\bigskip

\begin{proof}[Proof of Theorem~\ref{th:equivalence}]
  This follows directly from Lemmas~\ref{th:bfsvsdmts},
  \ref{th:bfsvshml} and~\ref{th:hmltodmts}.
\end{proof}\bigskip

\begin{proof}[Proof of Lemma~\ref{th:bfsvsdmts}]
  The first part of the proof is trivial, as any DMTS $S$ has the same
  state-structure as its corresponding NAA $\db(S)$ and the transition
  relation in $\db(S)$ is just an enumeration of all acceptable
  choices of transitions from $S$.

  For the second part of the lemma, we need to show that for any NAA
  $S$ and any LTS $I$, $I\mr S$ (as NAA) iff $I\mr \bd(
  S)$ (as DMTS).

  Let $S = (S, S^0, \Tran)$ be a NAA and let $\bd( S) = ( T, T^0, \omay,
  \omust)$ be defined as above. Let $I = (I, \{i^0\}, \Tran_I) \equiv (I,
  \{i^0\}, \omay_I, \omust_I)$.

  \medskip \noindent {\bf $\Rightarrow$.} We first prove that $I \mr S
  \Rightarrow I \mr \bd( S)$. Assume that $I \mr S$ with witnessing modal
  refinement relation $R \subseteq I \times S$. Given $i \in I$, let
  $M_i$ be the unique set of transitions such that $\Tran_I(i) =
  \{M_i\}$. By $R$, we know that for all $(i,s) \in R$, there exists
  $M^{i,s} \in \Tran(s)$ such that
  \begin{align*}
    & \forall( a, i')\in M_i: \exists( a, t)\in
    M^{i,s}:( i', t)\in R \\
    & \forall( a, t)\in M^{i,s}: \exists( a, i')\in
    M_i:( i', t)\in R
  \end{align*}

  Given $i \in I$, we denote by $M^{i,s}$ the corresponding set in
  $\Tran(s)$, as given above.  Let $R^D \subseteq I \times T$ be the
  relation such that $(i, M) \in R^D$ iff there is $s\in S$ with $(i,s)
  \in R$ and $M = M^{i,s}$. We show that $R^D$ is a modal refinement.
  Let $(i, M^{i,s}) \in R^D$.
  \begin{itemize}
  \item Let $(a,i')$ such that $i \may{a} i'$, \ie~$i\must a i'$ as $I$
    is an implementation. By construction, we have $(a,i') \in M_i$. By
    $R$, there exists $(a,t) \in M^{i,s}$ such that $(i',t) \in R$. By
    construction of $\bd( S)$, there exists $M^{i,s}\must{} N$ such that
    $( a, M)\in N$ for all $M \in \Tran(t)$. Since $(i',t) \in R$, there
    exists $M^{i',t} \in \Tran(t)$ such that $(i', M^{i',t})\in
    R^D$. Thus, we have $s \may{a} M^{i',t}$ and $(i', M^{i',t})\in
    R^D$.

  \item Let $M^{i,s})\must{} N$. By construction of $\bd( S)$, $N$ is of
    the form $\{(a, M)\mid M \in \Tran(t)\}$ for some $(a,t) \in
    M^{i,s}$. By $R$, there thus exists $i \must{a} i'$ such that
    $(i',t) \in R$. As a consequence, we have $(i', M^{i',t})\in R^D$
    and $(a, M^{i',t})\in N$.
  \end{itemize}

  We have shown that $R^D$ is a modal refinement relation (for
  DMTS).  We proceed to prove that it is initialised.  
  We have $s^0\in S^0$ with $( i^0, s^0)\in R$.  By definition of
  $R^D$, this implies that $( i^0, M^{ i^0, s^0})\in R^D$, but $M^{ i^0,
    s^0}\in \Tran( s^0)$, hence $M^{ i^0, s^0}\in T^0$.

  \medskip \noindent {\bf $\Leftarrow$.} We now prove that $I \mr \bd( S)
  \Rightarrow I \mr S$. Assume that $I \mr \bd( S)$ with witnessing modal
  refinement relation $R^D \subseteq I \times T$. Given $i
  \in I$, let $M_i$ be the unique set of transitions such that
  $\Tran_I(i) = \{M_i\}$. Let $R \subseteq I \times S$ be the relation
  such that $(i,s) \in R$ iff there exists $M\in \Tran(s)$ such that
  $(i, M) \in R^D$. We show that $R$ is a modal refinement.

  Let $(i,s) \in R$ and let $M^{i,s} \in \Tran(s)$ be such that
  $(i, M^{i,s}) \in R^D$.

  \begin{itemize}
  \item Let $(a,i') \in M_i$. By construction, we have $i \may{a} i'$,
    so by $R^D$, there exists $M \in T$ such that $M^{i,s}\may{a} M$. By
    construction of $\bd( S)$, there must exist $M^{i,s}\must{} N$ with
    $(a, M)\in N$. As a consequence, again by construction of $\bd( S)$,
    we must have $t\in S$ with $( a, t)\in M^{i,s}$ and $M\in
    \Tran(t)$. Therefore, there exists $(a,t) \in M^{i,s}$ such that
    $(i',t)\in R$.
  \item Let $(a,t) \in M^{i,s}$. By construction, we have
    $M^{i,s}\must{} N$ with $N=\{( a,M)\mid M\in \Tran(t)\}$. By $R^D$,
    there exists $i \must{a} i'$ such that $(i', M)\in R^D$ for some
    $M$. As a consequence, there exists $(a,i')\in M_i$ such that
    $(i',t) \in R$.
  \end{itemize}

  Hence $R$ is a modal refinement relation (for NAA).  To show that $R$
  is initialised, 
  we have $N^0\in T^0$ with $(
  i^0, N^0)\in R^D$.  But then $N^0\in \Tran( s^0)$ for some $s^0\in
  S^0$, and by definition of $R$, $( i^0, s^0)\in R$.
\end{proof}\bigskip

\begin{proof}[Proof of Lemma~\ref{th:bfsvshml}]
  Let $( S, S^0, \Tran)$ be a~NAA 
  and write $\bh( S)=( S, S^0, \Delta)$.  Let $( I,
  i^0, \omust)$, with $\omust\subseteq I\times \Sigma\times I$, be an
  LTS; we need to show that $I\in \impl S$ iff $I\in \impl \Delta$.
 
  For states $i\in I$, $s\in S$, write $i\mr s$ iff $( I, i, \omust)\in
  \impl{( S,\{ s\}, \Tran)}$, \ie~if the LTS $I$ with its initial state
  replaced by $i$ implements the BFS $S$ with initial state $s$.
  Similarly, write $i\models s$ iff $( I, i, \omust)\in \impl{( S,\{ s\},
    \Delta)}$.  
  We show that 
  $I\in \impl S$ iff $I\in \impl \Delta$.

  We start with the only if part. The proof is done by coinduction.  
  We define the assignment $\sigma : S \to 2^I$ as follows:
  $\sigma(t) = \{j \in I \mid j \mr t\}$. 
  We need to show that for every $s \in S$,
  $\sigma(s) \subseteq \sem{\Delta(x)}\sigma$.
  Let $i \in \sigma(s)$.
  As $i \mr s$, we know that there exists $M \in \Tran(s)$ satisfying
  the conditions of modal refinement.
  For every $(a,t) \in M$ there thus exists $i \must{a} j$
  such that $j \mr t$. This means that $j \in \sigma(t)$ and
  $i \in \sem{\langle a\rangle t}\sigma$. As $(a,t) \in M$ is arbitrary,
  this also means that $i \in \sem{\bigland_{(a,t)\in M} \langle a\rangle t}
  \sigma$.
  Let now $a \in \Sigma$ be arbitrary.
  Due to the first condition of modal refinement, we know that
  for every $i \must{a} j$ there has to be at least one $(a,u) \in M$ 
  (\ie $u \in M_a$) such that $j \mr u$. This means that 
  for every such $j$, 
  $j \in \sigma(u) \subseteq \sem{\biglor_{u \in M_a} u}\sigma$
  and thus $i \in \sem{[a]\left(\biglor_{u \in M_a} u\right)}\sigma$.
  As $a$ was arbitrary, this means that 
  $i \in \sem{\bigland_{a\in\Sigma} [a]\left(\biglor_{u \in M_a} u\right)}
  \sigma$. Together with the previous observation, we have
  $i \in  \sem{\bigland_{(a,t)\in M} \langle a\rangle t \wedge
  \bigland_{a\in\Sigma} [a]\left(\biglor_{u \in M_a} u\right)} 
  \subseteq \sem{\Delta}\sigma$.
  Clearly, 
  there is $s^0 \in S^0$ such that
  $i^0 \in \sigma(s^0)$. Therefore, $I \models \Delta$.


  We now show the if part. We define a~relation $R$ as follows:
  \[ R = \{ (j,t) \mid j \in I, t \in S, j \models t \} \]
  and show that $R$ satisfies the conditions of modal refinement.
  Let $(i,s) \in R$. As $i \models s$ there has to exist some 
  $M \in \Tran(s)$ such that 
  $i \models \bigland_{(a,t)\in M} \langle a\rangle t \wedge
  \bigland_{a\in\Sigma} [a]\left(\biglor_{u \in M_a} u\right)$.
  Let $i \must{a} j$. As 
  $i \models [a]\left(\biglor_{u \in M_a} u\right)$, there has 
  to be some $(a,u) \in M$ such that $j \models u$. The first condition
  of modal refinement is thus met.
  Let further $(a,t) \in M$. As $i \models \langle a\rangle t$, 
  this means that there is some $i \must{a} j$ such that 
  $j \models t$. The second condition of modal refinement is thus
  also met.
  Clearly, $R$ also satisfies the condition of an initialised refinement.
  Thus $I \mr S$. 

\end{proof}\bigskip

\begin{proof}[Proof of Lemma~\ref{le:hmlnormalstrong}]
  It is shown in~\cite{DBLP:journals/tcs/BoudolL92} that any HML formula
  is equivalent to one in \emph{strong normal form}, \ie~either $\ttt$
  or of the form $\biglor_{ i\in I}( \bigland_{ j\in J_i} \langle a_{
    ij}\rangle \phi_{ ij}\land \bigland_{ a\in \Sigma}[ a] \psi_{ i,
    a})$ for HML formulas $\phi_{ ij}$, $\psi_{ i, a}$ which are also in
  strong normal form.  We only need to replace the $\phi_{ ij}$, $\psi_{
    i, a}$ by (new) variables $x_{ ij}$, $y_{ i, a}$ and add
  declarations $\Delta_2( x_{ ij})= \phi_{ ij}$, $\Delta_2( y_{ i, a})=
  \psi_{ i, a}$ to finish the proof.
\end{proof}\bigskip

\begin{proof}[Proof of Lemma~\ref{th:hmltodmts}]
  Let $( x, k)\in S$, with $\Delta( x)=\biglor_{ i\in I}( \bigland_{
    j\in J_i} \langle a_{ ij}\rangle x_{ ij}\land \bigland_{ a\in
    \Sigma}[ a] y_{ i, a})$ and $I\ne \emptyset$.  By construction, the
  characteristic formula~\cite{DBLP:conf/avmfss/Larsen89} of $( x, k)$
  is $\chi( x, k)= \bigland_{ j\in J_i} \langle a_{ ij}\rangle(
  \biglor_{ i\in I_j} x_{ ij})\land \bigland_{ a\in \Sigma}[ a](
  \biglor_{ i\in I_a} y_{ i, a})$.  Distributing the disjunctions over
  the conjunctions, we see that $\Delta( x)= \biglor_k \chi( x, k)$.

  Now let $( I, i^0, \omust)$ be a LTS.  Then
  \begin{align*}
    I\models \Delta &\Leftrightarrow \exists x^0\in
    X^0: i^0\models \Delta( x^0) \\
    &\Leftrightarrow \exists x^0\in X^0: \exists k:
    i^0\models \chi( x^0, k) \\
    &\Leftrightarrow \exists( x^0, k)\in S^0:
    i^0\mr( x^0, k) \\
    &\Leftrightarrow I\mr S,
  \end{align*}
  the next-to-last biimplication holds precisely because $\chi( x^0, k)$ is
  the characteristic formula of $( x, k)$.
\end{proof}\bigskip

\section{Proofs of Section~\ref{sec:lattice}}

\begin{proof}[Proof of Theorem~\ref{th:disj-impl}]
  Let $S_1$ and $S_2$ be DMTS or NAA. Let $I$ be an implementation
  such that $I \in \impl{S_1 \lor S_2}$, i.e. $I \mr S_1 \lor
  S_2$. Let $R$ be the initialised modal refinement witnessing $I \mr
  S_1 \lor S_2$. By construction of $S_1 \lor S_2$, $R$ can be split
  into two relations $R_1 = R \cap S_1$ and $R_2 = R \cap S_2$ such
  that $R = R_1 \cup R_2$. One can then verify that both $R_1$ and
  $R_2$ are modal refinement relations. Depending on the equivalence
  class of the initial state of $I$ (either belonging to $R_1$ or
  $R_2$), one can verify that either $I \mr S_1$ or $I \mr S_2$. As a
  consequence, $I \in \impl{S_1} \cup \impl{S_2}$, thus $\impl{S_1
    \lor S_2} \subseteq \impl{S_1} \cup \impl{S_2}$.
  Conversely, if $I \mr S_1$ (resp $S_2$) with modal refinement
  relation $R$, one can verify that
  $R$ also witnesses $I \mr S_1 \lor S_2$. Thus $\impl{S_1} \cup
  \impl{S_2} \subseteq \impl{S_1 \lor S_2}$.
\end{proof}\bigskip

\begin{proof}[Proof of Lemma~\ref{le:bfs.oneinitial}]
  Write $S=( S, S^0, \Tran)$.  If $S^0= \emptyset$, we can let $T=\{
  t^0\}$ and $\Tran( t^0)= \emptyset$; note that $S\mreq T\mreq \bot$.
  Otherwise, we let $T= S\cup\{ t^0\}$, where $t^0$ is a new state, and
  $\Tran( t^0)= \bigcup_{ s^0\in S^0} \Tran( s^0)$.  Let $R=\id_S\cup\{(
  s^0, t^0)\mid s^0\in S^0\}$, then $R$ is an initialised refinement
  $S\mr T$ and the inverse relation $R^{ -1}$ an initialised refinement
  $T\mr S$.
\end{proof}\bigskip

\begin{proof}[Proof of Lemma~\ref{le:conj-dmts}]
  Let $S_1 = (S_1,S^0_1,\omay_1,\omust_1)$ and $S_2 =
  (S_2,S^0_2,\omay_2,\linebreak[4]\omust_2)$ be DMTS. Let $db(S_1) = (S_1, S^0_1,
  \Tran_1)$ and $\db(S_2) = (S_2, S^0_2, \Tran_2)$ be
  their corresponding NAA. Let $S^{\land} = \db( S_1\land S_2)$ and
  $S_{\land} = \db( S_1)\land \db( S_2)$.  We show that $S^{\land}$
  and $S_{\land}$ are syntactically equivalent.

  First, remark that $S^{\land}$ and $S_{\land}$ have precisely the
  same state-space, which is $S_1 \times S_2$, and initial states,
  which are $S_1^0 \times S_2^0$. We now show that they have the same
  transition functions. Let $\Tran_{\land}$ (resp. $\Tran^{\land}$) be
  the transition function of $S_{\land}$ (resp. $S^{\land}$). 
  Let $(s_1,s_2) \in S_1 \times S_2$ and let $M \subseteq
  \Sigma \times S_1 \times S_2$ be such that $M \in
  \Tran_{\land}((s_1,s_2))$.

  By construction of $\Tran_{\land}$, there must be $M_1 \in
  \Tran_1(s_1)$ and $M_2 \in \Tran_2(s_2)$ such that $M \in M_1 \land
  M_2$, i.e. $\pi_1(M) = M_1$ and $\pi_2(M) = M_2$. We show that $M
  \in \Tran^{\land}((s_1,s_2))$.

  \begin{itemize}
  \item Let $(a,(t_1,t_2)) \in M)$. Since $\pi_1(M) = M_1$ and
    $\pi_2(M) = M_2$, we have $(a,t_1) \in M_1$ and $(a,t_2) \in
    M_2$. As a consequence, there are transitions $s_1 \may{a} t_1$
    and $s_2 \may{a} t_2$ in $S_1$ and $S_2$ respectively. Thus, by
    construction of conjunction of DMTS, there is a transition
    $(s_1,s_2)\may{a}(t_1,t_2)$ in $S_1 \land S_2$.

  \item Let $N \subseteq \Sigma \times S_1 \times S_2$ such that
    $(s_1,s_2) \omust N$ in $S_1 \land S_2$. By construction, $N$ is
    such that either (1) there exists $N_1$ such that $s_1 \omust N_1$
    in $S_1$ and $N = \{( a,( t_1, t_2))\mid( a, t_1)\in N_1,( s_1,
    s_2)\may a( t_1, t_2)\}$, or (2) there exists $N_2$ such that $s_2
    \omust N_2$ in $S_2$ and $N = \{( a,( t_1, t_2))\mid( a, t_2)\in
    N_2,( s_1, s_2)\may a( t_1, t_2)\}$. Assume that (1) holds (case (2)
    being symmetric). Since $M_1 \in \Tran_1(s_1)$, there
    must be $(a,t_1) \in N_1 \cap M_1$. Since $\pi_1(M) = M_1$, there
    must be $t_2 \in S_2$ such that $(a,(t_1,t_2)) \in M$. As a
    consequence, there is $(a,(t_1,t_2)) \in M \cap N$.
  \end{itemize}

  Finally, 
  $M \in \Tran^{\land}((s_1,s_2))$.

  Conversely, we can show that for all $M \in
  \Tran^{\land}((s_1,s_2))$, we also have $M \in
  \Tran_{\land}((s_1,s_2))$ in a similar way. We can thus conclude that
  $\Tran^{\land} = \Tran_{\land}$ and thus that $S^{\land}$ and
  $S_{\land}$ are syntactically equivalent.
\end{proof}\bigskip

To prove \emph{Theorem~\ref{co:conj=glb}}, we need the following lemma:

\begin{lemma}
  \label{th:conj=glb}
  For NAA or DMTS $S_1$, $S_2$, $S_3$, $S_1 \mr S_2 \wedge S_3$ iff $S_1
  \mr S_2$ and $S_1 \mr S_3$.
\end{lemma}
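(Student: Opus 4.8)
The plan is to run the standard ``product is a greatest lower bound'' argument: prove each implication by constructing an explicit modal refinement relation out of the given one(s). I carry this out for NAA; the DMTS case is entirely analogous (or can be obtained from the NAA case via Lemma~\ref{le:conj-dmts}, since $\db$ preserves and reflects modal refinement). Recall that $S_2\wedge S_3$ has state space $S_2\times S_3$ and initial states $S_2^0\times S_3^0$, and write $\pi_1,\pi_2$ for the two projections $\Sigma\times S_2\times S_3\to\Sigma\times S_2$ and $\to\Sigma\times S_3$ used in the definition of conjunction; denote the admissible-set maps of $S_1$, $S_2$, $S_3$ by $\Tran_1$, $\Tran_2$, $\Tran_3$.

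For the forward direction I would start from a witnessing modal refinement $R$ of $S_1\mr S_2\wedge S_3$ and project it: put $R_2=\{(s_1,s_2)\mid\exists s_3:(s_1,(s_2,s_3))\in R\}$, and symmetrically $R_3$. To check that $R_2$ witnesses $S_1\mr S_2$: given $(s_1,s_2)\in R_2$, say via $s_3$, and $M_1\in\Tran_1(s_1)$, take the matching admissible set $M$ of $(s_2,s_3)$ supplied by $R$; by definition of conjunction $\pi_1(M)\in\Tran_2(s_2)$, and a short forward/backward chase shows $\pi_1(M)$ matches $M_1$ through $R_2$ (every $(a,t_1)\in M_1$ has a partner $(a,(t_2,t_3))\in M$, giving $(a,t_2)\in\pi_1(M)$ and $(t_1,t_2)\in R_2$; conversely each $(a,t_2)\in\pi_1(M)$ comes from some $(a,(t_2,t_3))\in M$ whose partner $(a,t_1)\in M_1$ satisfies $(t_1,(t_2,t_3))\in R$, so $(t_1,t_2)\in R_2$). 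Initialisedness of $R_2$ transfers at once from that of $R$; the case of $R_3$ is symmetric. This direction should be routine.

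The converse is where the work lies. From witnesses $R_2$ of $S_1\mr S_2$ and $R_3$ of $S_1\mr S_3$ I would build $R=\{(s_1,(s_2,s_3))\mid(s_1,s_2)\in R_2,\ (s_1,s_3)\in R_3\}$. Given $(s_1,(s_2,s_3))\in R$ and $M_1\in\Tran_1(s_1)$, pick matching $M_2\in\Tran_2(s_2)$ and $M_3\in\Tran_3(s_3)$ from $R_2$ and $R_3$ and glue them along $M_1$:
\[
 M=\{(a,t_2,t_3)\mid (a,t_2)\in M_2,\ (a,t_3)\in M_3,\ \exists t_1:(a,t_1)\in M_1,\ (t_1,t_2)\in R_2,\ (t_1,t_3)\in R_3\}.
\]
The step I expect to be the main obstacle is proving $\pi_1(M)=M_2$ and $\pi_2(M)=M_3$, which is exactly what puts $M$ into the admissible-set family of $(s_2,s_3)$ in $S_2\wedge S_3$: the inclusions ``$\subseteq$'' are immediate, while for ``$\supseteq$'' one must chain the \emph{backward} matching condition of one relation with the \emph{forward} matching condition of the other (for $(a,t_2)\in M_2$: backward-match through $R_2$ to obtain $(a,t_1)\in M_1$ with $(t_1,t_2)\in R_2$, then forward-match that $(a,t_1)$ through $R_3$ to obtain $(a,t_3)\in M_3$ with $(t_1,t_3)\in R_3$, so $(a,t_2,t_3)\in M$; symmetrically for $\pi_2$). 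Once the projection identities hold, the two matching conditions of $M$ against $M_1$ follow directly --- forward from the forward conditions of $R_2$ and $R_3$, backward from the defining $\exists t_1$ clause --- and $R$ is initialised by pairing, for each $s_1^0\in S_1^0$, its $S_2^0$-witness from $R_2$ with its $S_3^0$-witness from $R_3$. For DMTS one repeats this with the may/must presentation of conjunction, the only extra point being that when gluing (disjunctive) must-sets one invokes the standing assumption that every must-transition is also a may-transition to supply the required may-transitions in the product.
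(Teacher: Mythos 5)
Your proposal is correct and follows essentially the same route as the paper's proof: projecting the witnessing relation for the forward direction, and for the converse forming $R=\{(s_1,(s_2,s_3))\mid(s_1,s_2)\in R_2,(s_1,s_3)\in R_3\}$ and gluing $M_2$ and $M_3$ along $M_1$ into exactly the same set $M$. You even spell out the step the paper dismisses with ``by construction'' --- the chained backward/forward matching showing $\pi_1(M)=M_2$ and $\pi_2(M)=M_3$ --- and your sketch of the DMTS case (using that must-transitions are also may-transitions) is sound.
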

\begin{proof}
  We prove the two implications separately.

  \noindent {$\mathbf \Leftarrow.$} Let $S_1, S_2, S_3$ be NAA with $S_i
  = (S_i, s^0_i, \Tran_i)$ and consider the conjunction $S_2 \wedge S_3
  = (S,s^0, \Tran)$. Assume that $S_1 \mr S_2$ with witnessing relation
  $R_2 \subseteq S_1 \times S_2$ and that $S_1 \mr S_3$ with witnessing
  relation $R_3 \subseteq S_1 \times S_3$. We prove that $S_1 \mr (S_2
  \wedge S_3)$. Consider the relation $R \subseteq S_1 \times (S_2
  \times S_3)$ such that $(s_1,(s_2,s_3)) \in R \iff (s_1,s_2) \in R_2
  \land (s_1,s_3) \in R_3$. We prove that $R$ is a modal refinement. Let
  $(s_1,(s_2,s_3)) \in R$ and $M_1 \in \Tran_1(s_1)$. By $R_2$, there
  exists $M_2 \in \Tran_2(s_2)$ such that
  \begin{align}
    & \forall( a, t_1)\in M_1: \exists( a, t_2)\in
    M_2:( t_1, t_2)\in R_2 \label{proof:conj:eq1}\\
    & \forall( a, t_2)\in M_2: \exists( a, t_1)\in M_1:( t_1, t_2)\in
    R_2. \label{proof:conj:eq2}
  \end{align}
  Moreover, by $R_3$, there exists $M_3 \in \Tran_3(s_3)$ such that
  \begin{align}
    & \forall( a, t_1)\in M_1: \exists( a, t_3)\in
    M_3:( t_1, t_3)\in R_3 \label{proof:conj:eq3}\\
    & \forall( a, t_3)\in M_3: \exists( a, t_1)\in M_1:( t_1, t_3)\in
    R_3. \label{proof:conj:eq4}
  \end{align}

  We construct the set $M$ using the following principle: for all
  $(a,t_2) \in M_2$, we know by (\ref{proof:conj:eq2}) that there exists
  $(a,t_1) \in M_1$ such that $(t_1,t_2) \in R_2$.  Given the state
  $t_1$, we know by (\ref{proof:conj:eq3}) that there exists $(a,t_3)
  \in M_3$ such that $(t_1,t_3) \in R_3$. The set $M$ is thus composed
  of the transitions obtained by combining (\ref{proof:conj:eq2}) and
  (\ref{proof:conj:eq3}) and (\ref{proof:conj:eq1}) and
  (\ref{proof:conj:eq4}):
  \begin{multline*}
    M = \{(a,(t_2,t_3))\mid (a,t_2)\in M_2, (a,t_3)\in M_3: \\
     \exists (a,t_1) \in M_1, (t_1,t_2)\in R_2 \land (t_1,t_3) \in
    R_3\}.
  \end{multline*}

  By construction, we know that $M \in \Tran(s_2,s_3)$.
  \begin{itemize}
  \item Let $(a,t_1) \in M_1$. Consider states $t_2$ and $t_3$ given by
    (\ref{proof:conj:eq1}) and (\ref{proof:conj:eq3})
    respectively. Since $(a,t_2) \in M_2$, $(a,t_3) \in M_3$, $(t_1,t_2)
    \in R_2$ and $(t_1,t_3) \in R_3$ we have $(a,(t_2,t_3)) \in M$ and
    $(t_1,(t_2,t_3)) \in R$.
  \item Let $(a,(t_2,t_3)) \in M$. By construction of $M$, there exists
    $(a,t_2) \in M_2$, $(a,t_3) \in M_3$ and $(a,t_1) \in M_1$ such that
    $(t_1,t_2) \in R_2$ and $(t_1,t_3) \in R_3$, thus $(t_1, (t_2,t_3))
    \in R$.
  \end{itemize}
  By construction, we know that $(s^0_1, (s^0_2,s^0_3)) \in R$, thus $R$
  is a modal refinement relation and $S_1 \mr (S_2 \wedge S_3)$.

  \medskip
  \noindent {$\mathbf \Rightarrow.$} Let $S_1, S_2, S_3$ be NAA with
  $S_i = (S_i, s^0_i, \Tran_i)$ and consider the conjunction $S_2 \wedge
  S_3 = (S,s^0, \Tran)$. Assume that $S_1 \mr (S_2 \wedge S_3)$ with a
  witnessing relation $R$. We show that $S_1 \mr S_2$ ($S_1 \mr S_3$ is
  then obtained by symmetry). Let $R_2 \subseteq S_1 \times S_2$ be the
  relation such that $(s_1, s_2) \in R_2 \iff \exists s_3 \in S_3$
  s.t. $(s_1,(s_2,s_3)) \in R$. We show that $R_2$ is a modal refinement
  relation. Let $(s_1,s_2) \in R_2$ and consider $s_3 \in S_3$ such that
  $(s_1,(s_2,s_3)) \in R$. Let $M_1 \in \Tran_1(s_1)$. By $R$, we know
  that there exists $M \in \Tran((s_2,s_3))$ such that
  \begin{align}
    & \forall( a, t_1)\in M_1: \exists( a, (t_2,t_3))\in
    M:( t_1, (t_2,t_3))\in R \label{proof:conj:eq5}\\
    & \forall( a, (t_2,t_3))\in M: \exists( a, t_1)\in M_1:( t_1,
    (t_2,t_3))\in R. \label{proof:conj:eq6}
  \end{align}

  Consider $M_2 = \pi_2( M)$. By construction of $\Tran((s_2,s_3))$, we
  know that $M_2 \in \Tran_2(s_2)$.
  \begin{itemize}
  \item Let $(a,t_1) \in M_1$. By (\ref{proof:conj:eq5}), there exists
    $(a,(t_2,t_3)) \in M$ such that $(t_1,(t_2,t_3)) \in R$. As a
    consequence, we have $(a,t_2) \in M_2 = \pi_2( M)$ and $(t_1,t_2) \in
    R_2$.
  \item Let $(a,t_2) \in M_2$. By construction, there exists $t_3 \in
    S_3$ such that $(a,(t_2,t_3)) \in M$. By (\ref{proof:conj:eq6}),
    there exists $(a,t_1) \in M_1$ such that $(t_1,(t_2,t_3)) \in R$,
    thus $(t_1,t_2) \in R_2$.
  \end{itemize}

  Finally, we know that $(s^0_1,(s^0_2,s^0_3)) \in R$, thus
  $(s^0_1,s^0_2) \in R_2$ and $R_2$ is a modal refinement relation such
  that $S_1 \mr S_2$.
\end{proof}\bigskip

\begin{proof}[Proof of Theorem~\ref{co:conj=glb}]
  The result directly follows from Lemma~\ref{th:conj=glb}. 
  Let $S_1$ and $S_2$ be NAA or DMTS. Let $I \in \impl{ S_1\land
    S_2}$, we thus have $I \mr S_1 \land S_2$. By
  Lemma~\ref{th:conj=glb}, we thus have $I \mr S_1$ and $I \mr S_2$,
  thus $I \in \impl{S_1} \cap \impl{S_2}$. Reversely, if $I \in
  \impl{S_1} \cap \impl{S_2}$, then we have $I \mr S_1$ and $I \mr
  S_2$. By Lemma~\ref{th:conj=glb}, this implies that $I \mr S_1
  \land S_2$, and thus $I \in \impl{S_1 \land S_2}$.
\end{proof}\bigskip

\begin{proof}[Proof of Theorem~\ref{th:lattice}]
  The sets form bounded lattices by standard order-theoretic arguments,
  so only the distributive law remains to be verified.  Let thus $S_1$,
  $S_2$, $S_3$ be DMTS (the argument for NAA is similar); we want to
  show that $S_1\land( S_2\lor S_3)\mreq( S_1\land S_2)\lor( S_1\land
  S_3)$.  The state spaces of both sides are $S_1\times S_2\cup
  S_1\times S_3$, and it is easily verified that the identity relation
  is a two-sided modal refinement.
\end{proof}\bigskip

\begin{proof}[Proof of Lemma~\ref{le:bfs||prop}]
  Associativity and commutativity are clear.  To show distributivity
  over $\lor$, let $S_1$, $S_2$, $S_3$ be NAA. We prove that $S_1\|(
  S_1\lor S_3)\mreq S_1\| S_2\lor S_1\| S_3$; right-distributivity will
  follow by commutativity.  The state spaces of both sides are
  $S_1\times S_2\cup S_1\times S_3$, and it is easily verified that the
  identity relation is a two-sided modal refinement.

  For the claim that $S\| \mathsf{U}\mreq S$ for all NAA $S$, let $u$ be
  the unique state of $\mathsf{U}$ and define $R=\{(( s, u), s)\mid s\in
  S\}\subseteq S\times \mathsf{U} \times S$.  We show that $R$ is a
  two-sided modal refinement.  Let $(( s, u), s)\in R$ and $M\in \Tran(
  s, u)$, then there must be $M_1\in \Tran( s)$ for which $M= M_1\|(
  \Sigma\times\{ u\})$.  Thus $M_1=\{( a, t)\mid( a,( t, u))\in M\}$.
  Then any element of $M$ has a corresponding one in $M_1$, and vice
  versa, and their states are related by $R$.

  For the other direction, let $M_1\in \Tran( s)$, then $M= M_1\|(
  \Sigma\times\{ u\})=\{( a,( t, u))\mid( a, t)\in M_1\}\in \Tran( s,
  u)$, and the same argument applies.
\end{proof}\bigskip

\begin{proof}[Proof of Theorem~\ref{th:bfs||indimp}]
  Let $S_1\mr S_3$ and $S_2\mr S_4$, then $S_1\lor S_3\mreq S_3$ and
  $S_2\lor S_4\mreq S_4$.  By distributivity, $S_3\| S_4\mreq( S_1\lor
  S_3)\|( S_2\lor S_4)\mreq S_1\| S_2\lor S_1\| S_3\lor S_3\| S_2\lor
  S_3\| S_4$, thus $S_1\| S_2\lor S_1\| S_3\lor S_3\| S_2\mr S_3\|
  S_4$.  But $S_1\| S_2\mr S_1\| S_2\lor S_1\| S_3\lor S_3\| S_2$,
  finishing the argument.
\end{proof}\bigskip

\subsection{Proof of the NAA Quotient---Theorem~\ref{thm:bfs-quotient}}

We assume that for each $t \in T$ the elements of $\Tran_T(t)$ are
pairwise disjoint.  This assumption can easily be enforced by expanding
the state space: if $M_1, M_2\in \Tran( t)$ with $( a, u)\in M_1$ and $(
a, u)\in M_2$, we can replace the second occurrence by $( a, u')\in
M_2$, where $u'$ is a new state with $\Tran( u')= \Tran( u)$.

\begin{lemma}\label{lem:bq0}
For all $j$, $\{s_1 \by t_1, \ldots, s_n \by t_n \} \mr \{ s_j \by t_j \}$.
\end{lemma}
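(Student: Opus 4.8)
\medskip
\noindent\textbf{Proof idea.}
The plan is to prove one statement that covers all the inclusions at once: that
\[
  R=\{(q,q')\mid q,q'\in Q,\ q'\subseteq q\}
\]
(ordinary set inclusion in $\Pfin{S\times T}$) is a modal refinement. Lemma~\ref{lem:bq0} is then the instance $q'=\{s_j\by t_j\}$, since $\{s_j\by t_j\}\subseteq\{s_1\by t_1,\dots,s_n\by t_n\}$ whenever $j\in\{1,\dots,n\}$; here $q\mr q'$ is read as ``the sub-NAA rooted at $q$ refines the one rooted at $q'$'', i.e.\ $(q,q')\in R$ for a modal refinement $R$. If $q'=\emptyset$ there is nothing to show, since $\emptyset=\top$ and every state refines $\top$ (Proposition~\ref{pr:bfs.preorder}), so I would concentrate on $q=\{s_1\by t_1,\dots,s_n\by t_n\}$ and $q'=\{s_i\by t_i\mid i\in I'\}$ for a nonempty $I'\subseteq\{1,\dots,n\}$.

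The crucial device is a ``restriction to $I'$'' of possible next states. Fix $M\in\Tran_Q(q)$ and $(a,r)\in M$; then $r\in\PosTran_a(q)$ has the form $r=\{s_{i,j}\by t_{i,j}\mid i\in\{1,\dots,n\},\ j\in\{1,\dots,m_i\}\}$ with $\{t_{i,1},\dots,t_{i,m_i}\}=\pi_a(\bigcup\Tran_T(t_i))$, and I put $r|_{I'}=\{s_{i,j}\by t_{i,j}\mid i\in I',\ j\in\{1,\dots,m_i\}\}$ and $M'=\{(a,r|_{I'})\mid (a,r)\in M\}$. First I would check $M'\subseteq\PosTran(q')$: from $(a,r)\in\PosTran(q)$ we get $a\in\gamma(q)$, and $\gamma(q)\subseteq\gamma(q')$ because $\gamma$ of $q$ is an intersection over the larger index set $\{1,\dots,n\}\supseteq I'$; moreover the parameters $m_i$ and $t_{i,j}$ used in $\PosTran_a(q')$ for $i\in I'$ coincide with those in $\PosTran_a(q)$, and the membership condition $(a,s_{i,j})\in\in\Tran_S(s_i)$ holds for all $i\in I'$ because it holds for all $i\in\{1,\dots,n\}$; hence $r|_{I'}\in\PosTran_a(q')$.

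Then I would verify $M'\in\Tran_Q(q')$, that is, $M'\triangleright Y\in\Tran_S(s_i)$ for all $i\in I'$ and $Y\in\Tran_T(t_i)$. The key identity is $M'\triangleright Y=M\triangleright Y$: for $(a,y)\in Y$ one has $y\in\pi_a(Y)\subseteq\{t_{i,1},\dots,t_{i,m_i}\}$, so the pair of $r$ that $\triangleright$ selects when forming $r\triangleright y$ is one contributed by $t_i$ and, as $i\in I'$, it survives in $r|_{I'}$; hence $r\triangleright y=r|_{I'}\triangleright y$, and summing over $(a,r)\in M$ and $(a,y)\in Y$ gives the equality. Since $M\in\Tran_Q(q)$, $M\triangleright Y\in\Tran_S(s_i)$, whence $M'\triangleright Y\in\Tran_S(s_i)$ as required. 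The two modal-refinement clauses for $(q,q')$ are then immediate: each $(a,r)\in M$ is matched by $(a,r|_{I'})\in M'$ with $(r,r|_{I'})\in R$ since $r|_{I'}\subseteq r$, and each $(a,r')\in M'$ is some $(a,r|_{I'})$ with $(a,r)\in M$, again with $r|_{I'}\subseteq r$. The one step needing real care is the identity $M'\triangleright Y=M\triangleright Y$, as it relies on exactly which pair of $r$ the operator $\triangleright$ returns; this is where the standing assumption that the elements of each $\Tran_T(t_i)$ are pairwise disjoint, together with careful bookkeeping of the index $i\in I'$, does the work.
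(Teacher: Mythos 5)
Your proposal is correct and follows exactly the route of the paper's (one-line) proof: the paper also shows that $\supseteq$ restricted to elements of $Q$ is a modal refinement, declaring the verification straightforward, while you spell out the witnessing sets $M'=\{(a,r|_{I'})\mid(a,r)\in M\}$ and the identity $M'\triangleright Y=M\triangleright Y$ that make it work. So this is the same argument, just with the details filled in.
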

\begin{proof}
We show that $\supseteq$ restricted to elements of $Q$ is a~modal refinement relation, which is straightforward.
\end{proof}

\begin{lemma}\label{lem:bq1}
$ X \mr S \by T \Rightarrow X \parallel T \mr S $
\end{lemma}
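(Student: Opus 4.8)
The plan is to convert a modal refinement witnessing $X \mr S \by T$ into one witnessing $X \parallel T \mr S$. Throughout, write $S \by T = (Q, \{q^0\}, \Tran_Q)$ for the NAA constructed above, and fix an initialised modal refinement $R \subseteq X \times Q$ with $X \mr S \by T$. The candidate relation is
\[
R' = \{\, ((x,t),s) \in (X \times T) \times S \mid \exists q \in Q : (x,q) \in R \text{ and } s \by t \in q \,\}.
\]
First I would record initialisation: the initial states of $X \parallel T$ are the pairs $(x^0, t^0)$ with $x^0 \in X^0$, $t^0 \in T^0$; since $R$ is initialised there is $q^0$ with $(x^0, q^0) \in R$, and as $q^0 = \{ s^0 \by t^0 \mid s^0 \in S^0, t^0 \in T^0 \}$ we may choose $s^0 \in S^0$ so that $s^0 \by t^0 \in q^0$, whence $((x^0,t^0), s^0) \in R'$ with $s^0$ initial in $S$.

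The heart of the proof is checking that $R'$ is a modal refinement for NAA. Fix $((x,t),s) \in R'$, witnessed by a (necessarily non-empty) state $q = \{ s_1 \by t_1, \dots, s_n \by t_n \} \in Q$ with $(x,q) \in R$ and $s = s_k$, $t = t_k$ for some index $k$. A transition set $M \in \Tran_{X \parallel T}((x,t))$ is, by the definition of structural composition, of the form $M = M_X \parallel M_T$ with $M_X \in \Tran_X(x)$ and $M_T \in \Tran_T(t_k)$. Applying $R$ to $M_X$ yields a $P \in \Tran_Q(q)$ that matches $M_X$ in the two senses required by modal refinement. Put $N := P \triangleright M_T$. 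The crucial observation is that $N \in \Tran_S(s)$: indeed, membership $P \in \Tran_Q(q)$ is \emph{defined} to require $P \triangleright Y \in \Tran_S(s_i)$ for every $i$ and every $Y \in \Tran_T(t_i)$, and we instantiate $i = k$, $Y = M_T$. It then remains to verify the two matching conditions between $M$ and $N$, which is a matter of unfolding the definitions of $\parallel$, $\triangleright$ and $\PosTran$. For the forward direction, $(a,(x',t')) \in M$ gives $(a,x') \in M_X$ and $(a,t') \in M_T$; matching $M_X$ against $P$ produces $(a,p') \in P$ with $(x',p') \in R$; since $P \subseteq \PosTran(q)$, the $T$-coordinates occurring in $p'$ are exactly the possible $a$-successors of the $t_i$, so $t'$ occurs among them and $s' := p' \triangleright t'$ is defined, with $(a,s') \in N$ and $s' \by t' \in p'$; together with $(x',p') \in R$ this gives $((x',t'),s') \in R'$. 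The backward direction is symmetric: $(a,s') \in N$ unwinds into $(a,p') \in P$ and $(a,t') \in M_T$ with $s' = p' \triangleright t'$; matching $P$ against $M_X$ gives $(a,x') \in M_X$ with $(x',p') \in R$; then $(a,(x',t')) \in M$ and $((x',t'),s') \in R'$.

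I expect the main obstacle to be the bookkeeping ensuring that $N = P \triangleright M_T$ is a well-defined transition set lying in $\Tran_S(s)$. This leans on two things: the defining clause of $\Tran_Q(q)$ just mentioned, and the standing assumption that the sets in each $\Tran_T(t)$ are pairwise disjoint, which (combined with the shape of the elements of $\PosTran_a(q)$) makes each $T$-coordinate appearing in a $p' \in P$ determine a unique $S$-coordinate, so that $\triangleright$ is unambiguous. A secondary point, easily dispatched, is that every action occurring in $M_X$ must lie in $\gamma(q)$; this is automatic, since a matching $P \subseteq \PosTran(q)$ can only expose actions from $\gamma(q)$, and the hypothesis $X \mr S \by T$ guarantees such a $P$ exists. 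With these in hand, the remaining steps are routine unfolding of the composition and refinement definitions.
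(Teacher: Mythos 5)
Your proof is correct and takes essentially the same route as the paper's: decompose $M = M_X \parallel M_T$, use the assumed refinement to obtain a matching $P \in \Tran_Q(q)$, form $N = P \triangleright M_T$ (which lies in $\Tran_S(s)$ by the defining clause of $\Tran_Q$), and verify the two matching conditions by unfolding $\parallel$ and $\triangleright$. The only cosmetic difference is that the paper's relation $\{(x\parallel t,s)\mid x\mr \{s\by t\}\}$ works with the refinement preorder and descends from the set state $q$ to the singleton via Lemma~\ref{lem:bq0} and transitivity, whereas you inline that step by composing the fixed witnessing relation $R$ with the containment $s\by t\in q$.
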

\begin{proof}
Assume that $X \mr S \by T$.
We let
\[ \mathsf R = \{ (x \parallel t, s) \mid x \mr s \by t \} \]
and show that $\mathsf R$ is a~modal refinement relation.

Let now $(x \parallel t, s) \in \mathsf R$ and let $M \in \Tran(x\parallel t)$.
This means that $M = M_1 \parallel M_2$ where $M_1 \in \Tran(x)$ and 
$M_2 \in \Tran(t)$.
As we know that $x \mr s\by t$, for $M_1$ there has to exist corresponding
$N \in \Tran(s\by t)$ satisfying the conditions of modal refinement (*).
Let now $N' = combine(N,M_2)$ correspond to $M$. We prove the two conditions:
\begin{itemize}
\item Let $(a,x'\parallel t') \in M$. Then $(a,x') \in M_1$ and $(a,t') \in M_2$.
Due to (*) there has to exist $(a,q) \in N$ with $x' \mr q$ 
where $q = \{ s_1 \by t_1, \ldots,
s_k \by t_k \}$. Due to the construction of the quotient, there has to
be some $j$ such that $t' = t_j$. Therefore, $(a,s_j) \in N'$.
Due to Lemma~\ref{lem:bq0}, $x' \mr q$ implies $x' \mr s_j \by t_j$
and thus $(x'\parallel t_j,s_j) \in \mathsf R$.
\item Let $(a,s') \in N'$. This means that $(a,q) \in N$
with $s'\by t' \in q$
and $(a,t') \in M_2$. Due to (*) there has to exist $(a,x') \in M_1$ with
$x' \mr q$. Therefore, $(a,x'\parallel t') \in M_1 \parallel M_2 = M$.
Again, $x' \mr q$ implies $x' \mr s'\by t'$ and thus
$(x'\parallel t_j,s_j) \in \mathsf R$.
\end{itemize}
Obviously, as $x_0 \mr s_0 \by t_0$, we have 
$(x_0 \parallel t_0, s_0) \in \mathsf R$.
Therefore $X \parallel T \mr S$.
\end{proof}

%
%

\begin{lemma}\label{lem:bq3}
$ X \parallel T \mr S \Rightarrow X \mr S \by T $
\end{lemma}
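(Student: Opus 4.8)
The plan is to produce an explicit initialised modal refinement witnessing $X \mr S\by T$. Fix a modal refinement relation $\mathsf R_0 \subseteq (X\times T)\times S$ witnessing $X\parallel T\mr S$, and set
\[
  \mathsf R = \bigl\{\,\bigl(x,\{s_1\by t_1,\dots,s_n\by t_n\}\bigr) \bigm| \forall i\in\{1,\dots,n\}\colon (x\parallel t_i,s_i)\in\mathsf R_0\,\bigr\},
\]
where the case $n=0$ (the state $\emptyset=\top$) is vacuously included, so that $(x,\emptyset)\in\mathsf R$ for every $x$. This is the analogue of the relation $\{(x\parallel t,s)\mid x\mr s\by t\}$ used in Lemma~\ref{lem:bq1}. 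Once $\mathsf R$ is shown to be an initialised modal refinement, combining it with Lemma~\ref{lem:bq1} yields Theorem~\ref{thm:bfs-quotient}.

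To verify the refinement, take $(x,q)\in\mathsf R$; the case $q=\emptyset$ is immediate, so let $q=\{s_1\by t_1,\dots,s_n\by t_n\}$ with $n\geq 1$ and fix $M_1\in\Tran_X(x)$. I must exhibit $X'\in\Tran_Q(q)$ matching $M_1$ in both directions. For each $i$ and each $M_2\in\Tran_T(t_i)$ we have $M_1\parallel M_2\in\Tran(x\parallel t_i)$, so, as $(x\parallel t_i,s_i)\in\mathsf R_0$, there is a set $M_S^{i,M_2}\in\Tran_S(s_i)$ matching $M_1\parallel M_2$ in the two senses of modal refinement. Here the standing assumption that the elements of $\Tran_T(t_i)$ are pairwise disjoint is exactly what lets me refer, for an action $a$ and an $a$-successor $t'$ of $t_i$, to \emph{the} $M_2\in\Tran_T(t_i)$ containing $(a,t')$; one also checks that any $a$ occurring in $M_1$ lies in $\gamma(q)$, so that $\PosTran_a(q)$ is defined. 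Now, for every $(a,x')\in M_1$, I collect every $q'\in\PosTran_a(q)$ such that $(x'\parallel t',s')\in\mathsf R_0$ for all $s'\by t'\in q'$ — at least one such $q'$ exists, since each slot of $\PosTran_a(q)$ can be filled by a witness drawn from the relevant $M_S^{i,M_2}$ — and I put all the corresponding transitions $(a,q')$ into $X'$. The two matching conditions of $X'$ against $M_1$ then hold by construction and the definition of $\mathsf R$: each $(a,x')\in M_1$ receives a partner $(a,q')$ with $(x',q')\in\mathsf R$, and conversely each $(a,q')\in X'$ was produced from such a partner.

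The crux, and the step I expect to be the main obstacle, is showing $X'\in\Tran_Q(q)$, i.e. $X'\triangleright Y\in\Tran_S(s_i)$ for all $i$ and all $Y\in\Tran_T(t_i)$; I expect $X'\triangleright Y=M_S^{i,Y}$, proved by two inclusions. The inclusion $X'\triangleright Y\subseteq M_S^{i,Y}$ is bookkeeping: each $q'\triangleright y$ with $(a,y)\in Y$ is, by construction of $q'$, a witness that was selected out of $M_S^{i,Y}$. The reverse inclusion is the delicate one: given $(a,s')\in M_S^{i,Y}$, matching against $M_1\parallel Y$ produces $(a,(x',t'))\in M_1\parallel Y$ with $(x'\parallel t',s')\in\mathsf R_0$, and one must ensure that among the $q'$ built from this particular $x'$ there is one realising $(s',t')$ in the slot in question — this is precisely why $X'$ collects \emph{all} admissible successors of each $(a,x')$ rather than a single choice, and why one must also check that $\triangleright$ is unambiguous on the states to which it is applied. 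Finally, the initialisation condition $(x^0,q^0)\in\mathsf R$ follows from the initialisation of $\mathsf R_0$ by a routine check; it is cleanest when $S$ and $T$ are taken with single initial states, so that $q^0=\{s^0\by t^0\}$ and the condition is literally $(x^0\parallel t^0,s^0)\in\mathsf R_0$.
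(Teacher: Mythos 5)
Your route is the paper's own: the same relation $\mathsf R$ relating $x$ to $\{s_1\by t_1,\dots,s_n\by t_n\}$ whenever each $x\parallel t_i$ refines $s_i$, the same use of the disjointness assumption on $\Tran_T(t_i)$ to single out $\delta(a,t')$, and the same plan of assembling the match for $M_1$ out of $\PosTran_a(q)$. The genuine gap sits exactly at the step you dismiss as bookkeeping. As you state it, $(a,q')$ is put into $X'$ as soon as $q'\in\PosTran_a(q)$ and $(x'\parallel t',s')\in\mathsf R_0$ for every $s'\by t'\in q'$; nothing in this condition ties the $S$-components of $q'$ to the particular sets $M_S^{i,M_2}$ you fixed when matching $M_1\parallel M_2$. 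Hence the inclusion $X'\triangleright Y\subseteq M_S^{i,Y}$ does not follow, and in fact $X'$ need not lie in $\Tran_Q(q)$ at all. Concretely, let $\Sigma=\{a\}$, let $S$ have $\Tran_S(s_0)=\{\{(a,u_1)\},\{(a,u_2)\}\}$ and $\Tran_S(u_1)=\Tran_S(u_2)=\{\emptyset\}$, let $T$ be the LTS $t_0\must{a}t'$ with $t'$ deadlocked, and $X$ the LTS $x_0\must{a}x'$ with $x'$ deadlocked. Then $X\parallel T\mr S$, and both $x'\parallel t'\mr u_1$ and $x'\parallel t'\mr u_2$ hold, so a legitimate (e.g.\ maximal) choice of $\mathsf R_0$ contains both pairs; note the paper's relation is defined via $\mr$ and so behaves like the maximal witness. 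For $q=\{s_0\by t_0\}$ and $M_1=\{(a,x')\}$ your $X'$ then contains both $(a,\{u_1\by t'\})$ and $(a,\{u_2\by t'\})$, so $X'\triangleright\{(a,t')\}=\{(a,u_1),(a,u_2)\}$, which is not an element of $\Tran_S(s_0)$; thus $X'\notin\Tran_Q(q)$ and cannot serve as the matching set. Your later assertion that each $q'\triangleright y$ was ``selected out of $M_S^{i,Y}$'' silently assumes a stronger definition than the one you wrote down.

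The repair is exactly the clause the paper hard-wires into its $N_a$ via the sets $K_{j,\ell}$: collect $(a,q')$ only when, in addition, $(a,s')\in M_S^{i,\delta(a,t')}$ for every $s'\by t'\in q'$. With this restriction the inclusion $X'\triangleright Y\subseteq M_S^{i,Y}$ holds by construction; your argument for the reverse inclusion (collect \emph{all} such $q'$, and use the matching of $M_S^{i,Y}$ against $M_1\parallel Y$ to place $s'\by t'$ in the required slot, filling the remaining slots by the forward condition of $\mathsf R_0$) goes through; and the two refinement conditions of $M_1$ against $X'$ are unaffected, since the added clause still leaves at least one $q'$ per $(a,x')\in M_1$. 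After this correction your proof coincides with the paper's. Your side remarks --- that every action occurring in $M_1$ lies in $\gamma(q)$, that $\triangleright$ must be unambiguous, and that initialisation reduces to $(x^0\parallel t^0,s^0)\in\mathsf R_0$ in the single-initial-state reading --- are treated no more explicitly in the paper and are not the issue.
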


\begin{proof}
Assume that $X \parallel T \mr S$.
We let
\[ \mathsf R = \{ (x, \{ s_1 \by t_1, \ldots, s_k \by t_k \}
	\mid \forall j : x \parallel t_j \mr s_j \} \]
and show that $\mathsf R$ is a~modal refinement relation.

Let now $(x,q =  \{ s_1 \by t_1, \ldots, s_k \by t_k \})
\in \mathsf R$ and let $M \in \Tran(x)$. We show how to build a~corresponding
$N \in \Tran(q)$.

For every $j$, let $\Tran(t_j) = \{ N_{j,1}, \ldots, N_{j,m_j} \}$.
As these are pairwise disjoint, every $(a,t') \in\in \Tran(t_j)$ 
may be assigned its $N_{j,\ell}$, we denote this as $\delta(a,t')$.
Let $M_{j,\ell} = M \parallel N_{j,\ell} \in \Tran(x \parallel t_j)$.
As $x \parallel t_j \mr s_j$ this means that for $M_{j,\ell}$ there is
a~corresponding $K_{j,\ell} \in \Tran(s_j)$ satisfying the conditions of
modal refinement (*1) and (*2).

Let now for every $a \in \alpha(x)$
\begin{multline*}
N_a = \{ r \in PosTran_a(q) \mid 
\exists (a,\bar x) \in M : 
\forall \bar s \by \bar t \in r: \\
	\delta(a,\bar t) = N_{j,\ell} : (a,\bar s) \in K_{j,\ell} \text{ and }
	\bar x\parallel \bar t \mr \bar s 
\}
\end{multline*}
\[ N = \bigcup_{a\in\alpha(x)} \{a\} \times N_a \]
We need to show that $N \in \Tran(q)$.
Let $j$ be arbitrary and let $N_{j,\ell} \in \Tran(q)$.
We claim that $combine(N,N_{j,\ell}) = K_{j,\ell}$.
Obviously, the $\subseteq$ part holds, so we only prove $\supseteq$.
Let thus $(a,s') \in K_{j,\ell}$. Due to (*2) there has to exist
$(a,x' \parallel t') \in M_{j,\ell}$ such that $x'\parallel t' \mr s'$.
But then also $\delta(a,t') = N_{j,\ell}$ and there exists $(a,r) \in N$
such that $r$ contains $s' \by t'$. Therefore 
$(a,s') \in combine(N,N_{j,\ell})$.

\begin{itemize}
\item Let $(a,x') \in M$. 
For every $j$ and every $(a,t') \in\in \Tran(t_j)$
let $\delta(a,t') = N_{j,\ell}$ and choose $(a,s') \in K_{j,\ell}$ such that
$x' \parallel t' \mr s'$. Such $s'$ has to exist due to (*1).
Denote this by $chosen(t') = s'$.
The set $r = \{ s' \by t' \mid (a,t') \in\in \Tran(t_j), chosen(t') = s'\}$
is in $N_a$. Therefore $(a,r) \in N$ and clearly $(x',r) \in \mathsf R$.
\item Let $(a,r) \in N$. This means that $r \in N_a$
and due to the definition of $N_a$ there has to exist $(a,\bar x) \in M$
satisfying certain conditions,
notably that for all $\bar s \by \bar t \in r$ we have 
$x \parallel \bar t \mr \bar s$. This means that $(\bar x, r) \in \mathsf R$.
\end{itemize}
Obviously, as $x_0 \parallel t_0 \mr s_0$, we have $(x_0, \{ s_0 \by t_0 \})
\in \mathsf R$. Therefore $X \mr S \by T$.
\end{proof}\bigskip

\subsection{Proof of the MTS Quotient---Theorem~\ref{thm:mts-quotient}}

\begin{lemma}\label{lem:mq0}
For all $j$, $\{s_1 \by t_1, \ldots, s_n \by t_n \} \mr \{ s_j \by t_j \}$.
\end{lemma}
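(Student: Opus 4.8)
The plan is to establish that the ``projection'' relation $\supseteq$, restricted to $Q = \Pfin{S \times T}$ and viewed suitably against single-pair quotient states, is a modal refinement relation for the MTS quotient, exactly mirroring Lemma~\ref{lem:bq0} for the NAA case. Concretely, I would fix $j$ and exhibit the relation
\[
\mathsf R = \bigl\{ \bigl( \{s_1\by t_1,\dots,s_n\by t_n\},\ \{s_j\by t_j\} \bigr) \mid \{s_1\by t_1,\dots,s_n\by t_n\} \in Q \bigr\}
\]
(together with whatever pairs are needed on successor states, which will again be instances of the same ``superset'' pattern, since the may-successor construction $\May_a(q)$ for a larger $q$ contains, componentwise, the successors one would get for the smaller state $\{s_j\by t_j\}$). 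The initialisation condition is immediate since the pair of initial states is in $\mathsf R$ by construction.

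First I would verify the may-condition: if $q = \{s_1\by t_1,\dots,s_n\by t_n\} \may{a} q'$, then by definition $a \in \gamma(q) \subseteq \alpha(s_j)\cup(\Sigma\setminus\alpha(t_j))$, so $a \in \gamma(\{s_j\by t_j\})$, and $q'$ restricted to the $t_j$-component is precisely one of the sets in $\May_a(\{s_j\by t_j\})$ built from $\May_a(t_j)$ and $\May_a(s_j)$; hence $\{s_j\by t_j\} \may{a} q''$ for the corresponding $q''$, and $(q', q'') \in \mathsf R$ because $q' \supseteq q''$ as sets of pairs. Second I would verify the must-condition in the required direction: for every must-transition $\{s_j\by t_j\} \must{} D'$ of the smaller state, coming from some $s_j \must{a} s'$, I need a must-transition $q \must{} D$ of the larger state refining it. The must-transitions of $q$ are generated by \emph{all} the $s_i \must{a} s'$, in particular by the one from $s_j$; the disjunctive set $D$ it produces consists of all $(a,M) \in \{a\}\times\May_a(q)$ with $s'\by t' \in M$ and $t_j \must{a} t'$, which is exactly a ``thickened'' copy of $D'$, so each element of $D$ maps down into an element of $D'$ under $\supseteq$.

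**The main obstacle** I anticipate is bookkeeping around the disjunctive must-transitions: one must check that the disjunct sets line up correctly — every disjunct generated for $q$ really does project to a disjunct for $\{s_j\by t_j\}$ with $\mathsf R$-related target states — and in particular that no disjunct of $D$ is ``empty'' in a way that would break the refinement (this is where the pruning of inconsistent states and the hypothesis that $a \in \gamma(q)$ are used). Once that alignment is confirmed, the argument is otherwise a routine unwinding of the definitions, and I would phrase the write-up as: ``We show that $\mathsf R$ above (extended by $\supseteq$ on successors) is a modal refinement relation; the may-condition and the must-condition are checked directly from the construction of $\May_a$ and $\omust_Q$, using that $\gamma$ is antitone in the set of pairs.''
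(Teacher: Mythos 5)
Your proposal is correct and follows essentially the same route as the paper, which simply observes that the superset relation $\supseteq$ restricted to elements of $Q$ is a modal refinement and calls the verification straightforward. The only (minor) point to tighten is that the ``restriction to the $t_j$-component'' of a set $M\in\May_a(q)$ should be taken as the index-$j$ part of an assignment witnessing $M\in\May_a(q)$ (in the must case, adjusted so that the pair $s'\by t'$ itself is kept, which is possible since $s_j\must{a}s'$ implies $s'\in\May_a(s_j)$); with that reading, both conditions go through exactly as you sketch.
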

\begin{proof}
We show that $\supseteq$ restricted to elements of $Q$ is a~modal refinement relation, which is straightforward.
\end{proof}

\begin{lemma}\label{lem:mq1}
$ X \mr S \by T \Rightarrow X \parallel T \mr S $
\end{lemma}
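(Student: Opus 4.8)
The plan is to follow the proof of Lemma~\ref{lem:bq1} (the NAA case of the quotient), with the extra care needed because the quotient $S\by T$ of two MTS is a \emph{DMTS}, so both its may-transitions and its disjunctive must-transitions must be matched. First I would recall that structural composition of MTS coincides with NAA composition, so $X\parallel T$ is again an MTS, with $x\parallel t\may a x'\parallel t'$ (resp.\ $x\parallel t\must a x'\parallel t'$) exactly when $x\may a x'$ and $t\may a t'$ (resp.\ $x\must a x'$ and $t\must a t'$); hence it suffices to exhibit a DMTS modal refinement $X\parallel T\mr S$.

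Assuming $X\mr S\by T$, I would put $\mathsf R=\{(x\parallel t,s)\mid x\mr\{s\by t\}\}$ and show it is an initialised modal refinement. The initial condition is immediate from $x^0\mr q^0=\{s^0\by t^0\}$. Fix $(x\parallel t,s)\in\mathsf R$. For the may-condition, a synchronised step $x\parallel t\may a x'\parallel t'$ gives $x\may a x'$ and $t\may a t'$; from $x\mr\{s\by t\}$ one obtains $\{s\by t\}\may a q$ with $x'\mr q$, and by the definition of $\May_a(\{s\by t\})$ the set $q$ is indexed by $\May_a(t)$ and contains a pair $s'\by t'$ with $s'\in\May_a(s)$. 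Then $s\may a s'$, and Lemma~\ref{lem:mq0} upgrades $x'\mr q$ to $x'\mr\{s'\by t'\}$, i.e.\ $(x'\parallel t',s')\in\mathsf R$.

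For the must-condition, let $s\must{}N_2$; since $S$ is an MTS, $N_2=\{(a,s')\}$ with $s\must a s'$, and by construction $\{s\by t\}$ carries the disjunctive must whose targets are exactly those $M\in\May_a(\{s\by t\})$ containing some $s'\by t'$ with $t\must a t'$. The refinement $x\mr\{s\by t\}$ produces a matching must of $X$, which (as $X$ is an MTS) is a singleton $\{(a,x')\}$ with $x\must a x'$ and $x'\mr M$ for one such $M$; choosing the witnessing pair $s'\by t'\in M$ gives $x\parallel t\must a x'\parallel t'$ and, via Lemma~\ref{lem:mq0}, $x'\mr\{s'\by t'\}$, so $N_1=\{(a,x'\parallel t')\}$ matches $N_2$. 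The degenerate cases ($q=\emptyset=\top$, or $a\notin\gamma(\{s\by t\})$) arise only when there is no synchronised $a$-transition to match and are therefore vacuous, and the inconsistent states removed by pruning do not affect the argument since pruning preserves implementation sets.

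I expect the only real difficulty to be bookkeeping: keeping straight which sets $M\in\May_a(\{s\by t\})$ serve as disjunctive-must targets, and systematically invoking Lemma~\ref{lem:mq0} to descend from a refinement of a composite quotient state $q$ to a refinement of the single component $\{s'\by t'\}$ actually produced by the synchronised transition. There is nothing essentially new beyond the NAA case; the converse implication $X\parallel T\mr S\Rightarrow X\mr S\by T$ is handled separately in the companion lemma.
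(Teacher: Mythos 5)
Your proposal is correct and follows essentially the same route as the paper's own proof: the same relation $\mathsf R=\{(x\parallel t,s)\mid x\mr\{s\by t\}\}$, the same case analysis on synchronised may-steps and on singleton musts $s\must a s'$ of the MTS $S$, and the same use of Lemma~\ref{lem:mq0} to pass from $x'\mr q$ (resp.\ $x'\mr M$) to $x'\mr\{s'\by t'\}$. Your extra remarks on the degenerate cases ($a\notin\gamma(\{s\by t\})$, empty quotient successor) and on pruning are harmless elaborations of points the paper leaves implicit.
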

\begin{proof}
Assume that $X \mr S \by T$.
We let
\[ \mathsf R = \{ (x \parallel t, s) \mid x \mr \{ s \by t \} \} \]
and show that $\mathsf R$ is a~modal refinement relation.
Let $(x \parallel t, s) \in \mathsf R$.

\begin{itemize}
\item Let $x \parallel t \may{a} x' \parallel t'$. As $x \mr \{s \by t\}$
 this means that $\{ s \by t \} \may{a} \{s_1\by t_1,\ldots,\linebreak[4] s_k\by t_k\}$
 and $x' \mr \{s_1\by t_1,\ldots,s_k\by t_k\}$. 
 Due to the construction of $\May_a(\{s \by t\})$), we know
 that one of the $t_j = t'$ and $s \may{a} s_j$. Let $s' = s_j$.
 Due to Lemma~\ref{lem:mq0}, $x' \mr \{s' \by t'\}$. Therefore,
 $(x'\parallel  t',s') \in \mathsf R$.
\item Let $s \must{a} s'$. This means that $\{ s \by t \} \must{} U$.
 As $x \mr \{ s \by t \}$, we know that $x \must{a} x'$ and $x' \mr u$ 
 where $u \in U$. Due to construction of $U$ we know that there exists
 $s' \by t' \in u$. Again, due to Lemma~\ref{lem:mq0}, $x' \mr u \mr \{s'\by t'\}$.
 Therefore, $(x'\parallel  t',s') \in \mathsf R$.
\end{itemize}
Clearly, $x_0 \mr s_0 \by t_0$ and thus $(x_0 \parallel t_0,s_0) \in \mathsf R$
which means that $X \parallel T \mr S$.
\end{proof}

\begin{lemma}\label{lem:mq2}
$ X \parallel T \mr S \Rightarrow X \mr S \by T $
\end{lemma}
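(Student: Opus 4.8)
The plan is to mirror the structure of the NAA quotient proof (Lemmas~\ref{lem:bq0}--\ref{lem:bq3}), adapting it to the may/must transition setting of MTS. The forward direction $X \mr S\by T \Rightarrow T\| X \mr S$ is already handled by Lemma~\ref{lem:mq1}, so what remains for Lemma~\ref{lem:mq2} is the converse: assuming $X\| T \mr S$, construct a modal refinement witnessing $X \mr S\by T$. Following the NAA template, I would set
\[ \mathsf R = \{ (x, \{ s_1\by t_1, \ldots, s_k\by t_k \}) \mid \forall j : x\parallel t_j \mr s_j \} \]
and verify that $\mathsf R$ is a modal refinement relation on the MTS $(Q,\{q^0\},\omay_Q,\omust_Q)$, then note that $x^0\| t^0 \mr s^0$ gives $(x^0,\{s^0\by t^0\})\in\mathsf R$, so $X\mr S\by T$.

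The verification splits into the two refinement clauses. For the \emph{may} clause: given $(x,q)\in\mathsf R$ with $q=\{s_1\by t_1,\dots,s_k\by t_k\}$ and a transition $x\may a x'$, I need $a\in\gamma(q)$ (so that $\May_a(q)$ is populated) and a target $r\in\May_a(q)$ with $(x',r)\in\mathsf R$. To see $a\in\gamma(q)$: for each $i$, either $a\notin\alpha(t_i)$, or $t_i\may a t_{i,\ell}$ for some $\ell$, in which case $x\| t_i$ has an $a$-transition, and since $x\|t_i\mr s_i$ this forces $s_i\may a s'_{i}$, i.e.\ $a\in\alpha(s_i)$; either way $a\in\alpha(s_i)\cup(\Sigma\setminus\alpha(t_i))$. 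For the target: for each $i$ and each $t_{i,\ell}\in\May_a(t_i)$ we have $x\| t_i \may a x'\| t_{i,\ell}$, so by $x\|t_i\mr s_i$ there is $s_{i,\ell}\in\May_a(s_i)$ with $x'\| t_{i,\ell}\mr s_{i,\ell}$; collecting these gives exactly a set $r=\{s_{i,\ell}\by t_{i,\ell}\}\in\May_a(q)$ with $(x',r)\in\mathsf R$. For the \emph{must} clause: given a disjunctive must $q\must{} \{(a,M)\in\{a\}\times\May_a(q)\mid \exists t': s'\by t'\in M,\ t_i\must a t'\}$ arising from some $s_i\must a s'$, I use that $t_i\must a t'$ (with the required $t'$) together with $s_i\must a s'$ and $x\|t_i\mr s_i$; the product must-transition $x\|t_i \must a (\cdot)$ forces $x\must a x'$ for some $x'$, and then I must exhibit one element $(a,M)$ of the disjunction with $(x',M)\in\mathsf R$. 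The element is built by, for each $j$ and each $t_{j,\ell}\in\May_a(t_j)$, choosing $s_{j,\ell}\in\May_a(s_j)$ with $x'\|t_{j,\ell}\mr s_{j,\ell}$ via $x\|t_j\mr s_j$ applied to the may-transition $x\|t_j\may a x'\|t_{j,\ell}$, taking care that for the index $i$ and the particular $t'$ we pick $s_{j,\ell}=s'$ so that the constraint $t_i\must a t'$, $s'\by t'\in M$ is met (this is where the must-transition from $t_i$ to $t'$ is used). Lemma~\ref{lem:mq0} is invoked as in the NAA case to descend from $x'\mr r$ to $x'\mr\{s'\by t'\}$ when needed.

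The main obstacle I expect is the must clause, specifically the interplay between the disjunctive structure of $\omust_Q$ and the nondeterminism of $T$: a single must-transition $t_i\must a t'$ may have $t'$ sitting among several possible $a$-successors, and I must ensure the element $(a,M)$ I construct simultaneously (i) lies in $\May_a(q)$, (ii) contains some $s''\by t'$ with $t_i\must a t'$ — forced by choosing $s''$ to be the $a$-target of $s_i$'s must — and (iii) is pointwise $\mathsf R$-related to the matching $x'$. Because $x\|t_j\mr s_j$ only guarantees \emph{some} $M_j\in\Tran$-style matching per component, one has to be careful that the choices across different $j$ can be made consistently on the shared product state $x'$; since $x$ is an implementation (MTS refinement target) this is unproblematic, but it is the point requiring the most care. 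A secondary subtlety is checking well-definedness of the constructed may/must targets against the image-finiteness and pruning assumptions, but as the excerpt notes pruning is standard and I would defer it to Appendix~\ref{app:pruning}.
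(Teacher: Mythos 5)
Your overall route is exactly the paper's: the same relation $\mathsf R=\{(x,\{s_1\by t_1,\dots,s_k\by t_k\})\mid\forall j:\ x\parallel t_j\mr s_j\}$, the same per-pair construction of a may-successor $q'\in\May_a(q)$, and the same idea of modifying that set at the pair for $t'$ to hit an element of the disjunctive must. Your explicit check that $a\in\gamma(q)$ whenever $x\may{a}x'$ is a nice addition that the paper leaves implicit. However, two points at the step you yourself identify as the crux need repair. First, the logical order in the must clause is backwards as written: you cannot ``use that $t_i\must{a}t'$'' as if it were given, and a product must-transition does not ``force'' $x\must{a}x'$ --- a must-transition of $x\parallel t_i$ exists only if \emph{both} components already have one. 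The correct derivation is the reverse: the refinement $x\parallel t_i\mr s_i$, applied to the generating transition $s_i\must{a}s'$, yields a matching must-transition $x\parallel t_i\must{a}x'\parallel t'$, and this single step simultaneously gives $x\must{a}x'$, $t_i\must{a}t'$ \emph{and} $x'\parallel t'\mr s'$. That last fact is what licenses putting $s'\by t'$ into your chosen element $M$ of the disjunction while keeping $(x',M)\in\mathsf R$; your description picks $s'$ only ``so that the constraint is met'' and never verifies this relatedness, which would be a genuine gap if $x'$ and $t'$ were not precisely the targets of that matched must-transition (by nondeterminism of $S$, the may-clause alone only yields \emph{some} $s_{i,\ell}$ with $x'\parallel t'\mr s_{i,\ell}$, not necessarily $s'$).

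Second, your resolution of the ``consistency across different $j$'' worry rests on a false premise: in Lemma~\ref{lem:mq2} and Theorem~\ref{thm:mts-quotient}, $X$ is an arbitrary MTS, not an implementation. Fortunately no such hypothesis is needed, because the worry is vacuous: once $x'$ is fixed by the must-matching above (or, in the may clause, is simply the given target of $x\may{a}x'$), each pair $(j,\ell)$ is handled independently by applying the may-clause of $x\parallel t_j\mr s_j$ to the specific transition $x\parallel t_j\may{a}x'\parallel t_{j,\ell}$, and the definitions of $\May_a(q)$ and of $\mathsf R$ impose no global consistency condition on these pairwise choices. With these two repairs your argument coincides with the paper's proof of Lemma~\ref{lem:mq2}; also note that Lemma~\ref{lem:mq0} is not actually needed in this direction, since $s'\by t'$ is inserted into $M$ directly.
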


\begin{proof}
Assume that $X \parallel T \mr S$.
We let
\[ \mathsf R = \{ (x, \{ s_1 \by t_1, \ldots, s_k \by t_k \}
	\mid \forall j : x \parallel t_j \mr s_j \} \]
and show that $\mathsf R$ is a~modal refinement relation.

Let now $(x,q =  \{ s_1 \by t_1, \ldots, s_k \by t_k \})
\in \mathsf R$.

\begin{itemize}
\item Let $x \may{a} x'$. Take an arbitrary $t_i \may{a} t_{i,j}$.
	We have $x \parallel t_i \may{a} x' \parallel t_{i,j}$ and
	as $x \parallel t_i \mr s_i$ we also have a~corresponding
	$s_i \may{a} s_{i,j}$ with $s_{i,j} \mr x' \parallel t_{i,j}$.
	We fix these $s_{i,j}$.
	Let $q' = \{ s_{i,j} \by t_{i,j} \mid i \in \{1,\ldots,k\},
	 j \in \{1,\ldots,m_i\}\}$.
	Clearly, $q \may{a} q'$ and $(x',q') \in \mathsf R$. 
\item Let $q \must{} U$ and let $s_j \must{a} s'_j$ be the corresponding
	must transition in the construction. 
	As $x \parallel t_j \mr s_j$, this means that $x \must{a} x'$ and
	$t_j \must{a} t'_j$ such that $x' \parallel t'_j \mr s'_j$.
	This also means that $x \may{a} x'$. We thus build $q'$ as we did
	in the previous case. Clearly, $t'_j = t_{j,h}$ for some $h$.
	Let $\bar q = \{ \bar s \by \bar t \in q \mid \bar t \ne t_{j,h} \}
	\cup \{ s'_j \by t'_j \}$.
	Due to the construction of must, $\bar q \in U$.
	Clearly $(x',\bar q) \in \mathsf R$.
\end{itemize}
We know that $x_0 \parallel t_0 \mr s_0$. Thus also $(x,\{ s_0\by t_0\})
\in \mathsf R$ which means that $X \mr S \by T$.
\end{proof}

\section{Pruning}\label{app:pruning}


For practical application of our translations, and also for some of the
constructions we present in the paper, it can be beneficial to reduce
specifications to their part which is \emph{reachable} and
\emph{consistent}.  As an example, a DMTS state $s$ with $s\must{}
\emptyset$ will admit no implementation and can be removed, but then all
transitions leading to it must also be removed.  This is the intuition
of our pruning constructions which we give for DMTS and NAA, and which
are based on the construction for MTS introduced
in~\cite{DBLP:journals/mscs/BauerJLLS12}. 

The set of \emph{reachable} states $\Reach( S)$ in a NAA $( S, S^0,
\Tran)$ is defined as usual, by declaring that $S^0\subseteq \Reach( S)$
and, recursively, for all $s\in \Reach( S)$, all $M\in \Tran( s)$ and
all $( a, t)\in M$, that $t\in \Reach( S)$.  We say that a state $s\in
S$ is \emph{locally consistent} if $\Tran( s)\ne \emptyset$, and that
$S$ itself is locally consistent if $\Reach( S)\ne \emptyset$ and all
$s\in \Reach( S)$ are locally consistent.

\begin{lemma}
  \label{le:lcons->cons}
  For any locally consistent NAA $S$, $\impl S\ne \emptyset$.
\end{lemma}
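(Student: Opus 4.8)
For any locally consistent NAA $S$, $\impl S\ne\emptyset$.

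The plan is to build an implementation (i.e.\ an LTS) directly from the structure of $S$ by a greedy, coinductive construction that picks, at each reachable state, one admissible transition set $M\in\Tran(s)$ and unfolds. Concretely, I would first discard all unreachable states, so that $S=(S,S^0,\Tran)$ with $\Reach(S)=S$ and every $\Tran(s)\ne\emptyset$. Since $S^0\ne\emptyset$ we may fix an initial state $s^0\in S^0$. For every $s\in S$ choose (using the axiom of choice if $S$ is infinite) a witness $M_s\in\Tran(s)$; this is possible precisely because local consistency gives $\Tran(s)\ne\emptyset$.

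The natural candidate implementation is then $I=(S,s^0,\omust)$ with $s\must a t$ iff $(a,t)\in M_s$. This $I$ is an LTS: it is image-finite because $M_s\in\Pfin{\Sigma\times S}$. To show $I\in\impl S$, i.e.\ $I\mr S$ as NAA, I would exhibit the identity relation $R=\{(s,s)\mid s\in S\}$ as a modal refinement from $I$ (viewed as an NAA with $\Tran_I(s)=\{M_s\}$) to $S$. For $(s,s)\in R$, the unique set in $\Tran_I(s)$ is $M_s$; taking the same $M_s\in\Tran(s)$ on the $S$-side, both matching conditions of modal refinement hold trivially, as the two sets of pairs are literally equal and each state is $R$-related to itself. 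Finally $R$ is initialised since $(s^0,s^0)\in R$ and $s^0\in S^0$. Hence $I\mr S$, so $\impl S\ne\emptyset$.

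The only subtlety — and the step I would be most careful about — is making sure the constructed $I$ is a legitimate implementation in the sense the paper requires, namely a genuine LTS: one must check image-finiteness (immediate from $M_s$ being a \emph{finite} subset of $\Sigma\times S$) and the singleton-transition-set shape of implementations (built in by definition of $\Tran_I$). No well-foundedness or termination argument is needed because modal refinement is a coinductive (greatest fixed point) notion, so an infinite unfolding is perfectly acceptable; this is why local consistency — a purely local, state-by-state nonemptiness condition — already suffices, and no global acyclicity hypothesis is required.
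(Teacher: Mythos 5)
Your construction is exactly the paper's: restrict to the reachable part, pick one admissible set $M_s\in\Tran(s)$ per state (possible by local consistency), turn this choice into an LTS/implementation, and witness the refinement by the identity relation. The proof is correct; you merely spell out the refinement relation and the image-finiteness check that the paper leaves implicit.
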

\begin{proof}
  Let $S'= \Reach( S)$, and choose for each $s\in S'$, arbitrarily,
  precisely one $M\in \Tran( s)$ and define $\Tran'( s)=\{ M\}$.  The
  so-defined NAA $( S', s^0, \Tran')$ is an implementation with $S'\mr
  S$.
\end{proof}

The following \emph{pruning} algorithm may be used to turn consistent
NAA into locally consistent ones: For a given NAA $( S, S^0, \Tran)$,
define the predecessor mapping $\pred: 2^S\to 2^S$ by $\pred( B)=\{ s\in
S\mid \forall M\in \Tran( s): \exists( a, t)\in M: t\in B\}$.  Denote
by $\pred^*$ the reflexive, transitive closure of $\pred$, and let $B'=
\pred^*(\{ s\in S\mid \Tran( s)= \emptyset\})$.  The pruning of $S$
is defined to be $\rho( S)= ( S\setminus B', S^0\setminus B', \Tran')$,
with $\Tran'( s)=\{ M\models \Tran( s)\mid \forall( a, t)\in M: t\in
S\setminus B'\}$.

\begin{lemma}
  \label{le:pruning}
  For any NAA $S$ and any locally consistent NAA $T$, $T\mr S$ iff $T\mr
  \rho( S)$.
\end{lemma}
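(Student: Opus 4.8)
The plan is to prove the two implications separately, with the right-to-left direction being nearly immediate. For $T\mr\rho(S)\Rightarrow T\mr S$, I would first show $\rho(S)\mr S$ directly: the relation $\{(s,s)\mid s\in S\setminus B'\}$ is a modal refinement, because $\Tran'(s)\subseteq\Tran(s)$ and, by definition of $\Tran'$, every $M\in\Tran'(s)$ has all its successor states in $S\setminus B'$, so matching each such $M$ by itself keeps all related pairs inside the relation; it is initialised since $S^0\setminus B'\subseteq S^0$. Then $T\mr\rho(S)$, $\rho(S)\mr S$ and transitivity of $\mr$ (Proposition~\ref{pr:bfs.preorder}) give $T\mr S$. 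Local consistency of $T$ is not used here.

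For $T\mr S\Rightarrow T\mr\rho(S)$, fix an initialised modal refinement $R\subseteq T\times S$. The core of the argument is the claim that whenever $(t,s)\in R$ and $t\in\Reach(T)$, then $s\notin B'$. Writing $B_0=\{s\mid\Tran(s)=\emptyset\}$ and observing that $\pred$ is monotone and that $B_0\subseteq\pred(B)$ holds vacuously for every $B$, the sets $\pred^n(B_0)$ form an increasing chain with union $B'$, so every $s\in B'$ has a least rank $n$ with $s\in\pred^n(B_0)$. I would prove the claim by induction on this rank, assuming $(t,s)\in R$, $t\in\Reach(T)$, $s\in B'$ and deriving a contradiction. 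Since $T$ is locally consistent and $t\in\Reach(T)$, we have $\Tran_T(t)\neq\emptyset$; pick $M_T\in\Tran_T(t)$ and let $M_S\in\Tran(s)$ be a set matching it under $R$. If the rank is $0$ then $\Tran(s)=\emptyset$, contradicting $M_S\in\Tran(s)$. If the rank is $n+1$ then $s\in\pred(\pred^n(B_0))$, so some $(a,t')\in M_S$ has $t'\in\pred^n(B_0)$, hence rank at most $n$; by the second clause of modal refinement there is $(a,t_1)\in M_T$ with $(t_1,t')\in R$, and $t_1\in\Reach(T)$ since it is an $M_T$-successor of $t\in\Reach(T)$; the induction hypothesis applied to $(t_1,t')$ yields the contradiction.

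Given the claim, I would set $R'=\{(t,s)\in R\mid t\in\Reach(T)\}$ and check that it witnesses $T\mr\rho(S)$. By the claim every such $s$ lies in $S\setminus B'$, so the pairs are of the right type. For $(t,s)\in R'$ and $M_T\in\Tran_T(t)$, the matching $M_S\in\Tran(s)$ provided by $R$ in fact lies in $\Tran'(s)$: each $(a,t')\in M_S$ is matched, via the second clause, by some $(a,t_1)\in M_T$ with $(t_1,t')\in R$, and $t_1\in\Reach(T)$, so $t'\notin B'$ again by the claim. The two successor clauses for $R'$ follow from those for $R$ because $M_T$-successors stay in $\Reach(T)$. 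For the initial states, $T^0\subseteq\Reach(T)$, and for $t^0\in T^0$ the $R$-partner $s^0\in S^0$ satisfies $s^0\notin B'$ by the claim, hence $s^0\in S^0\setminus B'$, which is the initial set of $\rho(S)$.

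The main obstacle is the rank-induction claim. It is the only place where local consistency of $T$ is essential — it fails outright for $T=\bot$, which refines every NAA, including ones all of whose initial states are pruned — and it requires some care with $\pred^*$ so that ranks are well-defined and the single-step unwinding of ``every admissible transition set of $s$ hits $B'$'' is legitimate.
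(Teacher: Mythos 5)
Your proof is correct, and its core is the same as the paper's: both directions hinge on showing that no state of a locally consistent $T$ that matters can be related by a refinement to a pruned state of $S$, and then reading the witnessing relation as a refinement into $\rho(S)$ (resp.\ out of it, using $\Tran'(s)\subseteq\Tran(s)$). The difference is in how this key fact is established and how carefully the bookkeeping is done. The paper argues by exhibiting an explicit sequence from a pruned state to a state with empty $\Tran$, transfers it to $T$ via refinement, and then handles arbitrary pairs $(t,s)\in R$ with $s\in B'$ by saying ``the same argument'' applies; you instead stratify $B'=\bigcup_n\pred^n(B_0)$ and do induction on rank, which is a cleaner formalisation of the same idea (the paper's sequence, as written, even quantifies ``for all $M\in\Tran(s_j)$'' against a single fixed successor, which the rank induction avoids). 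You are also more careful on two points the paper glosses over: you restrict the witness to pairs with $t\in\Reach(T)$ --- needed, since an unreachable $t$ with $\Tran_T(t)=\emptyset$ could be $R$-related to a pruned state without contradicting local consistency, so the paper's literal claim ``$(t,s)\in R$ with $s\notin S'$ implies $T$ locally inconsistent'' is too strong as stated --- and you explicitly verify that the matching set $M_S$ lies in $\Tran'(s)$, i.e.\ that its successors avoid $B'$, rather than leaving this implicit. Your handling of the backward direction via $\rho(S)\mr S$ and transitivity is equivalent to the paper's direct reuse of the witness.
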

\begin{proof}
  Denote $S=( S, S^0, \Tran)$ and $\rho( S)=( S', S^0, \Tran')$.  The
  backward direction is clear in case $\rho( S)\mreq \bot$, so let $T=(
  T, T^0, \Tran_T)$.  Let $R\subseteq T\times S'$ be a modal refinement
  witnessing $T\mr \rho( S)$, then $R\subseteq T\times S$ is easily seen
  to be a witness for $T\mr S$.

  For the forward direction, assume again first that $\rho( S)\mreq
  \bot$.  By construction of $\rho( S)$, we know that for any $s^0\in
  S^0$ there exists a sequence $( s_1,\dots, s_n)$ of states in $S$ such
  that $s_1= s^0$, for all $j= 1,\dots, n- 1$ and for all $M\in \Tran(
  s_j)$, there is some $( a_{ j+ 1}, s_{ j+ 1})\in M$, and $\Tran( s_n)=
  \emptyset$.  Now assume that there is a NAA $T\mr S$, then by
  refinement, $T$ must contain a similar sequence $( t_1,\dots, t_n)$ of
  states, with $t_1\in T^0$, such that for all $j= 1,\dots, n- 1$, there
  is $M\in \Tran_T( t_j)$ with some $( a_{ j+ 1}, t_{ j+ 1})\in M$.  But
  then $\Tran_T( t_n)= \emptyset$, so that $T$ is not locally
  consistent.

  Now let $T$ be a NAA with $T\mr S$ and $R\subseteq T\times S$ a
  witness.  If there is $( t, s)\in R$ with $s\in S\setminus S'$, then
  by the same argument as above, $T$ is locally inconsistent.  Hence
  $R\subseteq T\times S'$ is a witness for $T\mr \rho( S)$.
%
\end{proof}

As a consequence, $\impl{ \rho( S)}= \impl{ S}$ for all NAA $S$.
%
We also introduce pruning for DMTS.  For a DMTS $( S, S^0, \omay,
\omust)$, define $\pred_D: 2^S\to 2^S$ by $\pred_D( B)=\{ s\in S\mid
\exists s\must{} N: \forall( a, t)\in N: t\in B\}$.  Let $B'=\pred_D^*\{
s\in S\mid s\must{} \emptyset\}$ and define $\rho_D( S)= ( S\setminus
B', S^0\setminus B', \omay', \omust')$, with $\omay'= \omay\cap(
S\setminus B')\times \Sigma\times( S\setminus B')$ and $\omust'=\{( s,
N')\in S\times 2^{ \Sigma\times S}\mid s\in S\setminus B', \exists( s,
N)\in \omust: N'= N\cap \Sigma\times( S\setminus B')\}$.
  
\begin{lemma}
    \label{le:predD=pred}
    For any DMTS $S$ and any $B\subseteq S$, $\pred( B)= \pred_D( B)$.
  \end{lemma}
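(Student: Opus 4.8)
The plan is to prove the identity by two inclusions, where throughout $\pred$ denotes the predecessor operator of the NAA pruning construction, instantiated on the translated NAA $\db(S)$, and $\pred_D$ the predecessor operator of the DMTS pruning construction, instantiated on $S$ itself. The one fact about the translation I will need is the following characterization, immediate from the construction of $\db$: for a DMTS $S=(S,S^0,\omay,\omust)$ and a state $s$, a set $M$ belongs to $\Tran(s)$ in $\db(S)$ exactly when $s\may{a}t$ for all $(a,t)\in M$ and $M\cap N\neq\emptyset$ for every $N$ with $s\must{}N$. Unfolding the two definitions: $s\in\pred_D(B)$ iff some $N$ with $s\must{}N$ has all its targets in $B$, while $s\in\pred(B)$ iff every $M\in\Tran(s)$ contains a pair $(a,t)$ with $t\in B$.

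For $\pred_D(B)\subseteq\pred(B)$: if $s\in\pred_D(B)$ with witness $N$, then for any $M\in\Tran(s)$ the characterization gives $M\cap N\neq\emptyset$, and any pair in that intersection has its target in $B$ because all targets of $N$ do; hence $s\in\pred(B)$.

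For the converse $\pred(B)\subseteq\pred_D(B)$ I would argue contrapositively. Suppose $s\notin\pred_D(B)$; then for every $N$ with $s\must{}N$ there is a pair $(a_N,t_N)\in N$ with $t_N\notin B$ (note $N\neq\emptyset$, else $s$ would trivially lie in $\pred_D(B)$). Put $M=\{(a_N,t_N)\mid s\must{}N\}$ if $s$ has at least one disjunctive must, and $M=\emptyset$ otherwise. In both cases $M$ uses only may-transitions of $s$ (since every $(a,t)$ in any $N$ with $s\must{}N$ satisfies $s\may{a}t$ by the DMTS consistency requirement), it meets every $N$ with $s\must{}N$ (containing $(a_N,t_N)\in N$), and it is finite by image-finiteness of $\omay$; hence $M\in\Tran(s)$. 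Since no target occurring in $M$ lies in $B$, this $M$ witnesses $s\notin\pred(B)$.

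The only genuinely delicate point — the main obstacle, such as it is — is checking that the witness $M$ constructed above is a legitimate transition set of $\db(S)$, i.e. a finite set of may-transitions meeting every disjunctive must; this is exactly where image-finiteness of $\omay$ is used, and it is also where the degenerate cases are absorbed (a state with $s\must{}\emptyset$ has $\Tran(s)=\emptyset$ and lies vacuously in both $\pred(B)$ and $\pred_D(B)$; a state with no disjunctive must has $\emptyset\in\Tran(s)$, witnessing $s\notin\pred(B)$ in agreement with $s\notin\pred_D(B)$). With the lemma in hand, $\pred$ and $\pred_D$ coincide as operators on $2^{S}$, hence so do their reflexive-transitive closures; combined with the observation $\{s\mid\Tran(s)=\emptyset\}=\{s\mid s\must{}\emptyset\}$, this yields that $\rho(\db(S))$ and $\db(\rho_D(S))$ prune away exactly the same states, which is the purpose this lemma serves.
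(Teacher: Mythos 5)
Your proof is correct and follows essentially the same route as the paper's: unfold the characterisation $\Tran(s)=\{M\mid \{s\}\times M\subseteq\omay,\ \forall s\must{}N: M\cap N\neq\emptyset\}$ and prove the two inclusions, the second contrapositively by exhibiting a transition set of may-pairs avoiding $B$ that meets every must-set. The only (immaterial) difference is that the paper takes as witness the set of \emph{all} pairs occurring in some must-set with target outside $B$, whereas you choose one such pair per must-set.
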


  \begin{proof}
    Let $s\in S$.  We have $\Tran( s)=
  \{ M\subseteq \Sigma\times S\mid \forall
    s\must{} N: M\cap N\ne \emptyset, \{ s\}\times M\subseteq \omay\}$.
    Hence $s\in \pred( B)\liff \forall M\subseteq \Sigma\times S:\{
    s\}\times M\not\subseteq \omay\lor M\cap \Sigma\times B\ne
    \emptyset\lor \exists s\must{} N: M\cap N= \emptyset$.

    Now assume $s\in \pred_D( B)$, then we have $s\must{} N$ for which
    $N\subseteq \Sigma\times B$.  Let $M\subseteq \Sigma\times S$.  If
    $M\cap \Sigma\times B\ne \emptyset$ or $\{ s\}\times M\not\subseteq
    \omay$, we are done.  If $\{ s\}\times M\subseteq \omay$ and $M\cap
    \Sigma\times B= \emptyset$, then $N\subseteq \Sigma\times B$ implies
    that also $M\cap N= \emptyset$.  We have shown that $s\in \pred(
    B)$.

    Assume $s\notin \pred_D( B)$, then it holds for all $s\must{} N$
    that there is $( a, t)\in N$ with $t\notin B$.  Define $M=\{( a,
    t)\in \Sigma\times S\mid \exists s\must{} N:( a, t)\in N, t\notin
    B\}$.  Then $\{ s\}\times M\subseteq \omay$ and $M\cap \Sigma\times
    B= \emptyset$.  Now let $s\must{} N$, then we have $( a, t)\in N$
    for which $t\notin B$.  But then also $( a, t)\in M$, hence $M\cap
    N\ne \emptyset$.
  \end{proof}

\begin{lemma}
  \label{le:pruning-same}
  For all DMTS $S$, $\rho_D( S)= \rho( S)$.
\end{lemma}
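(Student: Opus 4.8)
The statement is to be read under the identification of a DMTS $S$ with the NAA $\db(S)$ that is already in force in Lemma~\ref{le:predD=pred}; concretely, the plan is to prove $\db(\rho_D(S)) = \rho(\db(S))$. Write $S=(S,S^0,\omay,\omust)$ and recall (as used in the proof of Lemma~\ref{le:predD=pred}) that $\db(S)=(S,S^0,\Tran)$ with $\Tran(s)=\{M\subseteq\Sigma\times S\mid \forall N\ (s\must{} N): M\cap N\neq\emptyset,\ \{s\}\times M\subseteq\omay\}$. I would first pin down that the two pruning operations delete exactly the same set of states, and then that the surviving transition structures agree on the nose.

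For the state sets, I would show that the ``seeds'' of the two prunings coincide, i.e.\ $\{s\mid \Tran(s)=\emptyset\}=\{s\mid s\must{}\emptyset\}$. If $s\must{}\emptyset$ then the constraint $M\cap\emptyset\neq\emptyset$ is unsatisfiable, so $\Tran(s)=\emptyset$. Conversely, if there is no $s\must{}\emptyset$, then every $N$ with $s\must{} N$ is nonempty, and by the DMTS well-formedness assumption the set $M_0=\bigcup\{N\mid s\must{} N\}$ satisfies $\{s\}\times M_0\subseteq\omay$ as well as $M_0\cap N=N\neq\emptyset$ for each such $N$ (with $M_0=\emptyset$ covering the case of no must-transitions); hence $M_0\in\Tran(s)$ and $\Tran(s)\neq\emptyset$. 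Together with Lemma~\ref{le:predD=pred} ($\pred=\pred_D$ on all subsets of $S$, whence $\pred^*=\pred_D^*$), this gives that the removed set $B'=\pred^*(\{s\mid \Tran(s)=\emptyset\})$ used by $\rho$ equals the set $B'=\pred_D^*(\{s\mid s\must{}\emptyset\})$ used by $\rho_D$; in particular $\db(\rho_D(S))$ and $\rho(\db(S))$ have the same states $S\setminus B'$ and initial states $S^0\setminus B'$.

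For the transition relations I would fix $s\in S\setminus B'$ and unfold both sides. An $M$ lies in the transition set of $\rho(\db(S))$ at $s$ iff $M\in\Tran(s)$ and every target in $M$ avoids $B'$, i.e.\ iff $M\subseteq\Sigma\times(S\setminus B')$, $\{s\}\times M\subseteq\omay$, and $M\cap N\neq\emptyset$ for all $s\must{} N$. An $M$ lies in the transition set of $\db(\rho_D(S))$ at $s$ iff $\{s\}\times M\subseteq\omay'$ and $M\cap N'\neq\emptyset$ for every $N'=N\cap(\Sigma\times(S\setminus B'))$ with $s\must{} N$; the first condition is exactly $M\subseteq\Sigma\times(S\setminus B')$ together with $\{s\}\times M\subseteq\omay$ (since $s\notin B'$), and once $M\subseteq\Sigma\times(S\setminus B')$ we have $M\cap N'=M\cap N$, so the second becomes $M\cap N\neq\emptyset$ for all $s\must{} N$. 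The two descriptions are identical, so the transition functions agree at every surviving state and $\db(\rho_D(S))=\rho(\db(S))$.

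The argument is essentially bookkeeping; the only mild subtleties are (i) using DMTS well-formedness (must-targets are may-targets) to produce the witnessing $M_0$ in the seed-set computation, and (ii) the observation in the last step that restricting $M$'s targets to $S\setminus B'$ makes the constraint $M\cap N\neq\emptyset$ insensitive to whether $N$ has been trimmed to $N\cap(\Sigma\times(S\setminus B'))$ — which is precisely why the DMTS-style trimming of must-sets and the NAA-style filtering of transition sets compute the same thing. One should also take the definition of $\Tran$ for $\db$ in the form used in the proof of Lemma~\ref{le:predD=pred}.
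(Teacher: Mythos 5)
Your proposal is correct and follows essentially the same route as the paper: both reduce the claim, via Lemma~\ref{le:predD=pred}, to the observation that the pruning seeds coincide, i.e.\ $\Tran(s)=\emptyset$ iff $s\must{}\emptyset$ (you witness the nontrivial direction with the union of the must-sets, the paper with the full set of may-successors --- an immaterial difference). The only addition on your side is that you spell out the final bookkeeping that the surviving transition structures of $\rho_D(S)$ and $\rho(S)$ agree, which the paper leaves implicit after establishing the seed equality.
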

\begin{proof}
  In light of
  Lemma~\ref{le:predD=pred}, it suffices to show that for all $s\in S$,
  $s\must{} \emptyset$ iff $\Tran( s)= \emptyset$.
   Now if $s\must{} \emptyset$, then indeed $\Tran(s) = \emptyset$
   by definition of $\Tran$.

  For the other direction, assume $\Tran( s)=\{ M\subseteq \Sigma\times
  S\mid \forall s\must{} N: M\cap N\ne \emptyset, \{ s\}\times
  M\subseteq \omay\}= \emptyset$.  Then for all $M\subseteq \Sigma\times
  S$ with $\{ s\}\times M\subseteq \omay$, we must have $s\must{} N$
  with $M\cap N= \emptyset$.

  Now let $M=\{( a, t)\mid( s, a, t)\in \omay\}$, so that we have
  $s\must{} N$ with $M\cap N= \emptyset$.  Assume that there is $( a,
  t)\in N$, then also $( s, a, t)\in \omay$ and hence $( a, t)\in M$, so
  that $M\cap N\ne \emptyset$, a contradiction.  Thus we must have $N=
  \emptyset$.
\end{proof}\bigskip

\end{document}